\documentclass[lettersize,journal]{IEEEtran}
\usepackage{amsmath,amsfonts}
\usepackage{algorithmic}
\usepackage{algorithm}
\usepackage{array}
\usepackage[caption=false,font=normalsize,labelfont=sf,textfont=sf]{subfig}
\usepackage{textcomp}
\usepackage{stfloats}
\usepackage{url}
\usepackage{verbatim}
\usepackage{graphicx}
\usepackage{cite}
\usepackage{amsfonts,color,morefloats,pslatex,a4wide}
\usepackage{amssymb,amsthm,amsmath,latexsym,pslatex,cite}
\usepackage{lineno,mathtools}
\usepackage{pdflscape}
\usepackage[colorlinks, linkcolor=red, citecolor=green]{hyperref}

\newtheorem{theorem}{Theorem}[section]

\newtheorem{problem}{Problem}
\newtheorem{lemma}[theorem]{Lemma}
\newtheorem{corollary}[theorem]{Corollary}

\theoremstyle{definition} % 使用定义样式，使字体为正体
\newtheorem{remark}{Remark}
\newtheorem{example}{Example}
\newtheorem{definition}{Definition}
\hyphenation{op-tical net-works semi-conduc-tor IEEE-Xplore}
\newcounter{MYtempeqncnt}

% updated with editorial comments 8/9/2021
%%%%%%%%%%%%%%%%%%%%%%%%%%%%%%%%%%

\newcommand{\F}{\mathbb{F}}

\newcommand{\w}{\omega}

%%%%%%%%%%%%%%%%%%%%%%%%%%%%%%%%%%
%\newcommand{\F}{\mathbb{F}}
\newcommand{\C}{{\mathcal{C}}}
\newcommand{\D}{{\mathcal{D}}}
\newcommand{\SSS}{{\mathcal{S}}}

\newcommand{\GRS}{{\mathrm{GRS}}}
\newcommand{\EGRS}{{\mathrm{EGRS}}}

\newcommand{\bc}{{{\bf c}}}
\newcommand{\bx}{{{\bf x}}}
\newcommand{\by}{{{\bf y}}}

\newcommand{\aaa}{{{\mathbf{a}}}}

\newcommand{\wt}{{{\rm{wt}}}}

\newcommand{\rank}{{\rm rank}}

\makeatletter

\newcommand{\Rmnum}[1]{\expandafter\@slowromancap\romannumeral #1@}
\makeatother

\begin{document}

\title{A family of linear codes that are either non-GRS MDS codes or NMDS codes}
\author{Yang Li, Zhonghua Sun, Shixin Zhu 
\thanks{Yang Li, Zhonghua Sun, and Shixin Zhu are with School of Mathematics, Hefei University of Technology, Hefei, 230601, Anhui, China. 
(Emails: yanglimath@163.com, sunzhonghuas@163.com, and zhushixinmath@hfut.edu.cn.)}
\thanks{
This research is supported by 
the National Natural Science Foundation of China under Grant No. U21A20428, 
the Anhui Provincial Natural Science Foundation under Grant No. 2408085MA014,  
the Fundamental Research Funds for the Central Universities under Grant No. JZ2024HGTB0203, and 
the Natural Science Research Key Project of Anhui Educational Committee under Grant No. 2024AH050303.} 
\thanks{Manuscript received -, -; revised -, -.}
}
% The paper headers
%\markboth{Journal of \LaTeX\ Class Files,~Vol.~14, No.~8, August~2021}%
%{Shell \MakeLowercase{\textit{et al.}}: A Sample Article Using IEEEtran.cls for IEEE Journals}

%\IEEEpubid{0000--0000/00\$00.00~\copyright~2021 IEEE}
% Remember, if you use this you must call \IEEEpubidadjcol in the second
% column for its text to clear the IEEEpubid mark.

\maketitle

\begin{abstract}
  Both maximum distance separable (MDS) codes that are not equivalent to 
  generalized Reed-Solomon (GRS) codes (non-GRS MDS codes) and near MDS (NMDS) codes have nice applications 
  in communication and storage systems. 
  In this paper, we introduce and study a new family of linear codes involving their parameters, 
  weight distributions, and self-orthogonal properties. 
  We prove that such codes are either non-GRS MDS codes or NMDS codes, and hence, 
  they can produce as many of the desired codes as possible.   
  We also completely determine their weight distributions with the help of the solutions to some subset sum problems. 
  A sufficient and necessary condition for such codes to be self-orthogonal is characterized. 
  Based on this condition, we further deduce that there are no self-dual codes 
  in this class of linear codes and explicitly construct two new classes of almost self-dual codes. 
\end{abstract}

\begin{IEEEkeywords}
  Non-GRS MDS code, NMDS code, Weight distribution, Self-orthogonal code, Self-dual code \\
\end{IEEEkeywords}

\section{Introduction}\label{sec.introduction}

Let $q=p^m$ and $\F_q$ be the finite field of size $q$, where $p$ is a prime and $m$ is a positive integer. 
Let $\F_q^*=\F_q\setminus \{0\}$ be the multiplicative group of $\F_q$.
A $q$-ary {\em linear code} $\C$ of length $n$ and dimension $k$, denoted by $[n,k]_q$,
is a $k$-dimensional linear subspace of $\F_q^n$. If $\C$ has a minimum distance $d$,
we further denote $\C$ by $[n,k,d]_q$. 
For any two vectors $\mathbf{x}=(x_1,\ldots,x_n)$ and $\mathbf{y}=(y_1,\ldots,y_n)$ in $\F_q^n$, 
define their {\em Euclidean inner product} as 
$
  \langle \mathbf{x}, \mathbf{y} \rangle=\sum_{i=1}^{n}x_iy_i. 
$
Then the {\em Euclidean dual code} of an $[n,k]_q$ linear code $\C$ 
is an $[n,n-k]_q$ linear code given by 
$
  \C^{\perp}=\left\{\mathbf{y}\in \F_q^n: \langle \mathbf{x},\mathbf{y} \rangle=0, \forall~ \mathbf{x}\in \C \right\}.
$
We call $\C$ a {\em self-orthogonal} code if $\C\subseteq \C^{\perp}$; 
a {\em self-dual} code if $\C=\C^{\perp}$; and an {\em almost self-dual} code if $\C$ is an $[n,\frac{n-1}{2}]_q$ self-orthogonal code for odd $n$. 
% Self-orthogonal codes are of particular interest. 

For any $[n,k,d]_q$ linear code $\C$, the parameters $n$, $k$ and $d$ are subject to the well-known {\em Singleton bound}, 
which says: $d\leq n-k+1$ \cite{HP2003}. 
If $d=n-k+1$, we call $\C$ a {\em maximum distance separable (MDS)} code and 
if $d=n-k$, we call $\C$ an {\em almost MDS (AMDS)} code. 
Moreover, $\C$ is called a {\em near MDS (NMDS)} code if both $\C$ and $\C^{\perp}$ are AMDS codes. 
MDS codes and NMDS codes have attracted lots of attention because of their important applications in 
distributed storage systems \cite{CHL2011}, combinatorial designs \cite{DT2020}, finite geometry \cite{DL1994}, 
random error channels \cite{ST2013}, informed source and index coding problems \cite{TR2018}, and secret sharing scheme \cite{SV2018,ZWXLQY2009}.

Although the well-known generalized Reed-Solomon (GRS) codes (a special class of MDS codes \cite{HP2003}) can provide almost 
all desired parameters of MDS codes, {it is still very interesting to construct 
MDS codes that are not equivalent to GRS codes, referred as non-GRS  MDS codes},  
since the non-GRS properties make them resistant to Sidelnikov-Shestakov attacks and Wieschebrink attacks in cryptography systems, whereas GRS codes are not \cite{LR2020,BBPR2018}. 
Much progress has been made on this topic and we can summarize some of them as follows: 
\begin{itemize}
    \item By adding two columns in generator matrices of GRS codes, Roth and Lempel constructed the so-called Roth-Lempel MDS codes and proved that 
    they are non-GRS \cite{RL1989}. Note that Han and Fan have also recently studied the NMDS properties of Roth-Lempel codes \cite{HF2023}. 

    \item By adding twists in GRS codes, Beelen $et~al.$ introduced twisted GRS (TGRS) codes \cite{BPR2022}. 
    Following this work, the authors further studied many different twists and showed that most of the corresponding TGRS codes can generate non-GRS  MDS codes 
    or NMDS codes (see \cite{GLLS2023,BPR2022,ZZT2022,HYNYL2021,SYLH2022,SYS2023,W2021,ZL2024,WHL2021,ll2021,C2023,GZ223} and the references therein). 
    Note also that many self-orthogonal and (almost) self-dual codes have been constructed in these papers. 
    %It should be emphasized that Zhu and Liao \cite{ZL2024} have investigated the class of (+)-extended TGRS codes. 

    \item By adding an overall parity-check in GRS codes, He and Liao obtained the lengthened GRS (LGRS) codes \cite{HL2023}. 
    They obtained some sufficient and necessary conditions for an LGRS code to be MDS or AMDS and presented some sufficient 
    conditions for which an LGRS code is non-GRS.   

    \item Very recently, by deleting the penultimate row of generator matrices of GRS codes, 
    Han and Zhang constructed a class of $q$-ary linear codes and proved that 
    they are either MDS or NMDS without analysing their non-GRS properties \cite{HZ2023}.  
    Furthermore, they obtained many self-dual NMDS codes. 
    For more constructions of MDS and NMDS codes, readers can also refer to the papers \cite{JK2019,HLW2022,SD2023,CHao2023}. 
\end{itemize}

{
On one hand, it is easily seen from the above discussions that most of the known non-GRS  MDS codes 
and NMDS codes are from the class of TGRS codes. 
On the other hand, in consideration of the important applications of MDS codes and NMDS codes, 
it is very interesting to construct a family of linear codes such that it produces non-GRS  MDS codes and NMDS codes simultaneously. 
Furthermore, it is of particular interest in constructing a class of linear codes such that they are 
either non-GRS  MDS codes or NMDS codes, as this will produce as many of the desired codes as possible.
Some representative works related to this topic can be found in \cite{HZ2023,ZL2024,HYNYL2021,HF2023,RL1989}.
Thus, {\bf the first motivation} of this paper is to address the following problem:
\begin{problem}\label{prob1}
Can we construct new classes of linear codes that are either non-GRS  MDS codes or NMDS codes, but are not derived from TGRS codes?
\end{problem}

We now turn back to the class of linear codes constructed in \cite{HZ2023}. 
From the above discussions, we can see that such codes indeed provide an affirmative answer to {\bf Problem \ref{prob1}}.  
On one hand, the self-dual properties and the properties of supporting designs were partly studied in \cite{HZ2023,HW2023,LZM2024}. 
Note also that for GRS codes, one possibly gets more self-dual codes by adding a column vector of 
the form $(0,\ldots,0,1)^T$ at the end of the generator matrices of self-dual GRS codes 
(see \cite{FXF2021,ZF2020,JX2014,JX2016,WLZ2023} and the references therein).
On the other hand, it is invariably meaningful to examine parameters, weight distributions, and self-dual properties 
of certain extended codes of a known and good family of linear codes in coding theory \cite{SD2023,HP2003,SDC2024,DTbook}. 
Therefore, {\bf the second motivation} of this paper is to fix the following problem:
\begin{problem}\label{prob2}
 What are the parameters, weight distributions, and self-dual properties of the class of linear codes 
 derived by appending $(0,\ldots,0,1)^T$ to the end of the generator matrices of the linear codes constructed in \cite{HZ2023}?
\end{problem}
}

Inspired by {\bf Problems \ref{prob1}} and {\bf \ref{prob2}}, we study a special class of linear codes $\C_k(\SSS,{\bf v},\infty)$ 
(see Definition \ref{def.codes} for more details) in this paper.  
%Note that their generator matrices $G_k$ (see Equation (\ref{eq.generator matrix})) are indeed obtained by adding a column in the generator matrices (see \cite[Equation 1.1]{HZ2023}) of linear codes constructed in \cite{HZ2023}. 
Our main contributions can be summarized as follows: 
\begin{itemize}
    \item [\rm 1)] We completely determine the parameters of $\C_k(\SSS,{\bf v},\infty)$ and prove they must be MDS codes or NMDS codes 
    in Theorems \ref{th.length and dimension} and \ref{th.must be MDS or NMDS}. 
    We also characterize sufficient and necessary conditions for $\C_k(\SSS,{\bf v},\infty)$ to be MDS and NMDS codes in Theorem \ref{th.MDS condition}. 
    Based on the Schur method, we further show that such codes are non-GRS in Theorem \ref{th.non-GRS}. 
    Note also that the way to generate $\C_k(\SSS,{\bf v},\infty)$ is different from TGRS codes and 
    this fact has been confirmed by the detailed comparisons presented in Remarks \ref{rem.111} and \ref{rem.weight distribution}. 
    {Therefore, $\C_k(\SSS,{\bf v},\infty)$ actually provides another positive answer to {\bf Problem \ref{prob1}}}.

    \item [\rm 2)] With the help of the solutions to subset sum problems, we completely determine the weight distributions 
    of $\C_k(\SSS,{\bf v},\infty)$ in Theorem \ref{th.NMDS weight distribution}, Corollaries \ref{coro.NMDS Fq*} and \ref{coro.NMDS Fq}. 
    As an application, we immediately obtain several new 
    infinite families of $5$- and $6$-dimensional NMDS codes in 
    Theorems \ref{th.5 dimensional NMDS codes} and \ref{th.6 dimensional NMDS codes}. 
    Detailed comparisons have also been made in Remark \ref{rem.weight.Heng} to show that 
    possible restrictions and weight distributions, and hence such codes, are different from known ones.

    \item [\rm 3)] Employing some identical equations over finite fields, we characterize the self-orthogonal properties  of $\C_k(\SSS,{\bf v},\infty)$ in Theorem \ref{th.so criterion}. 
    Based on this, we deduce that there are no self-dual codes in this class of linear codes in Corollary \ref{coro.no SD codes}, 
    {although many new self-dual codes were constructed in \cite{HZ2023}. 
    This fact indeed yields a clear and important distinction from GRS codes}. 
    Alternatively, we obtain two explicit constructions of almost self-dual codes in Theorems \ref{th.ASD} and \ref{th.ASD2}. 
    {In combination with 1) and 2) above, we have answered {\bf Problem \ref{prob2}}.}     
\end{itemize}

This paper is organized as follows. 
In Section \ref{sec2.preliminaries}, we recall some basic notations and auxiliary results on linear codes. 
In Section \ref{sec3}, we study the class of linear codes $\C_k(\SSS,{\bf v},\infty)$ involving their 
parameters, non-GRS properties, weight distributions, and self-orthogonal properties. 
In Section \ref{sec.concluding remarks}, we conclude this paper.

\section{Preliminaries}\label{sec2.preliminaries}

\subsection{Basic notations}\label{sec.Basic notations}
Throughout this paper, we fix the following notations, unless they are stated otherwise: 
\begin{itemize}
    \item ${\bf v}=(v_1\ldots,v_n)\in (\F_q^*)^n$ and $\SSS=\{a_1,\ldots,a_n\}$ is a subset of $\F_q$ with  
    $
        u_i=\prod_{a_j\in \SSS, 1\leq j\neq i\leq n}(a_i-a_j)^{-1}~{\rm for~each}~1\leq i\leq n.   
    $

     \item For a linear code $\C$, $\wt(\bf c)$  is the {\em Hamming weight} of a codeword ${\bf c}\in \C$ and 
     $A_i$ denotes the {\em number of codewords of weight $i$} in $\C$ for each $1\leq i\leq n$. 
     Moreover, $\dim(\C)$ denotes the dimension of $\C$ and $d(\C)$ denotes the minimum Hamming distance of $\C$.

     \item $\{A_i:~ i=0,\ldots,n\}$ denotes the {\em weight distribution} of an $[n,k]_q$ linear code $\C$ and 
     $\{A^{\perp}_i:~ i=0,\ldots,n\}$ denotes the {\em weight distribution of the dual code} of $\C$.  
     Moreover, $A(z)=1+A_1z+\ldots+A_nz^n$ and $A^{\perp}(z)=1+A^{\perp}_1z+\ldots+A^{\perp}_nz^n$ are 
     {\em weight enumerators} of $\C$ and $\C^{\perp}$, respectively.

     \item $\F_{q}[x]_k=\{f(x)=\sum_{i=0}^{k-1}f_ix^i:~f_i\in \F_q,~0\leq i\leq k-1\}$ and  
     $
        \mathcal{V}_k=\{f(x)=\sum_{i=0}^{k-2}f_ix^i+f_kx^k:~f_i\in \F_q,~0\leq i\leq k~{\rm and}~i\neq k-1\}. 
     $

    \item {$\langle \aaa_1, \ldots, \aaa_t\rangle$ denotes 
    the linear space spanned by all the vectors $\aaa_1,\ldots, \aaa_t$.}
\end{itemize}

\subsection{MDS codes and NMDS codes} 

In this subsection, we recall some basic knowledge on MDS codes and NMDS codes. 
As we mentioned before, an important class of MDS codes are the so-called generalized Reed-Solomon (GRS) codes \cite{HP2003}.  
Let $k$ and $n$ be two positive integers satisfying $1\leq k\leq n\leq q$.  
With above notations, an $[n,k,n-k+1]_{q}$ {\em generalized Reed-Solomon (GRS) code} associated with $\SSS$ and $\mathbf{v}$ is defined by  
$
        \GRS_k(\SSS,\mathbf{v}) =  \{(v_1f(a_1),\ldots,v_nf(a_n)):~ f(x)\in {\mathbb{F}_{q}[x]_{k}}\}.
$
With \cite{HP2003}, the dual code of a GRS code is still a GRS code and a generator matrix of $\GRS_k(\SSS,\mathbf{v})$ has the form of  
\begin{equation}\label{eq.GRS.generator matrix}
G_{\GRS_k}=\left(\begin{array}{cccc}
v_1 &  \ldots & v_n \\
v_1a_1  & \ldots & v_na_n \\
\vdots &  \ddots & \vdots \\
%v_1a^{k-2}_1  & \ldots & v_na^{k-2}_n \\
v_1a^{k-1}_1  & \ldots & v_na^{k-1}_n \\
\end{array}\right).
\end{equation}

By adding a column vector of the form $(0,\ldots,0,1)^T$ to Equation (\ref{eq.GRS.generator matrix}), 
we immediately get an $[n+1,k,n-k+2]_q$ MDS code, namely an extended GRS (EGRS) code $\GRS_{k}(\SSS,\mathbf{v},\infty)$, 
with the following generator matrix
\begin{equation}\label{eq.EGRS.generator matrix}
    G_{\EGRS_k}=\left(\begin{array}{ccccc}
    v_1 &  \ldots & v_n & 0\\
    v_1a_1 &  \ldots & v_na_n & 0\\
    \vdots &  \ddots & \vdots & \vdots\\
    v_1a^{k-2}_1 &  \ldots & v_na^{k-2}_n & 0\\
    v_1a^{k-1}_1 &  \ldots & v_na^{k-1}_n & 1\\
    \end{array}\right).
    \end{equation}

%By adding a column vector $(0,0,\ldots,0,1)^T$ of length $k$ on the right of the last column of the matrix presented in Equation (\ref{eq.GRS.generator matrix}), 

For any given linear code, one can also use the following lemma to determine if it is an MDS code. 

\begin{lemma}{\rm (\!\!\cite[Lemma 7.3]{B2015})}\label{lem.MDS} 
    Let $\C$ be an $[n,k]_q$ linear code with a generator matrix $G$. 
    Then $\C$ is MDS if and only if any $k$ columns of $G$ are linearly independent. 
\end{lemma}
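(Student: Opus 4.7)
The plan is to establish both directions of the equivalence by translating column dependencies of the generator matrix $G$ into statements about codeword supports. The key tool is the bijection $\mathbf{a}\mapsto\mathbf{a}G$ between $\F_q^k$ and $\C$, which is well-defined precisely because $G$ has full row rank $k$; under this map, the zero coordinates of a codeword correspond exactly to those columns of $G$ that are annihilated by $\mathbf{a}$ on the left.

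First I would prove the forward direction by contrapositive. Suppose some set $T$ of $k$ columns of $G$ is linearly dependent; equivalently, the $k\times k$ submatrix $G_T$ has rank strictly less than $k$. Its left kernel is then nontrivial, so I pick a nonzero $\mathbf{a}\in\F_q^k$ with $\mathbf{a}G_T=\mathbf{0}$. Because $G$ has full row rank, $\mathbf{a}G$ is still a nonzero vector, yet its coordinates indexed by $T$ all vanish. This produces a nonzero codeword of weight at most $n-k$, contradicting the MDS condition $d=n-k+1$.

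For the converse, I would assume every $k$ columns of $G$ are linearly independent and take an arbitrary nonzero codeword $\mathbf{c}=\mathbf{a}G$. If $\wt(\mathbf{c})\leq n-k$, then the zero set of $\mathbf{c}$ contains a subset $T$ of size exactly $k$, giving $\mathbf{a}G_T=\mathbf{0}$. But $G_T$ is then a $k\times k$ matrix whose columns are linearly independent, hence invertible, which forces $\mathbf{a}=\mathbf{0}$ and hence $\mathbf{c}=\mathbf{0}$, a contradiction. Therefore every nonzero codeword has weight at least $n-k+1$, and combined with the Singleton bound $d\leq n-k+1$ one concludes $d=n-k+1$, so $\C$ is MDS.

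Since this is a classical fact, no genuine obstacle is expected; the most delicate point is simply being careful about left versus right kernels when $G$ is the non-square $k\times n$ matrix, and to exploit the standard equivalence between invertibility of a square matrix and linear independence of its columns at the two places where the argument pivots.
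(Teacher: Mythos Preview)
Your argument is correct and is the standard proof of this classical fact. Note, however, that the paper does not supply its own proof of this lemma: it is simply quoted from \cite[Lemma 7.3]{B2015} without argument, so there is no ``paper's approach'' to compare against. Your contrapositive for the forward direction and the invertibility argument for the converse are exactly the usual route, and the care you take distinguishing the left kernel of the $k\times k$ submatrix $G_T$ from the full-rank condition on $G$ is appropriate.
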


%For $0\leq i\leq n$, let $A_i$ (resp. $A^{\perp}_i$) be the numbers of codewords  {of} weight $i$ in $\C$ (resp. $\C^{\perp}$). Let $\{A_i\mid i=0,1,\ldots,n\}$ (resp. $\{A^{\perp}_i\mid i=0,1,\ldots,n\}$) denote the weight distribution of $\C$ (resp. $\C^{\perp}$). 
It is {well known} that all weight distributions of MDS codes with the same parameters are the same, 
but this argument does not hold generally for NMDS codes. 
In \cite{DL1994}, Dodunekov and Landjev proved that the weight distributions of an NMDS code and its dual code depend on the numbers of their minimum weight codewords. 

\begin{lemma}{\rm (\!\!\cite[Corollary 4.2]{DL1994})}\label{lem.NMDS weight distribution}
	Let $\C$ be an $[n,k]_q$ NMDS code. If $1\leq s\leq k$, then 
	\begin{align*}
        A_{n-k+s}=&\binom{n}{k-s}\sum_{i=0}^{s-1}(-1)^i\binom{n-k+s}{i}(q^{s-i}-1)\\
                 &+(-1)^s\binom{k}{s}A_{n-k}.
    \end{align*} 
	If $1\leq s\leq n-k$, then 
    \begin{align*}
        A^{\perp}_{k+s}=&\binom{n}{k+s}\sum_{i=0}^{s-1}(-1)^i\binom{k+s}{i}(q^{s-i}-1)\\
                        &+(-1)^s\binom{n-k}{s}A^{\perp}_{k}.
    \end{align*}
Moreover, $A_{n-k}=A_k^{\perp}$, $A_0=A^{\perp}_0=1$ and $A_1=A_2=\ldots=A_{n-k-1}=A^{\perp}_1=A^{\perp}_2=\ldots=A^{\perp}_{k-1}=0$.
\end{lemma}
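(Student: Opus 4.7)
The plan is to recover the full weight distribution from two ingredients: a single combinatorial identity tying $A_{n-k}$ to $A_k^{\perp}$, and a family of counting identities that recursively determine the remaining $A_{n-k+s}$ (and, by symmetry, $A_{k+s}^{\perp}$). The vanishing of the low-weight coefficients $A_1,\ldots,A_{n-k-1}$ and $A_1^{\perp},\ldots,A_{k-1}^{\perp}$ together with $A_0=A_0^{\perp}=1$ is immediate from the definition of NMDS, since both $\C$ and $\C^{\perp}$ are AMDS.

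To establish $A_{n-k}=A_k^{\perp}$, I would fix a generator matrix $G$ of $\C$ and analyze, for each $k$-subset $S\subseteq\{1,\ldots,n\}$, the $k\times k$ submatrix $G_S$. Since $d(\C^{\perp})=k$, any $k-1$ columns of $G$ are linearly independent, forcing $\rank(G_S)\in\{k-1,k\}$. When $\rank(G_S)=k-1$, the $q-1$ nonzero codewords of $\C$ vanishing on $S$ have support in the complement $\{1,\ldots,n\}\setminus S$ of size $n-k$, and by the NMDS lower bound on weight each has weight exactly $n-k$. Simultaneously, each nonzero vector in the one-dimensional right null space of $G_S$ must have all nonzero entries, since otherwise it would yield a codeword of $\C^{\perp}$ of weight below $k$; this produces $q-1$ weight-$k$ codewords of $\C^{\perp}$ supported on $S$. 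Conversely, every minimum-weight codeword on either side arises this way, so
\[
A_{n-k}=(q-1)\bigl|\{S:|S|=k,\ \rank(G_S)=k-1\}\bigr|=A_k^{\perp}.
\]

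For the closed-form formulas, I would use the following counting identity. For $1\le s\le k$ and any $(k-s)$-subset $T$, any $k-s$ columns of $G$ are independent, so exactly $q^s$ codewords of $\C$ vanish on $T$; grouping these codewords by their weight and summing over all $T$ yields
\[
\binom{n}{k-s}(q^s-1)=\sum_{j=0}^{s}A_{n-k+j}\binom{k-j}{k-s},
\]
because a codeword of weight $n-k+j$ has $k-j$ zeros and so vanishes on exactly $\binom{k-j}{k-s}$ subsets of size $k-s$. This is a triangular recurrence in $s$ with $A_{n-k}$ as the single free parameter. Solving it by induction on $s$ using standard binomial inversion produces the stated closed form, in which the term $\binom{n}{k-s}\sum_{i=0}^{s-1}(-1)^i\binom{n-k+s}{i}(q^{s-i}-1)$ is the MDS contribution and $(-1)^s\binom{k}{s}A_{n-k}$ is the NMDS correction absorbing the free parameter. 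The formula for $A_{k+s}^{\perp}$ follows by applying the identical counting argument to $\C^{\perp}$, which is itself NMDS with dimension $n-k$ and minimum distance $k$, together with $A_k^{\perp}=A_{n-k}$. The main obstacle is the clean bijective justification of $A_{n-k}=A_k^{\perp}$; once that is in hand, the remainder is a mechanical triangular solve.
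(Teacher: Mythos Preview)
The paper does not prove this lemma at all: it is quoted verbatim as \cite[Corollary 4.2]{DL1994} and used as a black box, so there is no ``paper's own proof'' to compare against. Your sketch is nonetheless a correct and self-contained argument for the cited result.

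A couple of minor comments. Your proof of $A_{n-k}=A_k^{\perp}$ via the rank dichotomy on $k\times k$ submatrices $G_S$ is exactly the standard one and is complete as written; the key observation you state---that any $k-1$ columns of $G$ are independent because $d(\C^{\perp})=k$---is what pins down $\rank(G_S)\in\{k-1,k\}$ and forces the supports to be full on both sides. For the recursion, your double-counting identity
\[
\binom{n}{k-s}(q^s-1)=\sum_{j=0}^{s}\binom{k-j}{k-s}A_{n-k+j}
\]
is correct (any $k-s\le k-1$ columns of $G$ are independent, so exactly $q^s$ codewords vanish on a fixed $(k-s)$-set), and it does determine $A_{n-k+s}$ uniquely from $A_{n-k}$ by triangularity. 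The phrase ``standard binomial inversion'' is a slight overstatement: the kernel $\binom{k-j}{k-s}$ is not the usual $\binom{s}{j}$, so the clean closed form requires either an inductive verification or the substitution $s\mapsto k-s$ followed by ordinary inversion. That is the only place where a reader might ask for one more line, but it is purely mechanical and the target formula is the one you wrote (its first summand is exactly the MDS weight distribution at weight $n-k+s$, confirming your interpretation of the two pieces).
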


\subsection{The subset sum problem and the Schur product}

Let $\mathcal{S}\subseteq \F_q$ and $b\in \F_q$. 
The  {\em{subset sum problem}} over $\mathcal{S}$ is to determine 
if there is a subset $\emptyset \neq \{x_1,\ldots,x_{\ell}\}\subseteq \mathcal{S}$ such that 
\begin{align}\label{eq.subset sum problem def}
	x_1+\ldots+x_\ell=b.     
\end{align}
Let $N({\ell},b,\mathcal{S})$ be the number of subsets $\{x_1,\ldots,x_{\ell}\}\subseteq \mathcal{S}$ such that Equation (\ref{eq.subset sum problem def}) holds. 
For $b=0$, we {say that} $\mathcal{S}$ contains an {\em{${\ell}$-zero-sum subset}} if $N({\ell},0,\mathcal{S})>0$; 
and {say that} $\mathcal{S}$  is {\em{${\ell}$-zero-sum free}} if $N({\ell},0,\mathcal{S})=0$. 
It is usually interesting but difficult to determine the value of $N({\ell},b,\mathcal{S})$. 
In \cite{LW2008}, Li and Wan determined the exact values of $N({\ell},0,\mathcal{S})$ for $\mathcal{S}=\F_q^*$ and $\F_q$.  

\begin{lemma}{\rm (\!\!\cite[Theorem 1.2]{LW2008})}\label{lem.subset sum}
	{Let the notations be} the same as above. Then the following statements hold. 
    \begin{enumerate}
        \item [\rm 1)] If $\SSS=\F_q^*$, then $N({\ell},0,\F_q^*)=\frac{1}{q}\left[\binom{q-1}{{\ell}}+(-1)^{\ell+\lfloor \frac{{\ell}}{p} \rfloor} (q-1)\binom{\frac{q}{p}-1}{\lfloor \frac{{\ell}}{p} \rfloor} \right]$. 
        \item [\rm 2)] If $\SSS=\F_q$, then $N({\ell},0,\F_q)=\left\{ \begin{array}{ll}
            \frac{1}{q}\binom{q}{{\ell}}, & ~{\rm if}~ p\nmid \ell, \\ 
            \frac{1}{q}\left[\binom{q}{{\ell}}+(-1)^{\ell+\frac{{\ell}}{p}} (q-1)\binom{\frac{q}{p}}{\frac{{\ell}}{p}} \right], & ~{\rm if}~ p\mid \ell. 
        \end{array}\right.$
    \end{enumerate}
\end{lemma}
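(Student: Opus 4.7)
The plan is to use additive-character orthogonality on $\F_q$ together with generating-function bookkeeping, which is the standard route for such subset-sum counts over finite fields. First I would apply the orthogonality relation $\mathbf{1}[x=0] = q^{-1}\sum_{\psi}\psi(x)$, where $\psi$ ranges over all additive characters of $\F_q$, to rewrite
\[
N(\ell,0,\SSS) \;=\; \frac{1}{q}\sum_{\psi}[z^{\ell}]\prod_{x\in \SSS}\bigl(1+z\,\psi(x)\bigr).
\]
This separates cleanly into the contribution of the trivial character $\psi_0$, which delivers $[z^\ell](1+z)^{|\SSS|} = \binom{|\SSS|}{\ell}$ in both cases, and the contributions of the $q-1$ nontrivial characters.

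Next, I would exploit that every nontrivial additive character has the form $\psi_\alpha(x) = \psi_1(\alpha x)$ for some $\alpha\in \F_q^*$; the substitution $x\mapsto \alpha^{-1}x$ then shows that the generating function $\prod_{x\in \SSS}(1+z\,\psi(x))$ is identical for every nontrivial $\psi$ when $\SSS \in\{\F_q, \F_q^*\}$, so the nontrivial-character contribution is just $q-1$ times a single coefficient. The key structural identity is
\[
\prod_{x\in\F_q}\bigl(1+z\,\psi_1(x)\bigr) \;=\; \bigl(1-(-z)^p\bigr)^{q/p},
\]
which I would establish by observing that $\psi_1$ is surjective onto the group of $p$-th roots of unity with kernel of size $q/p$, so each root is attained exactly $q/p$ times, and then invoking the polynomial identity $\prod_{\zeta^p=1}(1+z\zeta) = 1 - (-z)^p$ (a direct consequence of substituting $X=-1/z$ into $X^p - 1 = \prod_\zeta(X-\zeta)$). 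For the $\F_q^*$ case one divides through by the factor $1+z$ coming from $x=0$.

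Finally I would extract the coefficient of $z^\ell$. For $\SSS = \F_q$, direct binomial expansion of $(1-(-z)^p)^{q/p}$ produces $0$ when $p\nmid \ell$ and $(-1)^{\ell+\ell/p}\binom{q/p}{\ell/p}$ when $p\mid \ell$; combining this with the trivial-character term and the multiplicity $q-1$ yields part~2). For $\SSS = \F_q^*$, the generating function becomes $(1-(-z)^p)^{q/p}/(1+z)$, and the closed form emerges from the telescoping factorization $(1-w^p)/(1-w) = 1 + w + \cdots + w^{p-1}$ with $w=-z$, which rewrites the quotient as $(1-w^p)^{q/p-1}(1+w+\cdots+w^{p-1})$. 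Writing $\ell = p\lfloor \ell/p\rfloor + r$ with $0\le r\le p-1$ then extracts the single term $(-1)^{\lfloor \ell/p\rfloor}\binom{q/p-1}{\lfloor \ell/p\rfloor}$, producing part~1) after restoring the $(-1)^\ell$ from $w=-z$.

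The main obstacle I expect is tracking signs and the $p\mid\ell$ versus $p\nmid\ell$ dichotomy through the coefficient extraction, particularly in the $\F_q^*$ case, where it is precisely the telescoping identity that collapses what would otherwise be an alternating sum $\sum_j (-1)^j\binom{q/p}{j}$ into a single binomial coefficient. Once the character-sum rewriting and the identity $\prod_{\zeta^p=1}(1+z\zeta) = 1-(-z)^p$ are in hand, everything else is careful bookkeeping.
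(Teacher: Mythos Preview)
Your proposal is correct. The paper itself does not supply a proof of this lemma: it is quoted verbatim as \cite[Theorem 1.2]{LW2008} and used as a black box. The character-sum argument you outline---orthogonality to rewrite the count as a coefficient of $\prod_{x\in\SSS}(1+z\psi(x))$, the observation that all nontrivial additive characters yield the same product when $\SSS\in\{\F_q,\F_q^*\}$, the identity $\prod_{\zeta^p=1}(1+z\zeta)=1-(-z)^p$, and the telescoping factorization $(1-w^p)/(1-w)=1+w+\cdots+w^{p-1}$ for the $\F_q^*$ case---is exactly the method of Li and Wan, and every step checks (including the sign bookkeeping and the $p\mid\ell$ dichotomy). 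So you have reconstructed the cited proof rather than offered an alternative.
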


%\begin{definition}\label{def.schur product}
%\end{definition}

The {\em Schur product of two vectors} ${\bx}=(x_1,\ldots,x_n)$ 
and ${\bf y}=(y_1,\ldots,y_n)$ in $\F_q^n$ is defined by 
$\bx \star \by= (x_1y_1,\ldots,x_ny_n).$  
Based on this, the {\em Schur product of two linear codes} $\C_1$ and $\C_2$ of the same length is given by 
$\C_1\star \C_2=\{ {\bf c}_1\star {\bf c}_2:~{\bf c}_1\in \C_1~{\rm and}~{\bf c}_2\in \C_2 \}.$  
Moreover, if $\C_1=\langle \bc_{11},  \ldots, \bc_{1k}\rangle$ 
and $\C_2=\langle \bc_{21},  \ldots, \bc_{2\ell}\rangle$ 
are two $q$-ary linear codes with the same length, then their Schur product $\C_1\star \C_2$ is also a $q$-ary  linear code given by 
\begin{align*}
    \C_1\star \C_2=\langle {\bf c}_{1i}\star {\bf c}_{2j}:~1\leq i\leq k~{\rm and}~1\leq j\leq \ell\rangle. 
\end{align*}
If $\C_1=\C_2=\C$, we abbreviate $\C_1\star \C_2$ as $\C^2$.

Note that if $\C$ and $\D$ are two equivalent linear codes, then $\C^2$ is equivalent to $\D^2$. 
Based on the following two lemmas, we can judge if a linear code is equivalent to a GRS code.

\begin{lemma}{\rm (\!\!\cite[Proposition 10]{MMP2013},\cite[Proposition 2.1 (1)]{ZL2024})}\label{lem.GRS square dimension}
    If $\C$ is an $[n+1,k]_q$ GRS code with $3\leq k<\frac{n+2}{2}$, 
    then $\dim({\C^2})=2k-1.$ 
\end{lemma}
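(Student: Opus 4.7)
The plan is to unwind the definition of the Schur square using the standard generator matrix of a GRS code. Write the rows of a generator matrix of $\C=\GRS_k(\SSS,\mathbf{v})$ (viewed as a code of length $n+1$ on evaluation points $a_1,\ldots,a_{n+1}$ with multipliers $v_1,\ldots,v_{n+1}$) as $\mathbf{r}_j=(v_1 a_1^j,\ldots,v_{n+1} a_{n+1}^j)$ for $j=0,1,\ldots,k-1$. By definition, $\C^2$ is spanned by all Schur products $\mathbf{r}_{j_1}\star \mathbf{r}_{j_2}=(v_i^2 a_i^{j_1+j_2})_{1\leq i\leq n+1}$, with $0\leq j_1,j_2\leq k-1$. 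As the pair $(j_1,j_2)$ ranges over $\{0,\ldots,k-1\}^2$, the exponent $j_1+j_2$ attains every integer in $\{0,1,\ldots,2k-2\}$, so the spanning family collapses to the $2k-1$ vectors $\mathbf{s}_\ell:=(v_i^2 a_i^{\ell})_{1\leq i\leq n+1}$ for $\ell=0,1,\ldots,2k-2$. In particular, $\C^2$ is (on the level of spanning sets) equal to $\GRS_{2k-1}(\SSS,\mathbf{v}\star\mathbf{v})$, which already gives $\dim(\C^2)\leq 2k-1$.

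The remaining step is to verify that $\mathbf{s}_0,\mathbf{s}_1,\ldots,\mathbf{s}_{2k-2}$ are linearly independent. Stack them into a $(2k-1)\times(n+1)$ matrix $M$ and factor the nonzero scalar $v_i^2$ out of the $i$-th column; the result is the matrix with $(\ell+1,i)$-entry $a_i^\ell$, a transposed Vandermonde-type block. The hypothesis $k<\frac{n+2}{2}$ forces $2k-1\leq n$, so $M$ certainly has at least $2k-1$ columns. Selecting any $2k-1$ columns indexed by $i_1,\ldots,i_{2k-1}$ yields a $(2k-1)\times(2k-1)$ minor equal to $\prod_{t=1}^{2k-1} v_{i_t}^2$ times the classical Vandermonde determinant $\prod_{1\leq s<t\leq 2k-1}(a_{i_t}-a_{i_s})$. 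This is nonzero because the $a_i$ are pairwise distinct and the $v_i\in\F_q^*$, so $\rank(M)=2k-1$ and hence $\dim(\C^2)\geq 2k-1$.

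Combining the two bounds yields $\dim(\C^2)=2k-1$, proving the lemma. I do not anticipate any serious obstacle: the only subtle points are the bookkeeping that $\mathbf{v}\star\mathbf{v}$ has no zero entries (automatic since $\mathbf{v}\in(\F_q^*)^{n+1}$) and the inequality $2k-1\leq n+1$ that follows from $k<\frac{n+2}{2}$ and ensures $\GRS_{2k-1}(\SSS,\mathbf{v}\star\mathbf{v})$ attains its designed dimension. The lower bound $k\geq 3$ is not essential for the dimension count itself, but it avoids degenerate small-dimensional cases that must be treated separately and is the regime in which the lemma is used later in the paper (e.g.\ inside the Schur-product argument of Theorem \ref{th.non-GRS}).
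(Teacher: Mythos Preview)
Your argument is correct and is the standard computation: identify $\C^2$ with $\GRS_{2k-1}(\SSS,\mathbf{v}\star\mathbf{v})$ by observing that the Schur products of rows collapse to the $2k-1$ vectors $\mathbf{s}_\ell$, then invoke Vandermonde to see these are independent once $2k-1\leq n+1$. Note, however, that the paper does not supply its own proof of this lemma; it is quoted directly from \cite{MMP2013} and \cite{ZL2024}, so there is nothing in the paper to compare against beyond the bare statement. Your write-up is exactly the argument those references give (and indeed the only natural one).
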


\begin{lemma}{\rm (\!\!\cite[Lemma 3.3]{HL2023})}\label{lem.GRS square distance}
    If $\C$ is an $[n+1,k]_q$ GRS code with $3\leq k< \frac{n+2}{2}$, 
    then $d({\C^2})\geq 2.$ 
\end{lemma}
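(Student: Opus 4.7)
The plan is to recognize $\C^2$ as another GRS code and then read off its minimum distance from the standard MDS formula.

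First I would write $\C = \GRS_k(\SSS, \bv)$ with evaluation set $\SSS = \{a_1,\ldots,a_{n+1}\}\subseteq \F_q$ of size $n+1$ and column multiplier $\bv = (v_1,\ldots,v_{n+1}) \in (\F_q^*)^{n+1}$. A natural basis is $\bc_i = (v_j a_j^i)_{1\le j\le n+1}$ for $0\le i\le k-1$. Taking Schur products of the basis vectors gives
\[
\bc_i \star \bc_{i'} = (v_j^2 a_j^{i+i'})_{1\le j\le n+1},
\]
and as the exponent $i+i'$ ranges over $\{0,1,\ldots,2k-2\}$ these vectors span
\[
\C^2 = \bigl\{(v_j^2 f(a_j))_{1\le j\le n+1} : f \in \F_q[x]_{2k-1}\bigr\},
\]
which is precisely the evaluation code $\GRS_{2k-1}(\SSS, \bv\star\bv)$.

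Next, since the hypothesis $k < (n+2)/2$ forces $2k-1 \le n < n+1$, the evaluation map $f\mapsto (f(a_j))_j$ is injective on $\F_q[x]_{2k-1}$ (no nonzero polynomial of degree at most $2k-2$ can vanish at $n+1$ distinct points), so $\C^2$ is a genuine $[n+1, 2k-1]_q$ GRS code. Being MDS, its minimum distance equals $(n+1) - (2k-1) + 1 = n - 2k + 3$. The integer inequality $2k \le n+1$ (again from $k < (n+2)/2$) then yields $d(\C^2) = n - 2k + 3 \ge 2$, as claimed.

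There is no real obstacle in this argument; the only point requiring a moment of care is the injectivity of the evaluation map, which ensures that $\C^2$ does not collapse to a subspace of dimension smaller than $2k-1$ (in which case the Singleton-based distance estimate would not apply). This is exactly where the hypothesis $k < (n+2)/2$ is needed, and it simultaneously supplies the bound $n-2k+3\ge 2$.
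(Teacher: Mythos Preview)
Your proof is correct. The paper does not supply its own argument for this lemma; it simply quotes the result from \cite[Lemma~3.3]{HL2023}. Your computation that $\C^2=\GRS_{2k-1}(\SSS,\bv\star\bv)$ and the ensuing MDS distance count $d(\C^2)=n-2k+3\ge 2$ is exactly the standard way to see this, and your care with the injectivity of the evaluation map (ensuring $2k-1\le n+1$) is the right place to invoke the hypothesis $k<\tfrac{n+2}{2}$.
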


\subsection{Some auxiliary results} 

At the end of this section, we recall some auxiliary results for later use. 

\begin{lemma}{\rm(\!\!\cite[Corollary 3.6]{LP2020})}\label{lem.so condition}
    Let $\C$ be an $[n,k]_q$ linear code with a generator matrix $G$. 
    Then $\C$ is self-orthogonal if and only if $GG^T$ is a zero matrix of size $k\times k$. 
    %Then $\dim(\Hull(\C))=k-\rank(GG^T).$ Moreover, $\C$ is self-orthogonal if and only if $GG^T$ is a zero matrix of size $k\times k$. 
\end{lemma}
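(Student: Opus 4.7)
The plan is to establish the biconditional by translating self-orthogonality into a statement about the pairwise Euclidean inner products of a basis of $\C$. First I would let $\gggg_1,\ldots,\gggg_k$ denote the rows of the generator matrix $G$, so that $\{\gggg_1,\ldots,\gggg_k\}$ is a basis of $\C$ and every codeword of $\C$ admits a (unique) expansion $\sum_{i=1}^{k}\lambda_i\gggg_i$ with $\lambda_i\in\F_q$. The central observation I would record is that by the very definition of matrix multiplication, the $(i,j)$-entry of the $k\times k$ matrix $GG^T$ equals $\langle \gggg_i,\gggg_j\rangle$. With this dictionary in hand, both directions of the lemma reduce to bookkeeping.

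For the necessity direction, I would assume $\C\subseteq \C^{\perp}$. Since each $\gggg_j$ lies in $\C$, and hence in $\C^{\perp}$, and since each $\gggg_i\in \C$, orthogonality forces $\langle \gggg_i,\gggg_j\rangle=0$ for every $1\leq i,j\leq k$. By the observation above, every entry of $GG^T$ vanishes, so $GG^T$ is the $k\times k$ zero matrix.

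For the sufficiency direction, I would assume $GG^T=0$, so that $\langle \gggg_i,\gggg_j\rangle=0$ for every pair of basis rows. Given arbitrary codewords $\bc=\sum_{i=1}^{k}\lambda_i\gggg_i$ and $\bc'=\sum_{j=1}^{k}\mu_j\gggg_j$, I would invoke the bilinearity of the Euclidean inner product to obtain $\langle \bc,\bc'\rangle=\sum_{i,j}\lambda_i\mu_j\langle \gggg_i,\gggg_j\rangle=0$. In particular $\bc\in \C^{\perp}$ for every $\bc\in \C$, hence $\C\subseteq \C^{\perp}$ and $\C$ is self-orthogonal.

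There is no substantial obstacle here, since the lemma is essentially a matrix reformulation of the definition of self-orthogonality; the entire argument rests on the bilinearity of $\langle\cdot,\cdot\rangle$ and on the fact that the rows of $G$ span $\C$. If one wished to be even more economical, the two directions could be collapsed into the single matrix identity $\bc\,{\bc'}^{T}=\lambda\, GG^T\mu^T$, where $\bc=\lambda G$ and $\bc'=\mu G$, from which both implications fall out at once; I would, however, keep the two directions separate in the write-up for clarity.
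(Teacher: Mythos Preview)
Your proof is correct and complete. The paper itself does not supply a proof of this lemma; it is stated as a known result with a citation to \cite[Corollary 3.6]{LP2020}, so there is no in-paper argument to compare against. Your approach---identifying the entries of $GG^T$ with the pairwise inner products of the rows of $G$ and then using bilinearity to extend from a basis to all codewords---is the standard one and would serve perfectly well as a self-contained justification.
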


\begin{lemma}{\rm(\!\!\cite[Page 466]{H1929}, \cite[Lemma 17]{LDMTT2021})}\label{lem.det}
    For any integer $n\geq 3$, it holds that 
    \begin{align*}
        \det\left(\begin{array}{ccccc}
            1 &  \ldots &  1 \\ 
            a_1  & \ldots &  a_n \\
            %a_1^2 &  \ldots & a_{n-1}^2 & a_n^2 \\
            \vdots &  \ddots &  \vdots \\
            a_1^{n-2} &  \ldots &  a_n^{n-2} \\
            a_1^n &  \ldots &  a_n^n 
        \end{array}\right)=\sum_{s=1}^{n}a_s\prod_{1\leq i<j\leq n}(a_j-a_i).             
    \end{align*}
\end{lemma}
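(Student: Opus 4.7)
The plan is to treat the left-hand determinant $D$ as a polynomial in the indeterminates $a_1,\ldots,a_n$ and pin it down via symmetry and degree, rather than by brute cofactor expansion. First I would observe that whenever $a_i=a_j$ for some $i<j$, two columns of the matrix coincide, so $D=0$; thus each linear form $(a_j-a_i)$ divides $D$ in the polynomial ring, and hence so does the full Vandermonde product $V:=\prod_{1\leq i<j\leq n}(a_j-a_i)$.

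Next I would compare degrees. The determinant $D$ is homogeneous of total degree $0+1+\cdots+(n-2)+n=\binom{n}{2}+1$, while $\deg V=\binom{n}{2}$, so the quotient $Q:=D/V$ is a polynomial of degree exactly $1$. Since swapping any two columns multiplies both $D$ and $V$ by $-1$, the quotient $Q$ is \emph{symmetric} in $a_1,\ldots,a_n$; any symmetric polynomial of degree $1$ must have the form $Q=c\sum_{s=1}^{n}a_s$ for some universal constant $c$ (independent of the $a_i$). It then remains only to identify $c$.

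I would pin down $c=1$ by matching the coefficient of a single monomial. In the Leibniz expansion of $D$, the identity permutation contributes $a_1^0 a_2^1\cdots a_{n-1}^{n-2}a_n^n$ with sign $+1$, and this is the only permutation producing this monomial, because the row-exponent multiset $\{0,1,\ldots,n-2,n\}$ consists of distinct integers, which forces each variable $a_k$ to inherit its exponent from a unique row. On the right-hand side, the same monomial arises uniquely by multiplying the summand $a_n$ of $\sum_s a_s$ by the monomial $a_2 a_3^2\cdots a_n^{n-1}$ in $V$: no summand other than $a_n$ can push the $a_n$-exponent to $n$, and within $V$ the coefficient of $a_n^{n-1}\cdot a_2 a_3^2\cdots a_{n-1}^{n-2}$ equals $+1$ by the classical Vandermonde expansion restricted to $a_1,\ldots,a_{n-1}$. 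Equating the two sides gives $c=1$, completing the proof. The only step requiring genuine care is the uniqueness argument for the chosen monomial on both sides, but this reduces to a short combinatorial check on exponent patterns, so I expect no serious obstacle.
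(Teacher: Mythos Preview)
Your argument is correct. The paper does not actually prove this lemma; it merely quotes it from \cite{H1929} and \cite{LDMTT2021}, so there is no ``paper's proof'' to compare against. Your approach---factor out the Vandermonde product by the standard column-collision argument, compare total degrees to see the cofactor is a homogeneous symmetric polynomial of degree $1$, and then determine the scalar by matching a single monomial---is the classical route and works cleanly here.

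One tiny remark on presentation: when you write ``any symmetric polynomial of degree $1$ must have the form $c\sum_s a_s$,'' you are implicitly using that $Q$ is \emph{homogeneous} of degree $1$ (otherwise there could be a constant term). You have this for free since both $D$ and $V$ are homogeneous, but it is worth saying explicitly. The monomial-matching step is fine: the exponents $0,1,\ldots,n-2,n$ are distinct, so the identity permutation is the unique contributor to $a_1^0a_2^1\cdots a_{n-1}^{n-2}a_n^{n}$ in $D$, and on the right the only way to reach exponent $n$ on $a_n$ is via the summand $a_n$ times the top monomial $a_1^0a_2^1\cdots a_n^{n-1}$ of $V$, whose coefficient is $+1$ by the usual Vandermonde expansion.
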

 
%By Lemma \ref{lem.det}, we have the following result. 
 
\begin{lemma}%{\rm(\!\!\cite[Lemma 5]{LXW2008})}
    \label{lem.PRA}
    For any subset $\SSS=\{a_1,\ldots,a_n\}\subseteq \F_q$ with $n\geq 3$, it holds that  
    \begin{align}\label{eq.ui value}
        \sum_{i=1}^{n}a_i^{\ell}u_i=\left\{\begin{array}{ll}
            0, & 0\leq \ell\leq n-2, \\
            1, & \ell=n-1, \\
            \sum_{i=1}^n a_i, & \ell=n.
        \end{array} \right.
    \end{align}
\end{lemma}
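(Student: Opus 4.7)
The plan is to prove the identity via Lagrange interpolation, since the quantities $u_i$ are precisely the leading coefficients of the Lagrange basis polynomials for the nodes $a_1,\ldots,a_n$.

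First, I would recall that for any polynomial $f(x)\in \F_q[x]$ of degree at most $n-1$, the Lagrange interpolation formula gives
\[
f(x)=\sum_{i=1}^{n}f(a_i)\prod_{\substack{1\leq j\leq n\\ j\neq i}}\frac{x-a_j}{a_i-a_j}.
\]
Since $\prod_{j\neq i}(a_i-a_j)^{-1}=u_i$ by definition, the coefficient of $x^{n-1}$ on the right-hand side is exactly $\sum_{i=1}^{n}f(a_i)u_i$. Thus, for any $f\in \F_q[x]$ with $\deg f\leq n-1$, the coefficient of $x^{n-1}$ in $f$ equals $\sum_{i=1}^{n}f(a_i)u_i$.

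Next, applying this observation to $f(x)=x^{\ell}$ for $0\leq \ell\leq n-1$ handles the first two cases at once. When $0\leq \ell\leq n-2$, the polynomial $x^{\ell}$ has zero coefficient at $x^{n-1}$, yielding $\sum_{i=1}^{n}a_i^{\ell}u_i=0$. When $\ell=n-1$, the coefficient of $x^{n-1}$ in $x^{n-1}$ is $1$, so $\sum_{i=1}^{n}a_i^{n-1}u_i=1$.

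The case $\ell=n$ is the only one that needs a small extra trick, since $x^n$ is out of range for Lagrange interpolation. Here I would introduce
\[
g(x)=x^{n}-\prod_{i=1}^{n}(x-a_i),
\]
which has degree at most $n-1$ and satisfies $g(a_i)=a_i^{n}$ for every $i$. Expanding the product, $\prod_{i=1}^{n}(x-a_i)=x^{n}-\bigl(\sum_{i=1}^{n}a_i\bigr)x^{n-1}+\cdots$, so the coefficient of $x^{n-1}$ in $g(x)$ equals $\sum_{i=1}^{n}a_i$. Applying the general principle to $g$ gives $\sum_{i=1}^{n}a_i^{n}u_i=\sum_{i=1}^{n}a_i$, which completes the proof.

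No serious obstacles are expected; the only mild subtlety is recognizing that the $\ell=n$ case requires reducing $x^n$ modulo $\prod_{i=1}^{n}(x-a_i)$ before applying the interpolation identity. Everything else is a direct reading of coefficients.
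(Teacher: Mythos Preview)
Your proof is correct and fully self-contained. However, it follows a different route from the paper's. The paper does not argue via Lagrange interpolation: for $0\leq \ell\leq n-2$ and $\ell=n-1$ it simply cites known identities from \cite{LXW2008} and \cite{FLL2020}, and for $\ell=n$ it invokes Lemma~\ref{lem.det} (the generalized Vandermonde determinant identity $\det(\ldots)=\sum_s a_s\prod_{i<j}(a_j-a_i)$), presumably expressing $\sum_i a_i^n u_i$ as a ratio of determinants via cofactor expansion along the lines of the argument in \cite{LXW2008}. Your approach is more elementary and unified: a single observation---that $\sum_i f(a_i)u_i$ reads off the $x^{n-1}$-coefficient of any $f$ of degree at most $n-1$---handles all three cases at once, with the $\ell=n$ case reduced to this range by subtracting $\prod_i(x-a_i)$. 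This avoids external references and the determinant lemma entirely, at the cost of not tying the result to the Vandermonde-type machinery the paper uses elsewhere.
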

\begin{proof}
    For the cases where $0\leq \ell\leq n-2$ and $\ell=n-1$, the desired results have been documented in 
    \cite[Theorem 3.1]{LXW2008} and \cite[Lemma \Rmnum{2}.1]{FLL2020}, respectively. 
    For the left case, the desired result follows immediately by combining Lemma \ref{lem.det} and the proof given in \cite[Theorem 3.1]{LXW2008}.  
\end{proof}

\section{The class of linear codes $\C_k(\SSS,{\bf v},\infty)$}\label{sec3}

\begin{definition}\label{def.codes}
    Let $n$ and $k$ be two positive integers satisfying $3\leq k\leq n-2\leq q-2$.
    Let $\SSS=\{a_1,\ldots,a_{n}\}\subseteq \F_q$ and 
    ${\bf v}=(v_1,\ldots,v_{n})\in (\F_q^*)^n$.  
    Denote by $\C_k(\SSS,{\bf v},\infty)$ the $q$-ary linear code generated by the matrix 
        \begin{equation}\label{eq.generator matrix}
            G_k=\begin{pmatrix}    
        v_1 &  \ldots & v_{n} & 0 \\ 
        v_1a_1 &  \ldots & v_{n}a_{n} & 0 \\
        \vdots &  \ddots & \vdots & \vdots \\
        v_1a_1^{k-2} &  \ldots & v_{n}a_{n}^{k-2} & 0 \\
        v_1a_1^k &  \ldots & v_{n}a_{n}^k & 1 \\
            \end{pmatrix}.
        \end{equation}        
        With the notation $\mathcal{V}_k$ given in Subsection \ref{sec.Basic notations}, $\C_k(\SSS,{\bf v},\infty)$ can be further expressed as  
        \begin{align*}
            \{ \left(v_1f(a_1),\ldots,v_nf(a_n),f_{k}\right): ~f(x)\in \mathcal{V}_k\},  
        \end{align*}
    where $f_{k}$ is the coefficient of $x^k$ in $f(x)$.     
\end{definition}
%Furthermore, if $\SSS=\F_q^*$ (resp. $\F_q$), we write $\C_k(\F_q^*,{\bf v},\infty)$ (resp. $\C_k(\F_q,{\bf v},\infty)$). 
%Hence, $\C_k(\SSS,{\bf v},\infty)$ can be seen as a subcode of the well-known extended generalized Reed-Solomon code $\GRS_{k+1}({\bf a},{\bf v},\infty)$. 

In the following, we focus on the parameters, non-GRS properties, weight distributions, 
and self-orthogonal properties of $\C_k(\SSS,{\bf v},\infty)$.  

\subsection{Parameters of $\C_k(\SSS,{\bf v},\infty)$} 

In this subsection, we determine the parameters of $\C_k(\SSS,{\bf v},\infty)$. 
We will prove that $\C_k(\SSS,{\bf v},\infty)$ must be MDS or NMDS codes 
and give necessary and sufficient conditions for them. 
%but also fully establish under what conditions they are MDS or NMDS. 
First of all, we determine the length and dimension of  $\C_k(\SSS,{\bf v},\infty)$. 

\begin{theorem}\label{th.length and dimension} 
    Let $\C_k(\SSS,{\bf v},\infty)$ be the $q$-ary linear code generated by $G_k$ given in Equation (\ref{eq.generator matrix}). 
    Then $\C_k(\SSS,{\bf v},\infty)$ has parameters $[n+1,k]_q$. 
\end{theorem}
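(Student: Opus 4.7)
The plan is to verify directly the two assertions that $\C_k(\SSS,{\bf v},\infty)$ has length $n+1$ and dimension $k$. The length is immediate: $G_k$ has $n+1$ columns, so $\C_k(\SSS,{\bf v},\infty)\subseteq \F_q^{n+1}$ and the length claim is just by inspection of Equation (\ref{eq.generator matrix}).

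For the dimension, I would show that the $k$ rows of $G_k$ are linearly independent. I would exploit the block structure of $G_k$: the first $k-1$ rows have a $0$ in the last coordinate, while the last row has a $1$ in the last coordinate. Hence any nontrivial linear combination of the rows that uses the last row with a nonzero coefficient must have a nonzero entry in the last column, so it cannot be the zero vector. This reduces the problem to showing that the first $k-1$ rows are themselves linearly independent.

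To finish, I would delete the last column and consider the $(k-1)\times n$ submatrix
\begin{equation*}
M=\begin{pmatrix}
v_1 &  \ldots & v_{n} \\
v_1a_1 &  \ldots & v_{n}a_{n} \\
\vdots &  \ddots & \vdots \\
v_1a_1^{k-2} &  \ldots & v_{n}a_{n}^{k-2} \\
\end{pmatrix}.
\end{equation*}
This is exactly the generator matrix of the GRS code $\GRS_{k-1}(\SSS,{\bf v})$ of dimension $k-1$ and length $n$; equivalently, $M$ factors as a Vandermonde-type matrix in $a_1,\ldots,a_n$ (with row exponents $0,1,\ldots,k-2$) multiplied on the right by the nonsingular diagonal matrix $\diag(v_1,\ldots,v_n)$. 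Since the $a_i$ are pairwise distinct elements of $\F_q$, every $(k-1)\times(k-1)$ Vandermonde minor of this matrix is nonzero, and $n\ge k-1$ holds because $k\le n-2$, so $M$ has rank $k-1$. Therefore the first $k-1$ rows of $G_k$ are linearly independent, and combined with the block argument above we conclude that all $k$ rows of $G_k$ are linearly independent.

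There is no serious obstacle here; the only thing to be careful about is that the hypotheses of Definition \ref{def.codes} guarantee both the distinctness of the $a_i$'s (since $\SSS$ is a set) and the nonvanishing of the $v_i$'s (since ${\bf v}\in (\F_q^*)^n$), which are precisely what is needed to invoke the Vandermonde nondegeneracy and the diagonal scaling.
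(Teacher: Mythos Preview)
Your proof is correct and follows essentially the same approach as the paper: both arguments use the last coordinate to force the coefficient of the final row to vanish, and then reduce to the linear independence of the first $k-1$ rows via a Vandermonde-type argument (the paper phrases this as a polynomial of degree at most $k-2$ having $n>k-2$ roots, which is equivalent to your rank computation for $M$).
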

\begin{proof} 
    It is clear that the length of $\C$ is $n+1$. 
    Denote the $i$-th row of $G_k$ by $\mathbf{g}_i$ for $1\leq i\leq k$. 
    To prove that $\dim(\C_k(\SSS,{\bf v},\infty))=k$, it suffices to show that 
    $\mathbf{g}_1,  \ldots, \mathbf{g}_k$ are linearly independent over $\F_q$. 
    Assume that there are $u_1,\ldots,u_k\in\F_q$ such that 
    $u_1\mathbf{g}_1+\ldots+u_k\mathbf{g}_k={\bf 0}$. 
    This implies that $u_k=0$ and the polynomial 
    $f(x)=u_{k-1} x^{k-2}+\ldots+u_2x+u_1$ 
    has $n$ distinct roots $a_1,\ldots,a_n$.  
    Since $n> k-2$, it follows that $u_1=\ldots=u_k=0$, and hence, 
    $\mathbf{g}_1,  \ldots, \mathbf{g}_k$ are linearly independent. 
    This completes the proof. 
\end{proof}

%Then we obtain a parity-check matrix of $\C_k(\SSS,{\bf v},\infty)$ from the form of the generator matrix $G_k$ presented in  Equation (\ref{eq.generator matrix}). 

\begin{theorem}\label{th.parity check matrix}    
    Let $\C_k(\SSS,{\bf v},\infty)$ be the $q$-ary linear code generated by $G_k$ given in Equation (\ref{eq.generator matrix}). 
    Suppose that $v'_i=u_iv_i^{-1}$ for any $1\leq i\leq n$. 
    Then the following statements hold. 
    \begin{enumerate}
        \item [\rm 1)] The $(n-k+1)\times (n+1)$ matrix 
        \begin{equation*}%\label{eq.parity check matrix}
            \footnotesize
            H_{n-k+1}=\begin{pmatrix}    
        v'_1 &  \ldots & v'_{n} & 0 \\ 
        v'_1a_1  & \ldots & v'_{n}a_{n} & 0 \\
        \vdots &  \ddots & \vdots & \vdots \\
        v'_1a_1^{n-k-2}  & \ldots & v'_{n}a_{n}^{n-k-2} & 0 \\
        v'_1a_1^{n-k-1}  & \ldots & v'_{n}a_{n}^{n-k-1} & -1 \\
        v'_1a_1^{n-k}  & \ldots & v'_{n}a_{n}^{n-k} & -\sum_{i=1}^na_i \\
            \end{pmatrix}
        \end{equation*}
        is a parity-check matrix of $\C_k(\SSS,{\bf v},\infty)$. 
    
        \item [\rm 2)] The dual code $\C_k(\SSS,{\bf v},\infty)^{\perp}$ is given by
        $\footnotesize
            \{(v_1'g(a_1), \ldots, v_n'g(a_n), -g_{n-k-1}-g_{n-k} \sum_{i=1}^na_i):~
            g(x)\in \F_q[x]_{n-k+1}\}. 
        $
        %$$\C_k(\SSS,{\bf v},\infty)^{\perp}=\left\{\left(v_1'g(a_1), v_2'g(a_2), \ldots, v_n'g(a_n), -g_{n-k-1}-\sum_{i=1}^na_i g_{n-k}\right):~g(x)\in \F_q[x]~{\rm with}~\deg(g(x))\leq n-k \right\}.$$
    \end{enumerate}
    \end{theorem}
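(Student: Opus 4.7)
The plan is to verify the two claims of the theorem in sequence, with the harder work concentrated in part 1). Since Theorem \ref{th.length and dimension} gives $\dim(\C_k(\SSS,{\bf v},\infty)^{\perp}) = n+1-k$, to prove that $H_{n-k+1}$ is a parity-check matrix it suffices to show (i) that $H_{n-k+1}$ has $n-k+1$ linearly independent rows, and (ii) that $G_k H_{n-k+1}^T = 0$.

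For (i), I will examine the submatrix formed by the first $n$ columns of $H_{n-k+1}$. After factoring out the nonzero scalar $v_s'$ from column $s$, this becomes a standard $(n-k+1)\times n$ Vandermonde-type matrix with distinct evaluation points $a_1,\ldots,a_n$ and consecutive exponents $0,1,\ldots,n-k$. Since $n-k+1\leq n$, it has full row rank, so $H_{n-k+1}$ does too.

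For (ii), I will compute the $(i,j)$-entry of $G_k H_{n-k+1}^T$ case by case. Using $v_sv_s'=u_s$, each such inner product reduces to a sum $\sum_{s=1}^{n}u_sa_s^{\ell}$ for a suitable exponent $\ell$, possibly plus a scalar contribution from the last column of $H_{n-k+1}$. A short check shows that when $i\leq k-1$, the exponent stays in the range $[0,n-2]$, and Lemma \ref{lem.PRA} returns $0$; the same holds when $i=k$ and $j\leq n-k-1$. The two critical cases are $(i,j)=(k,n-k)$ and $(i,j)=(k,n-k+1)$, where the exponents become $\ell=n-1$ and $\ell=n$, giving values $1$ and $\sum_{s=1}^{n}a_s$ respectively by Lemma \ref{lem.PRA}. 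The last-column entries $-1$ and $-\sum_{i=1}^{n}a_i$ of $H_{n-k+1}$ are designed precisely so as to cancel these two values against the row $\mathbf{g}_k$, whose last coordinate is $1$. This boundary cancellation is the main point of the argument; everything else is routine exponent bookkeeping driven by Lemma \ref{lem.PRA}.

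Part 2) then follows directly from the structure of $H_{n-k+1}$. Any element of $\C_k(\SSS,{\bf v},\infty)^{\perp}$ is a linear combination $\sum_{j=0}^{n-k}g_j\,(\text{row }j{+}1\text{ of }H_{n-k+1})$, which is naturally parameterized by $g(x)=\sum_{j=0}^{n-k}g_jx^j\in\F_q[x]_{n-k+1}$. The first $n$ coordinates read off as $v_s'g(a_s)$, while only the last two rows contribute to the final coordinate, producing $-g_{n-k-1}-g_{n-k}\sum_{i=1}^{n}a_i$ as claimed. The anticipated main obstacle is purely organizational, namely keeping the six subcases of the entrywise computation in (ii) tidy; no new ideas beyond Lemma \ref{lem.PRA} are required.
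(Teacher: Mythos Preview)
Your proposal is correct and follows essentially the same approach as the paper: both arguments establish $\rank(H_{n-k+1})=n-k+1$ and then verify $G_kH_{n-k+1}^T=0$ entrywise, with the inner products reducing via $v_sv_s'=u_s$ to the sums $\sum_{s}u_sa_s^{\ell}$ evaluated by Lemma~\ref{lem.PRA}, the two boundary cases $\ell=n-1$ and $\ell=n$ being cancelled exactly by the last-column entries $-1$ and $-\sum_i a_i$. Your Vandermonde justification for the rank is slightly more explicit than the paper's, which simply asserts this is easy to see.
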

    \begin{proof}
        1) On one hand, it is easy to see that $\rank(H_{n-k+1})=n-k+1$. 
        %On the other hand, we denote ${\bf h}_t=(v'_1a_1^{t-1}, v'_2a_2^{t-1}, \ldots, v'_{n}a_{n}^{t-1}, b)$ for $1\leq t\leq n-k+1$, 
        %where %$b=0$ if $1\leq t\leq n-k-1$, $b=-1$ if $t=n-k$ and $b=-\sum_{i=1}^{n}a_i$ if $t=n-k+1$.  
        %\begin{align*}
        %    b=\left\{
        %        \begin{array}{ll}
        %            0, & {\rm if}~1\leq t\leq n-k-1, \\
        %            -1, & {\rm if}~t=n-k, \\
        %            -\sum_{i=1}^{n}a_i, & {\rm if}~t=n-k+1, \\
        %        \end{array}
        %    \right.
        %\end{align*}
        %${\bf h}_{n-k-1}=(v'_1a_1^{n-k-1}, v'_2a_2^{n-k-1}, \ldots, v'_{n}a_{n}^{n-k-1}, -1)$ and ${\bf h}_{n-k}=(v'_1a_1^{n-k}, v'_2a_2^{n-k}, \ldots, v'_{n}a_{n}^{n-k}, -\sum_{i=1}^{n}a_i)$, where $1\leq i\leq n-k-1$. 
        %Let $H_{n+k-1}$ be an $(n-k+1)\times (n+1)$ matrix whose $t$-th row is ${\bf h}_t$ for $0\leq t\leq n-k+1$. 
        On the other hand, we denote the $s$-th row of the generator matrix $G_k$ by ${\bf g}_s$ for any $1\leq s\leq k$ 
        and denote the $t$-th row of the matrix $H_{n-k+1}$ by $\mathbf{h}_t$ for any $1\leq t\leq n-k+1$.  
        Our main task, then, becomes to prove ${\bf g}_s {\bf h}^T_t=0$ for any possible $s$ and $t$. 
        It is not difficult to calculate that  
        \begin{align*}
            \small
            {\bf g}_s {\bf h}^T_t=\left\{\begin{array}{ll}
                \sum_{i=1}^n u_ia_i^{s+t-2}, & {\rm if}\hspace{-4mm}\begin{array}{ll}&1\leq s\leq k-1~{\rm and}\\ &1\leq t\leq n-k+1,\end{array} \\ 
                \sum_{i=1}^n u_ia_i^{k+t-1}, & {\rm if}\hspace{-4mm}\begin{array}{ll}&s=k~{\rm and}\\ &1\leq t\leq n-k-1,\end{array} \\ 
                \sum_{i=1}^n u_ia_i^{n-1}-1, & {\rm if}~s=k~{\rm and}~ t=n-k, \\ 
                \sum_{i=1}^n u_ia_i^{n}-\sum_{i=1}^na_i, & {\rm if}\hspace{-4mm}\begin{array}{ll}&s=k~{\rm and}\\ &t=n-k+1.\end{array} 
            \end{array} \right.
        \end{align*}
        Then the desired result 1) follows straightforward from Lemma \ref{lem.PRA}. 
    
        2) Since a parity-check matrix of a linear code is a generator matrix of its dual code, 
        then the expected result 2) immediately holds according to the result 1) above. %This completes the proof. 
    \end{proof}

\begin{theorem}\label{th.must be MDS or NMDS} 
    Let $\C_k(\SSS,{\bf v},\infty)$ be the $q$-ary linear code generated by $G_k$ given in Equation (\ref{eq.generator matrix}). 
    Then $\C_k(\SSS,{\bf v},\infty)$ is either  MDS or NMDS. 
\end{theorem}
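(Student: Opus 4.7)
My plan is to show the slightly stronger statement: the code always satisfies $d(\C_k(\SSS,{\bf v},\infty)) \geq n-k+1$, and whenever equality occurs (so the code is AMDS rather than MDS) the dual also has minimum distance exactly $k$. The main tool will be direct determinantal computations on the generator matrix $G_k$ of Equation~(\ref{eq.generator matrix}), using Lemma~\ref{lem.det} together with Lemma~\ref{lem.MDS} and its dual formulation.

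First I would pick any $k$ columns of $G_k$ and compute the determinant of the resulting $k\times k$ submatrix. If all chosen columns come from the first $n$, factoring out the $v_{i_j}$ and applying Lemma~\ref{lem.det} yields
\[
v_{i_1}\cdots v_{i_k}\Bigl(\sum_{s=1}^{k}a_{i_s}\Bigr)\prod_{1\leq j<\ell\leq k}(a_{i_\ell}-a_{i_j}),
\]
which vanishes exactly when $\sum_s a_{i_s}=0$. If the last column is among the chosen ones, expanding along that column gives $\pm v_{i_1}\cdots v_{i_{k-1}}\prod_{j<\ell}(a_{i_\ell}-a_{i_j})\neq 0$ by Vandermonde. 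So by Lemma~\ref{lem.MDS}, $\C_k(\SSS,{\bf v},\infty)$ is MDS iff no $k$-subset of $\SSS$ sums to zero; otherwise it fails to be MDS.

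Next, assume the non-MDS case and show $d\geq n-k+1$, equivalently that any $k+1$ columns of $G_k$ have rank $k$. If the last column is among these $k+1$, the previous paragraph already provides $k$ linearly independent columns among them. Otherwise, pick $k+1$ columns from the first $n$: if every $k$-subset of indices summed to zero, then for any two indices $r,s$ in this set one would obtain $a_r-a_s=0$ by subtracting two $k$-subset equations that differ only in one element, contradicting the distinctness of $a_1,\ldots,a_n$. Hence at least one $k$-subset has nonzero sum, giving rank $k$. This yields $d\geq n-k+1$, so the code is AMDS when it is not MDS.

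Finally, to promote AMDS to NMDS I would verify $d(\C^\perp)\geq k$, which by the standard duality between generator and parity-check matrices is equivalent to any $k-1$ columns of $G_k$ being linearly independent. For $k-1$ columns chosen from the first $n$, the first $k-1$ rows of the submatrix form a scaled Vandermonde matrix with determinant $v_{i_1}\cdots v_{i_{k-1}}\prod_{j<\ell}(a_{i_\ell}-a_{i_j})\neq 0$. If the last column is included, the first $k-2$ selected columns are already linearly independent on the first $k-1$ rows, and the last column then contributes an independent direction through its unique nonzero entry in the bottom row. Combining this with $d(\C^\perp)\leq k$ (which follows because $\C$ being non-MDS forces $\C^\perp$ to be non-MDS as well) gives $d(\C^\perp)=k$, completing the NMDS conclusion. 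The only non-routine part of the argument is the combinatorial obstruction in step three, but it is defeated in one line by the distinctness of the $a_i$.
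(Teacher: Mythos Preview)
Your argument is correct and actually proves Theorem~\ref{th.MDS condition} along the way, but it is a genuinely different route from the paper's. The paper obtains the lower bound $d(\C_k(\SSS,{\bf v},\infty))\geq n-k+1$ by observing that $G_k$ is a submatrix of $G_{\EGRS_{k+1}}$, so the code is a subcode of the MDS code $\GRS_{k+1}(\SSS,{\bf v},\infty)$; it then bounds $d(\C^\perp)\geq k$ by first computing an explicit parity-check matrix (Theorem~\ref{th.parity check matrix}) and using the polynomial description of the dual, where a nonzero $g(x)\in\F_q[x]_{n-k+1}$ can vanish on at most $n-k$ of the $a_i$. You instead stay entirely inside $G_k$ and argue via ranks of column submatrices: Lemma~\ref{lem.det} handles the $k\times k$ minors, the pigeonhole-style swap $a_{i_r}-a_{i_s}\neq 0$ shows any $k{+}1$ of the first $n$ columns must contain a $k$-subset with nonzero sum (hence rank~$k$), and a short Vandermonde check gives the $(k{-}1)$-column independence needed for $d(\C^\perp)\geq k$. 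Your approach avoids needing Theorem~\ref{th.parity check matrix} altogether and simultaneously delivers the zero-sum characterization of Theorem~\ref{th.MDS condition}, at the cost of a few more case splits; the paper's approach is shorter for this particular theorem but relies on the separately-proved parity-check description. One small wording point: in your final paragraph, when the last column is included among the $k-1$ chosen columns, the $k-2$ columns from the first $n$ are already linearly independent by looking at the first $k-2$ rows (a square scaled Vandermonde), not just the first $k-1$ rows; the conclusion is unchanged.
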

\begin{proof} 
    We first show that $\C_k(\SSS,{\bf v},\infty)$ is MDS or AMDS. 
    Note that the $(k+1)\times (n+1)$ matrix $G_{\EGRS_{k+1}}$ given in Equation (\ref{eq.EGRS.generator matrix}) 
    can generate an $[n+1,k+1,n-k+1]_q$ MDS code $\GRS_{k+1}(\SSS,{\bf v},\infty)$.  
    Since $G_k$ is a submatrix of $G_{\EGRS_{k+1}}$, then $\C_k(\SSS,{\bf v},\infty)$ is a subcode of $\GRS_{k+1}(\SSS,{\bf v},\infty)$. 
    It implies that $d(\C_k(\SSS,{\bf v},\infty))\geq n-k+1$. 
    On the other hand, $d(\C_k(\SSS,{\bf v},\infty))\leq n-k+2$ according to Theorem \ref{th.length and dimension} and the well-known Singleton bound. 
    Combining these two aspects, we conclude that $d(\C_k(\SSS,{\bf v},\infty))\in \{n-k+1, n-k+2\}$. 
    Hence, $\C_k(\SSS,{\bf v},\infty)$ is MDS or AMDS. 

    Next, we prove that $\C_k(\SSS,{\bf v},\infty)^{\perp}$ is also MDS or AMDS. 
    Let ${\bf c}^{\perp}$ be any nonzero codeword in $\C_k(\SSS,{\bf v},\infty)^{\perp}$. 
    By Theorem \ref{th.parity check matrix} 2), there exists a polynomial $g(x)\in \F_q[x]_{n-k+1}$ such that 
    ${\bf c}^{\perp}=(v_1'g(a_1), \ldots, v_n'g(a_n), -g_{n-k-1}-g_{n-k}\sum_{i=1}^na_i).$
    Since $\deg(g(x))\leq n-k$, we get $$\wt({\bf c}^{\perp})\geq n-(n-k)=k.$$ 
    It turns out that $d(\C_k(\SSS,{\bf v},\infty)^{\perp})\geq k$. 
    Again, according to the Singleton bound and similar arguments above, we know that $\C_k(\SSS,{\bf v},\infty)^{\perp}$ is MDS or AMDS. 
    Therefore, the expected result follows from the fact that the dual code of an MDS code is still MDS.  
    This completes the proof. 
\end{proof}

%\begin{theorem}\label{th.d bound} 
%    Let $\C_k(\SSS,{\bf v},\infty)$ be the $q$-ary linear code generated by $G_k$ given in Equation (\ref{eq.generator matrix}). 
%    Then $\C_k(\SSS,{\bf v},\infty)$ is MDS or AMDS. 
%\end{theorem}
%\begin{proof} 
%    Since $\GRS_{k+1}({\bf a},{\bf v},\infty)$ is an $[n+1,k+1,n-k+1]_q$ MDS code for given $\bf{a}$ and ${\bf v}$, 
%    it follows from $\C_k(\SSS,{\bf v},\infty)\subseteq \GRS_{k+1}({\bf a},{\bf v},\infty)$ that $d(\C_k(\SSS,{\bf v},\infty))\geq n-k+1$. 
%    On the other hand, $d(\C_k(\SSS,{\bf v},\infty))\leq n-k+2$ according to the well-known Singleton bound. 
%    By definitions, we have that $\C_k(\SSS,{\bf v},\infty)$ is AMDS if $d(\C_k(\SSS,{\bf v},\infty))=n-k+1$ and 
%    MDS if $d(\C_k(\SSS,{\bf v},\infty))=n-k+2$. This completes the proof. 
%\end{proof}

%According to Theorem \ref{th.length and dimension}, $\C_k(\SSS,{\bf v},\infty)$ is an $[n+1,k,n-k+1]_q$ AMDS code or an $[n+1,k,n-k+2]_q$ MDS code. 
Motivated by Theorem \ref{th.must be MDS or NMDS}, we can further explore under what conditions $\C_k(\SSS,{\bf v},\infty)$ is MDS or NMDS.  
A complete characterization based on certain zero-sum conditions is obtained in the following theorem. 

\begin{theorem}\label{th.MDS condition}
%Let $n$ and $k$ be positive integers such that $n\geq k-1$. 
Let $\SSS=\{a_1,\ldots,a_{n}\}\subseteq \F_q$ and $\C_k(\SSS,{\bf v},\infty)$ be the $q$-ary linear code 
generated by $G_k$ given in Equation (\ref{eq.generator matrix}). 
Then the following statements hold. 
\begin{enumerate}
    \item [\rm 1)] $\C_k(\SSS,{\bf v},\infty)$ is an $[n+1,k,n-k+2]_q$ MDS code if and only if $\SSS$ is $k$-zero-sum free. 
    \item [\rm 2)] $\C_k(\SSS,{\bf v},\infty)$ is an $[n+1,k,n-k+1]_q$ NMDS code if and only if $\SSS$ contains a $k$-zero-sum subset. 
\end{enumerate}
\end{theorem}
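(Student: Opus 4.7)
The two parts are complementary: by Theorem \ref{th.must be MDS or NMDS}, $\C_k(\SSS,\bv,\infty)$ must be either MDS or NMDS, so once 1) is established, 2) follows by negation. Hence the plan is to prove 1) directly via Lemma \ref{lem.MDS}, i.e., show that every $k$ columns of $G_k$ are linearly independent if and only if $\SSS$ is $k$-zero-sum free.

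First I will split the argument into two cases according to whether the last column $(0,\ldots,0,1)^T$ of $G_k$ is selected. In the case where the last column is included, the chosen $k\times k$ submatrix has a single nonzero entry in its last column; expanding along that column reduces the determinant to a Vandermonde-type determinant multiplied by a product of nonzero $v_{i_j}$'s. Since the $a_{i_j}$ are pairwise distinct, this determinant is always nonzero, so these choices never obstruct the MDS property.

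The interesting case is when we pick $k$ columns from the first $n$ columns, indexed by $i_1<\cdots<i_k$. Factoring the nonzero $v_{i_j}$'s out of each column, the $k\times k$ submatrix has the rows $1,a_{i_j},\ldots,a_{i_j}^{k-2},a_{i_j}^{k}$, i.e., the missing exponent is $k-1$. This is exactly the determinant evaluated in Lemma \ref{lem.det} (with $n$ replaced by $k$), giving
\[
\det = v_{i_1}\cdots v_{i_k}\left(\sum_{s=1}^{k} a_{i_s}\right)\prod_{1\le r<s\le k}(a_{i_s}-a_{i_r}).
\]
Since the $a_{i_j}$ are distinct and each $v_{i_j}\neq 0$, this determinant vanishes precisely when $a_{i_1}+\cdots+a_{i_k}=0$, i.e., when $\{a_{i_1},\ldots,a_{i_k}\}$ is a $k$-zero-sum subset of $\SSS$.

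Combining both cases, every $k$ columns of $G_k$ are linearly independent if and only if no $k$-subset of $\SSS$ sums to zero; by Lemma \ref{lem.MDS} this is exactly the MDS criterion, proving 1). Part 2) then follows immediately from Theorem \ref{th.must be MDS or NMDS}, since failing to be MDS forces the code to be NMDS. The main (very mild) obstacle is just verifying that the determinant identity of Lemma \ref{lem.det} applies as written; everything else is bookkeeping with Vandermonde expansions.
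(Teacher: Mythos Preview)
Your proposal is correct and follows essentially the same approach as the paper: split into the two column-selection cases, use Lemma~\ref{lem.det} to evaluate the determinant when the last column is absent (yielding the factor $\sum_{s} a_{i_s}$), observe the other case is a nonvanishing Vandermonde expansion, and then invoke Lemma~\ref{lem.MDS} and Theorem~\ref{th.must be MDS or NMDS}. The paper presents precisely these two matrices $M_{k\times k}^{(1)}$ and $M_{k\times k}^{(2)}$ and computes their determinants in the same way.
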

\begin{proof}
    1) 
    Let $\mathbf{c}_{i_1},  \ldots, \mathbf{c}_{i_k}$ be any $k$ columns of the matrix $G_k$.  
    Then $\mathbf{c}_{i_1},  \ldots, \mathbf{c}_{i_k}$ can form a $k\times k$ matrix of the form 
$$
M_{k\times k}^{(1)}=
\begin{pmatrix}
    v_{i_1}  & \ldots & v_{i_k} \\ 
    v_{i_1}a_{i_1}  & \ldots & v_{i_k} a_{i_k}\\
    \vdots &  \ddots & \vdots\\
    v_{i_1}a_{i_1}^{k-2}  & \ldots & v_{i_k} a_{i_k}^{k-2}\\
    v_{i_1}a_{i_1}^k  & \ldots & v_{i_k} a_{i_k}^k\\
\end{pmatrix}
~{\rm or}
$$
$$
M_{k\times k}^{(2)}=
\begin{pmatrix}
    v_{i_1} &  \ldots & v_{i_{k-1}} & 0 \\ 
    v_{i_1}a_{i_1}  & \ldots & v_{i_{k-1}} a_{i_{k-1}} & 0 \\
    \vdots &  \ddots & \vdots & \vdots \\
    v_{i_1}a_{i_1}^{k-2}  & \ldots & v_{i_{k-1}} a_{i_{k-1}}^{k-2} & 0 \\
    v_{i_1}a_{i_1}^k &  \ldots & v_{i_{k-1}} a_{i_{k-1}}^k & 1 \\
\end{pmatrix}. 
$$
It follows from Lemma \ref{lem.det} that 
$$\det \left(M_{k\times k}^{(1)}\right)=(a_{i_1}+\ldots+a_{i_k})\prod_{j=1}^{k}v_{i_j} \prod_{1\le s<t\le k}(a_{i_t}-a_{i_s})$$ 
$${\rm and}~\det \left(M_{k\times k}^{(2)}\right)=\prod_{j=1}^{k-1}v_{i_j} \prod_{1\le s<t\le k-1}(a_{i_t}-a_{i_s}).$$
On one hand, $\det \left(M_{k\times k}^{(2)}\right)\neq 0$ since $v_{i_j}\neq 0$ and $a_{i_s}\neq a_{i_t}$ for any $1\leq j \leq k$ and $1\leq s\neq t\leq k$.  
On the other hand, we have that $\det \left(M_{k\times k}^{(1)}\right)\neq 0$ if and only if $a_{i_1}+\ldots+a_{i_k}\neq 0$. 
Then the desired result 1) follows from Lemma \ref{lem.MDS} and Theorem \ref{th.length and dimension}. 

2) Combining Theorem \ref{th.must be MDS or NMDS} and the result 1) above, we immediately get the expected result 2). 
This completes the proof. 
\end{proof}

\begin{remark}\label{rem.111} 
    Note that Zhu and Liao recently studied in \cite{ZL2024} some properties of the so-called (+)-extended TGRS code 
    $\C_{1,k-1,k,n}(\SSS, {\bf v}, \eta, \infty)$, where $\eta\in \F_q^*$. 
    According to \cite[Definitions 2.2 and 2.3]{ZL2024} and \cite[Remark 2.1]{ZL2024}, 
    it can be seen that $\C_k(\SSS,{\bf v},\infty)$ and $\C_{1,k-1,k,n}({\bf a}, {\bf v}, \eta, \infty)$ are different and not equivalent in general. 
    It should be emphasized that lots of concrete examples and comparisons can be found in Remark \ref{rem.weight distribution} below. 
\end{remark}

\begin{remark}\label{rem.1} $\quad$
    \begin{itemize}
        \item [\rm 1)]   
        In a natural way, one can further consider the linear code $\C_{k}(\SSS,{\bf v},\infty)_\mu$ 
        generated by the matrix obtaining by deleting the $\mu$-th row of the generator matrix $G_{\EGRS_{k+1}}$ 
        given in Equation (\ref{eq.EGRS.generator matrix}), where $1\leq \mu\leq k-1$. 
        Note also that it is trivial to delete the last row, $i.e.,$ the $(k+1)$-th row of $G_{\EGRS_{k+1}}$. 
        Similarly to Theorem \ref{th.length and dimension}, 
        we can deduce that $\C_{k}(\SSS,{\bf v},\infty)_\mu$ has parameters $[n+1,k]_q$ for any $1\leq \mu\leq k-1$. 
        Furthermore, for any $1\leq \mu\leq k-1$, 
        using Lemma \ref{lem.MDS} again, we can obtain that 
        $\C_{k}(\SSS,{\bf v},\infty)_{\mu}$ is an $[n+1,k,n-k+2]_q$ MDS code 
        if and only if 
        \begin{align*}
            & \sum_{I\subseteq \SSS_k,~|I|=k-\mu+1}\prod_{a\in I}a \neq 0~{\text{and}} \\
            & \sum_{T\subseteq \SSS_{k-1},~|T|=k-\mu}\prod_{b\in T}b \neq 0 
        \end{align*}
        hold simultaneously for any subsets $\SSS_k=\{a_{i_1},\ldots,a_{i_k}\}\subseteq \SSS~{\text{and}}~\SSS_{k-1}=\{b_{i_1},\ldots,b_{i_{k-1}}\}\subseteq \SSS.$
        However, other results of $\C_k(\SSS,{\bf v},\infty)$ obtained in this subsection can not be easily generalized to $\C_{k}(\SSS,{\bf v},\infty)_\mu$. 

        \item [\rm 2)]  In fact, we even can not obtain new NMDS codes from $\C_{k}(\SSS,{\bf v},\infty)_\mu$ in many cases. 
        After plenty of Magma \cite{Magma} experiments, we find that most of them are not NMDS codes and 
        others are equivalent to $\C_k(\SSS,{\bf v},\infty)$. 
        Verified by Magma \cite{Magma}, two specific examples are listed in Table \ref{tab:newadd1}, 
        where the notation ``$\cong$'' indicates that two linear codes are equivalent.
        Additionally, if we delete more rows, then the situations will be more complicated 
        and numerical results similar to Table \ref{tab:newadd1} will happen more frequently. 
        In summary, these cases are beyond the scope of {\bf Problems \ref{prob1}} and {\bf \ref{prob2}},  
        and hence, we do not discuss them in detail below.       
    \end{itemize}
\end{remark}

\begin{table}[h!]
    \centering
    \caption{{Some specific examples of $\C_{k}(\SSS,{\bf v},\infty)_\mu$ discussed in Remark \ref{rem.1}}}\label{tab:newadd1}       % Give a unique label
    % For LaTeX tables use
    \vspace{-2mm}
    \scalebox{0.8}{\begin{tabular}{cccl}
     \hline 
     \multicolumn{4}{l}{$\SSS=\F_{13}\setminus \{4,6,10,11\}$,~$k=4$,~{\text{and any}}~${\bf v}\in (\F_{13}^*)^{9}$}        \\\hline \hline 
    $\mu$ & $\C_{4}(\SSS,{\bf v},\infty)_\mu$ & $(\C_{4}(\SSS,{\bf v},\infty)_\mu)^{\perp}$ & New NMDS codes? \\ \hline  
    1 & $[10,4,6]_{13}$ & $[10,6,1]_{13}$ & No (not NMDS) \\
    2 & $[10,4,6]_{13}$ & $[10,6,4]_{13}$ & No ($\C_{4}(\SSS,{\bf v},\infty)_2\cong  \C_4(\SSS,{\bf v},\infty)$)\\
    3 & $[10,4,6]_{13}$ & $[10,6,3]_{13}$ & No (not NMDS)\\\hline

    \multicolumn{4}{l}{$\SSS'=\F_{17}\setminus \{8,16\}$,~$k=5$,~{\text{and any}}~${\bf v}'\in (\F_{17}^*)^{15}$}\\\hline \hline 
    $\mu$ & $\C_{5}(\SSS',{\bf v}',\infty)_\mu$ & $(\C_{5}(\SSS',{\bf v}',\infty)_\mu)^{\perp}$ & New NMDS codes? \\ \hline  
    1 & $[16,5,11]_{17}$ & $[16,11,1]_{17}$ & No (not NMDS) \\
    2 & $[16,5,11]_{17}$ & $[16,11,5]_{17}$ & No ($\C_{5}(\SSS',{\bf v}',\infty)_2\cong \C_5(\SSS',{\bf v}',\infty)$)\\
    3 & $[16,5,11]_{17}$ & $[16,12,4]_{17}$ & No (not NMDS)\\ 
    4 & $[16,5,11]_{17}$ & $[16,12,4]_{17}$ & No (not NMDS)\\\hline
    \end{tabular}}
\end{table}

\subsection{Non-GRS properties of $\C_k(\SSS,{\bf v},\infty)$}

In this subsection, we determine the non-GRS properties of $\C_k(\SSS,{\bf v},\infty)$.  
Up to equivalence, it suffices to consider $\C_k(\SSS, {\bf 1},\infty)$, where ${\bf 1}$ is a all-one vector. 
To this end, we focus on the Schur products ${\C_k(\SSS, {\bf 1},\infty)^2}$ and 
${(\C_k(\SSS,{\bf 1},\infty)^{\perp})^2}$.

\begin{theorem}\label{th.square C}
    Let ${\C_k(\SSS,{\bf 1},\infty)}$ be the $q$-ary linear code generated by $G_k$ given in Equation (\ref{eq.generator matrix}). 
    Then the following statements hold. 
    \begin{enumerate}
        \item [\rm 1)] If $3\leq k< \frac{n}{2}$, then 
            $${\C_k(\SSS, {\bf 1},\infty)^2}=\C_{2k}(\SSS, {\bf 1}, \infty).$$

        \item [\rm 2)] If $\frac{n+1}{2}\leq k\leq n-2$, then 
            $${\C_k(\SSS, {\bf 1},\infty)^2}=\F_q^{n+1}.$$

        \item [\rm 3)] If $n$ is even and $n=2k$, then 
            $${\C_k(\SSS, {\bf 1},\infty)^2}=\{(g(a_1),\ldots,g(a_n),g_{2k}): ~g(x)\in \mathcal{V}_{2k} \}.$$

    \end{enumerate} 
\end{theorem}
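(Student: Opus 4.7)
My plan is to compute all the pairwise Schur products of the rows of $G_k$ explicitly and then read off each of the three cases from the resulting spanning set. First, I denote the rows of $G_k$ by $\mathbf{g}_1, \ldots, \mathbf{g}_k$, where $\mathbf{g}_i = (a_1^{i-1}, \ldots, a_n^{i-1}, 0)$ for $1 \leq i \leq k-1$ and $\mathbf{g}_k = (a_1^k, \ldots, a_n^k, 1)$. By bilinearity of the Schur product, $\C_k(\SSS, {\bf 1},\infty)^2$ is spanned by the products $\mathbf{g}_i \star \mathbf{g}_j$ with $1 \leq i \leq j \leq k$. A direct computation gives $\mathbf{g}_i \star \mathbf{g}_j = (a_1^{i+j-2}, \ldots, a_n^{i+j-2}, 0)$ for $1 \leq i \leq j \leq k-1$, $\mathbf{g}_i \star \mathbf{g}_k = (a_1^{i+k-1}, \ldots, a_n^{i+k-1}, 0)$ for $1 \leq i \leq k-1$, and $\mathbf{g}_k \star \mathbf{g}_k = (a_1^{2k}, \ldots, a_n^{2k}, 1)$. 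A short combinatorial check confirms that the exponents appearing in the vectors with last coordinate $0$ fill the set $\{0, 1, \ldots, 2k-4\} \cup \{k, k+1, \ldots, 2k-2\}$, which for every $k \geq 3$ collapses to $\{0, 1, \ldots, 2k-2\}$ (since then $k \leq 2k-3$). Hence $\C_k(\SSS, {\bf 1},\infty)^2$ is spanned by the $2k$ vectors $(a_1^\ell, \ldots, a_n^\ell, 0)$ for $0 \leq \ell \leq 2k-2$ together with $(a_1^{2k}, \ldots, a_n^{2k}, 1)$.

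For part 1), under the hypothesis $3 \leq k < n/2$, the doubled parameter $2k$ still fits the range in which Definition \ref{def.codes} applies, and the spanning set above coincides precisely with the rows of the generator matrix $G_{2k}$ of $\C_{2k}(\SSS, {\bf 1}, \infty)$; therefore $\C_k(\SSS, {\bf 1},\infty)^2 = \C_{2k}(\SSS, {\bf 1}, \infty)$. For part 3), where $n = 2k$, the very same spanning set instead describes the set $\{(g(a_1), \ldots, g(a_n), g_{2k}) : g \in \mathcal{V}_{2k}\}$: each vector $(a_1^\ell, \ldots, a_n^\ell, 0)$ with $0 \leq \ell \leq 2k-2$ is the evaluation triple attached to $g(x) = x^\ell$ (for which the coefficient $g_{2k}$ is zero), while $(a_1^{2k}, \ldots, a_n^{2k}, 1)$ is the evaluation triple attached to $g(x) = x^{2k}$; these monomials exhaust the basis $\{1, x, \ldots, x^{2k-2}, x^{2k}\}$ of $\mathcal{V}_{2k}$, so the claimed equality is immediate.

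For part 2), the hypothesis $(n+1)/2 \leq k \leq n - 2$ gives $2k - 2 \geq n - 1$, so the exponent set $\{0, 1, \ldots, 2k-2\}$ already contains $\{0, 1, \ldots, n-1\}$. Since $a_1, \ldots, a_n$ are distinct, the $n \times n$ Vandermonde matrix with rows $(a_1^\ell, \ldots, a_n^\ell)$ for $0 \leq \ell \leq n-1$ is invertible, so the lifted vectors $(a_1^\ell, \ldots, a_n^\ell, 0)$ span the hyperplane $\F_q^n \times \{0\}$ of $\F_q^{n+1}$. Adjoining $(a_1^{2k}, \ldots, a_n^{2k}, 1)$, whose last coordinate equals $1 \neq 0$, produces a vector outside that hyperplane, so the full span is $\F_q^{n+1}$, as claimed.

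The main obstacle, as I see it, is the combinatorial identification of the exponent set: one must argue that the two arithmetic progressions $\{0, \ldots, 2k-4\}$ and $\{k, \ldots, 2k-2\}$ merge without leaving a gap, which reduces to the inequality $k \leq 2k - 3$ (equivalently $k \geq 3$) --- precisely the lower bound imposed in Definition \ref{def.codes}. Once this merging is in place, the remainder is essentially Vandermonde-based linear algebra together with careful bookkeeping about where $2k - 2$ lies relative to $n - 1$.
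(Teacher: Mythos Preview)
Your proposal is correct and follows essentially the same approach as the paper: both compute the pairwise Schur products of the rows of $G_k$, identify the resulting span as $\langle (a_1^\ell,\ldots,a_n^\ell,0):0\leq \ell\leq 2k-2\rangle + \langle (a_1^{2k},\ldots,a_n^{2k},1)\rangle$, and then treat the three cases according to where $2k-2$ lies relative to $n-1$. If anything, you are slightly more explicit than the paper in justifying why the exponent sets $\{0,\ldots,2k-4\}$ and $\{k,\ldots,2k-2\}$ merge into $\{0,\ldots,2k-2\}$ (the paper simply asserts this equality), and in invoking the Vandermonde determinant for part~2).
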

\begin{proof} 
    Denote ${\bf a}^i=(a_1^i,\ldots,a_n^i)$ and $({\bf a}^i, b)=(a_1^i,\ldots,a_n^i,b)$ for any positive integer $i$ and $b\in \{0,1\}$. 
    By definitions, we have 
    \begin{align}\label{eq. square}
        \small
        \begin{split}
            & {\C_k(\SSS, {\bf 1},\infty)^2} \\ 
            = & \langle ({\bf a}^i,0)\star ({\bf a}^j,0), ({\bf a}^i,0)\star ({\bf a}^k,1), ({\bf a}^k,1)\star ({\bf a}^j,0),\\ 
              & ~~({\bf a}^k,1)\star ({\bf a}^k,1):~0\leq i,j\leq k-2 \rangle \\ 
            = & \langle ({\bf a}^{i+j},0), ({\bf a}^{i+k},0), ({\bf a}^{k+j},0), ({\bf a}^{2k},1):\\
             &~~0\leq i,j\leq k-2 \rangle \\ 
            = & \langle ({\bf a}^{u},0), ({\bf a}^{2k},1):~0\leq u\leq 2k-2 \rangle.  
        \end{split}
    \end{align}

    1) If  $3\leq k<\frac{n}{2}$, then $3< 2k< n$.   
    From Equation (\ref{eq. square}) and Definition \ref{def.codes}, we can further conclude that 
    $
             {\C_k(\SSS, {\bf 1},\infty)^2}  
          = \{(f(a_1),\ldots,f(a_n),f_{2k}):~f(x)\in \mathcal{V}_{2k}\}  
          = \C_{2k}(\SSS, {\bf 1}, \infty). 
    $
    This completes the proof of the {expected} result 1). 

    2) If  $\frac{n+1}{2}\leq k\leq n-2$, then $2k\geq n+1$.   
    On one hand, it is clear that  ${\C_k(\SSS, {\bf 1},\infty)^2}\subseteq \F_q^{n+1}.$ 
    On the other hand, since $2k\geq n+1$, then $2k-2\geq n-1$, which implies that 
    $
        \F_q^n\times \{0\}=\langle ({\bf a}^{\ell},0):~0\leq \ell\leq n-1\rangle 
                          \subseteq \langle ({\bf a}^{u},0):~0\leq u\leq 2k-2 \rangle. 
    $
    Moreover, we have  
    $\F_q^{n+1}=\langle \F_q^n\times \{0\}, ({\bf a}^{2k},1)\rangle
    \subseteq \langle ({\bf a}^{u},0), ({\bf a}^{2k},1):~0\leq u\leq 2k-2 \rangle=\C_k(\SSS, {\bf 1},\infty)^2.$ 
    Combining these two aspects, the desired result 2) holds. 

    3) If $n$ is even and $n=2k$, 
    it follows from Equation (\ref{eq. square}) that the result 3) holds directly.

    We have finished the whole proof. 
\end{proof}

\begin{theorem}\label{th.square dual C}
    Let ${\C_k(\SSS,{\bf 1},\infty)}$ be the $q$-ary linear code generated by $G_k$ given in Equation (\ref{eq.generator matrix}). 
    Then the following statements hold. 
    \begin{enumerate}
        \item [\rm 1)] If $3\leq k\leq \frac{n+1}{2}$, then 
            $${(\C_k(\SSS,{\bf 1},\infty)^{\perp})^2}=\F_q^{n+1}.$$

        \item [\rm 2)] If $\frac{n+1}{2}<k\leq n-2$, then $${(\C_k(\SSS,{\bf 1},\infty)^{\perp})^2}=\C_1+\C_{2n-2k+1},$$ 
        where  $\C_1=\langle (0,\ldots,0,1) \rangle$ and 
        $\C_{2n-2k+1}=
        \{( u_1^2g(a_1),\ldots,u_n^2g(a_n),  0 ):~g(x)\in \F_q[x]_{2n-2k+1}\}.$
    \end{enumerate} 
\end{theorem}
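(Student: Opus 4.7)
The plan is to work with an explicit spanning set for $\C_k(\SSS,{\bf 1},\infty)^{\perp}$ and then grind out the pairwise Schur products. By Theorem \ref{th.parity check matrix} with $\bv={\bf 1}$ (so $v_i'=u_i$), the dual code is spanned by the rows
$(u_1a_1^i,\ldots,u_na_n^i,0)$ for $0\le i\le n-k-2$, together with the two ``anomalous'' vectors
$\alpha=(u_1a_1^{n-k-1},\ldots,u_na_n^{n-k-1},-1)$ and
$\beta=(u_1a_1^{n-k},\ldots,u_na_n^{n-k},-\sum_{i=1}^n a_i)$.
Writing $\ccc^{\ell}:=(u_1^2a_1^{\ell},\ldots,u_n^2a_n^{\ell})$, every Schur product of two generators has the form $(\ccc^{\ell},*)$, and the last coordinate is nonzero only when both factors lie in $\{\alpha,\beta\}$.

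A direct enumeration then shows that $(\C_k(\SSS,{\bf 1},\infty)^{\perp})^2$ is spanned by the vectors $(\ccc^{\ell},0)$ for $0\le \ell\le 2n-2k-2$ (which comes from $(u\cdot a^i,0)\star(u\cdot a^j,*)$ with $0\le i\le n-k-2$ and $0\le j\le n-k$), together with $(\ccc^{2n-2k-2},1)$, $(\ccc^{2n-2k-1},\sum_i a_i)$, and $(\ccc^{2n-2k},(\sum_i a_i)^2)$ coming from the three products $\alpha\star\alpha$, $\alpha\star\beta$, and $\beta\star\beta$. Subtracting $(\ccc^{2n-2k-2},0)$ from $(\ccc^{2n-2k-2},1)$ extracts the tail unit vector $(0,\ldots,0,1)$, which in turn lets me clean up the last entries of the remaining two anomalous vectors to produce $(\ccc^{2n-2k-1},0)$ and $(\ccc^{2n-2k},0)$. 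Hence $(\C_k(\SSS,{\bf 1},\infty)^{\perp})^2 = \langle (0,\ldots,0,1)\rangle + \langle (\ccc^{\ell},0):0\le\ell\le 2n-2k\rangle$.

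The split into the two cases is now an exercise in bounding $2n-2k$ against $n-1$. For Part~1, $k\le (n+1)/2$ gives $2n-2k\ge n-1$, so the powers $\ccc^{0},\ldots,\ccc^{n-1}$ all appear; since the matrix $(u_i^2 a_i^{\ell})_{i,\ell}$ is a Vandermonde matrix pre-multiplied by $\diag(u_i^2)$ with $u_i^2\in\F_q^*$, it is invertible, hence $\langle(\ccc^{\ell},0):0\le\ell\le n-1\rangle=\F_q^n\times\{0\}$, and adjoining $(0,\ldots,0,1)$ yields $\F_q^{n+1}$. For Part~2, $k>(n+1)/2$ forces $2k\ge n+2$ and $2n-2k+1\le n-1<n$, so the evaluation map from $\F_q[x]_{2n-2k+1}$ is injective; thus $\langle (\ccc^{\ell},0):0\le\ell\le 2n-2k\rangle$ equals $\C_{2n-2k+1}$ as defined in the statement, and combining with $\C_1=\langle(0,\ldots,0,1)\rangle$ gives the claimed identity. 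Both inclusions are already visible from the enumeration above.

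The main obstacle I expect is purely bookkeeping: keeping track of which exponents $\ell$ are produced by which pair of basis vectors, and verifying that the upper/lower indices exactly fill $\{0,1,\ldots,2n-2k\}$ after the cancellations (in particular, that the hypothesis $k\le n-2$ guarantees $n-k-1\le 2n-2k-3$ so that the three subranges glue into a single interval). Once this combinatorics is settled, the Vandermonde argument in Part~1 and the dimension count in Part~2 are immediate.
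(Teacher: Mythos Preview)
Your proposal is correct and follows essentially the same route as the paper: compute the Schur square from the parity-check rows of Theorem~\ref{th.parity check matrix}, extract the tail vector $(0,\ldots,0,1)$ by cancelling $(\ccc^{2n-2k-2},0)$ against $(\ccc^{2n-2k-2},1)$, and then reduce to $\langle(0,\ldots,0,1)\rangle+\langle(\ccc^{\ell},0):0\le\ell\le 2n-2k\rangle$ before splitting on whether $2n-2k\ge n-1$. The only minor difference is that you spell out the Vandermonde argument and the gluing of index ranges more explicitly than the paper, which simply refers back to the analogous step in Theorem~\ref{th.square C}~2).
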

\begin{proof} 
    Denote ${\bf u}=(u_1,\ldots,u_n)$, ${\bf h}^r=(u_1a_1^r, \ldots,u_na_n^r)$ and 
    $({\bf h}^r, e)=(u_1a_1^r, \ldots,u_na_n^r,e)$ for any positive integer $r$ and $e\in \{0,-1,-\sum_{i=0}^{n}a_i\}$. 
    By definitions and Theorem \ref{th.parity check matrix}, we have 
    \begin{align}\label{eq. square dual}
        \small
        \begin{split}
            & {(\C_k(\SSS,{\bf 1},\infty)^{\perp})^2} \\ 
            = &  \langle ({\bf h}^r,0)\star ({\bf h}^w,e),~ ({\bf h}^{n-k-1},-1)\star ({\bf h}^{n-k-1},-1),~\\ 
              & ~~({\bf h}^{n-k-1},-1)\star ({\bf h}^{n-k},-\sum_{i=1}^{n}a_i), ({\bf h}^{n-k},-\sum_{i=1}^{n}a_i)\star \\
              & ~~({\bf h}^{n-k},-\sum_{i=1}^{n}a_i):~0\leq r\leq n-k-2,\\
              & ~~0\leq w\leq n-k~{\rm and}~e\in \{0,-1,-\sum_{i=0}^{n}a_i\} \rangle  \\
            = & \langle ({\bf u\star h}^{z},0), ({\bf u\star h}^{2n-2k-2},1), ({\bf u\star h}^{2n-2k-1},\sum_{i=1}^{n}a_i), \\ 
              & ~~({\bf u\star h}^{2n-2k},(\sum_{i=1}^{n}a_i)^2):~0\leq z\leq 2n-2k-2 \rangle \\ 
            = & \langle ({\bf u\star h}^{z},0), (0,\ldots,0,1), ({\bf u\star h}^{2n-2k-1},\sum_{i=1}^{n}a_i), \\
              & ~~({\bf u\star h}^{2n-2k},(\sum_{i=1}^{n}a_i)^2):~0\leq z\leq 2n-2k-2 \rangle \\ 
            = & \langle ({\bf u\star h}^{z},0), (0,\ldots,0,1), ({\bf u\star h}^{2n-2k-1},0), \\
              & ~~({\bf u\star h}^{2n-2k},0):~0\leq z\leq 2n-2k-2 \rangle\\ 
            = & \langle ({\bf u\star h}^{z},0):~0\leq z\leq 2n-2k \rangle + \langle (0,\ldots,0,1) \rangle.    
        \end{split}
    \end{align}

    1) If  $3\leq k\leq \frac{n+1}{2}$, then $2n-2k\geq n-1$.   
    By taking a very similar argument to the proof of Theorem \ref{th.square C} 2), 
    we immediately have ${(\C_k(\SSS,{\bf 1},\infty)^{\perp})^2}=\F_q^{n+1}.$ 

    2) If  $\frac{n+1}{2}<k\leq n-2$, then $4\leq 2n-2k\leq n-2<n-1$.    
    By Equation (\ref{eq. square dual}), the desired result 2) obviously holds. This completes the proof. 
\end{proof}

Based on Theorems \ref{th.square C} and \ref{th.square dual C}, we now can completely determine the non-GRS properties of $\C_k(\SSS, {\bf v},\infty)$ as follows. 

\begin{theorem}\label{th.non-GRS}
    Let $\C_k(\SSS,{\bf v},\infty)$ be the $q$-ary linear code generated by $G_k$ given in Equation (\ref{eq.generator matrix}). 
    %Let $\SSS=\{a_1,\ldots,a_{n}\}\subseteq \F_q$. 
    Then $\C_k(\SSS,{\bf v},\infty)$ is non-GRS. 
\end{theorem}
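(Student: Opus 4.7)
The plan is to assume for contradiction that $\C_k(\SSS,{\bf v},\infty)$ is equivalent to a GRS code, then compute the dimension of either $\C_k(\SSS,{\bf 1},\infty)^2$ or $(\C_k(\SSS,{\bf 1},\infty)^{\perp})^2$ and compare it against the formula forced by Lemma \ref{lem.GRS square dimension}. Since monomial equivalence $\D=\sigma\cdot\diag({\bf w})\cdot \C$ sends the Schur square to $\sigma\cdot\diag({\bf w}^2)\cdot \C^2$ and thus preserves its dimension, I may immediately reduce to the case ${\bf v}={\bf 1}$ and work with the squares already computed in Theorems \ref{th.square C} and \ref{th.square dual C}.

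I would split the argument by the size of $k$. When $3\leq k\leq n/2$, Theorem \ref{th.square C} Part 1) gives $\C_k(\SSS,{\bf 1},\infty)^2=\C_{2k}(\SSS,{\bf 1},\infty)$, which has dimension $2k$ by Theorem \ref{th.length and dimension}; for the remaining boundary case $n=2k$ (even $n$), Part 3) together with a quick injectivity check for the evaluation map on $\mathcal{V}_{2k}$ (any element of $\mathcal{V}_{2k}$ with $g_{2k}=0$ has degree at most $2k-2<n$, hence cannot vanish at all $a_i$) again yields dimension $2k$. Since $3\leq k<(n+2)/2$ holds throughout, Lemma \ref{lem.GRS square dimension} forces $\dim \C^2=2k-1$ in the GRS case, contradicting the actual value $2k$.

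When $n/2<k\leq n-2$, I apply Lemma \ref{lem.GRS square dimension} instead to $\C^{\perp}$, which is a GRS code of dimension $n+1-k$ with $3\leq n+1-k<(n+2)/2$ (the upper bound using $k>n/2$), so the Lemma predicts $\dim(\C^{\perp})^2=2(n+1-k)-1$. For $n/2<k\leq(n+1)/2$ (possible only for odd $n$ with $k=(n+1)/2$), Theorem \ref{th.square dual C} Part 1) gives $(\C^{\perp})^2=\F_q^{n+1}$ of dimension $n+1$, differing from the predicted $n$. For $(n+1)/2<k\leq n-2$, Part 2) gives $(\C^{\perp})^2=\C_1+\C_{2n-2k+1}$; these two pieces have disjoint supports on the last coordinate, so the sum is direct and has dimension $1+(2n-2k+1)=2n-2k+2$, exactly one more than the GRS prediction $2n-2k+1$. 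A contradiction results in every subcase.

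The main obstacle is the bookkeeping around the boundary $k\approx n/2$: I need to verify that the three cases of Theorem \ref{th.square C} together with the two cases of Theorem \ref{th.square dual C} cover every admissible $k\in[3,n-2]$, and that in each case the applicable range condition of Lemma \ref{lem.GRS square dimension} (either on $\C$ or on $\C^{\perp}$) is satisfied. The contradictions themselves come with a convenient safety margin, since in every case the actual Schur-square dimension differs from the GRS prediction by exactly $\pm 1$, ruling out any accidental equality.
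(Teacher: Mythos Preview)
Your proof is correct and follows the same overall Schur-square strategy as the paper, but there are two genuine differences worth noting.

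First, the paper opens with a preliminary split: if $\SSS$ contains a $k$-zero-sum subset then $\C_k(\SSS,{\bf v},\infty)$ is NMDS by Theorem~\ref{th.MDS condition}, hence automatically non-GRS, and only the MDS case requires the Schur-square machinery. You skip this split entirely and run the Schur-square argument uniformly; this is legitimate because Theorems~\ref{th.square C} and~\ref{th.square dual C} hold regardless of whether the code is MDS, and it slightly streamlines the argument.

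Second, and more substantively, for the range $k>n/2$ the paper does \emph{not} compare dimensions. Instead it invokes Lemma~\ref{lem.GRS square distance}: since $(0,\ldots,0,1)$ lies in $(\C_k(\SSS,{\bf 1},\infty)^{\perp})^2$ by Theorem~\ref{th.square dual C}, one has $d((\C^{\perp})^2)=1<2$, contradicting the GRS property of $\C^{\perp}$. Your route applies Lemma~\ref{lem.GRS square dimension} to $\C^{\perp}$ and computes $\dim((\C^{\perp})^2)$ exactly, which requires the extra (easy) verification that $\C_1\cap\C_{2n-2k+1}=\{0\}$ and that $\dim\C_{2n-2k+1}=2n-2k+1$. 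The paper's distance argument is a bit quicker since it needs only the single weight-one vector, while your dimension argument has the advantage of using only Lemma~\ref{lem.GRS square dimension} throughout and making the ``off by one'' discrepancy explicit in every case. Both approaches are valid.
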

\begin{proof}
    Up to equivalence, it is sufficient to consider the case where ${\bf v}={\bf 1}$. 
    The proof can be divided into two parts. 

    {\textbf{Case 1:}} $\SSS$ contains a $k$-zero-sum subset. It follows from Theorem \ref{th.MDS condition} 2) that $\C_k(\SSS, {\bf 1},\infty)$ is an NMDS code. 
    Then $\C_k(\SSS, {\bf 1},\infty)$ is clearly non-GRS. 
    
    {\textbf{Case 2:}} $\SSS$ is $k$-zero-sum-free. It follows from Theorem \ref{th.MDS condition} 1) that $\C_k(\SSS,{\bf v},\infty)$ is an MDS code.  
    Since $3\leq k\leq n-2$, we have two subcases. 
    \begin{itemize}
        \item []{\textbf{Subcase 2.1:}} $3\leq k<\frac{n+1}{2}$. 
        Combining Theorems \ref{th.length and dimension} and \ref{th.square C}, 
        we know that $\dim(\C_k(\SSS, {\bf 1},\infty)^2)=2k> 2k-1.$ 
        By Lemma \ref{lem.GRS square dimension}, $\C_k(\SSS, {\bf 1},\infty)$ is non-GRS. 

        \item []{\textbf{Subcase 2.2:}} $\frac{n+1}{2}\leq k\leq n-2$. 
        Then $3\leq n+1-k\leq \frac{n+1}{2}<\frac{n+2}{2}$.
        According to Theorem \ref{th.square dual C}, 
        we know that $d((\C_k(\SSS, {\bf 1},\infty)^{\perp})^2)=1<2.$ 
        It turns out from Lemma \ref{lem.GRS square distance} that $\C_k(\SSS, {\bf 1},\infty)^{\perp}$ is non-GRS. 
        Recall that the dual code of a GRS code is still a GRS code. Therefore, $\C_k(\SSS, {\bf 1},\infty)$ is non-GRS. 
    \end{itemize}
    Combining {\textbf{Cases 1}} and {\textbf{2}}, we complete the proof.  
\end{proof}

\begin{remark}\label{rem.1112} 
    Combining with Theorems \ref{th.must be MDS or NMDS} and \ref{th.non-GRS} as well as Remark \ref{rem.111}, 
    we can conclude that $\C_{k}(\SSS,{\bf v},\infty)$ is either a non-GRS MDS code or an NMDS code, 
    and it is different from TGRS codes in general. 
    In other words, $\C_{k}(\SSS,{\bf v},\infty)$ has provided an affirmative answer to {\bf Problem \ref{prob1}}. 
\end{remark}

\subsection{Weight distributions of $\C_k(\SSS,{\bf v},\infty)$}

In this subsection, we calculate the weight distributions of $\C_k(\SSS,{\bf v},\infty)$. 
By Theorem \ref{th.must be MDS or NMDS}, we know that $\C_k(\SSS,{\bf v},\infty)$ must be MDS or NMDS. 
Note that all MDS codes have the same and known weight distribution for the same parameters \cite{HP2003}, 
while NMDS codes may have different weight distributions even if their parameters are the same \cite{defect,DL1994}. 
For this reason, we only focus on the weight distributions of NMDS codes from the class of linear codes $\C_k(\SSS,{\bf v},\infty)$. 

\begin{theorem}\label{th.NMDS weight distribution}
    Let $\SSS=\{a_1,\ldots,a_{n}\}\subseteq \F_q$ and $\C_k(\SSS,{\bf v},\infty)$ be the $q$-ary linear code 
    generated by $G_k$ given in Equation (\ref{eq.generator matrix}). 
        If $\SSS$ contains a $k$-zero-sum subset, %then $\C_k(\SSS,{\bf v},\infty)$ is an $[n+1,k,n-k+1]_q$ NMDS code,  
    then the weight distribution of $\C_k(\SSS,{\bf v},\infty)$ is 
    given by Equation (\ref{eq.weight.1}) 
    % The previous equation was number seven.
    % Account for the double column equations here.
    \addtocounter{equation}{1}  
    and the weight distribution of $\C_k(\SSS,{\bf v},\infty)^{\perp}$ is 
    given by Equation (\ref{eq.weight.2}). 
        % The previous equation was number eight.
    % Account for the double column equations here.
    \addtocounter{equation}{1}  
\end{theorem}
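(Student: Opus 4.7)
Since $\SSS$ contains a $k$-zero-sum subset, Theorem \ref{th.MDS condition} 2) tells us that $\C_k(\SSS,{\bf v},\infty)$ is an $[n+1,k,n-k+1]_q$ NMDS code. By Lemma \ref{lem.NMDS weight distribution} (applied with length $n+1$), the entire weight distributions of both $\C_k(\SSS,{\bf v},\infty)$ and its dual are uniquely determined by the single quantity $A_{n-k+1}$, which equals both the number of minimum-weight codewords of $\C_k(\SSS,{\bf v},\infty)$ and (up to the identity $A_{n-k+1}=A^{\perp}_k$) the number of minimum-weight codewords of $\C_k(\SSS,{\bf v},\infty)^{\perp}$. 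So the plan reduces to computing $A_{n-k+1}$ explicitly.

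To count $A_{n-k+1}$, take an arbitrary codeword
\[
{\bf c}=(v_1f(a_1),\ldots,v_nf(a_n),f_k), \quad f(x)\in \mathcal{V}_k,
\]
and consider two cases according to whether $f_k=0$. If $f_k=0$, then since $f(x)\in \mathcal{V}_k$ also forces $f_{k-1}=0$, we have $\deg f(x)\leq k-2$, so $f$ has at most $k-2$ roots in $\SSS$; a nonzero ${\bf c}$ in this case has weight $\geq n-(k-2)=n-k+2$, strictly larger than $n-k+1$. Thus every minimum-weight codeword must satisfy $f_k\neq 0$, in which case $\deg f(x)=k$ and $\wt({\bf c})=n-k+1$ forces $f(x)$ to have exactly $k$ distinct roots in $\SSS$. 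Write $f(x)=f_k\prod_{i\in I}(x-a_i)$ for a $k$-subset $I\subseteq\{1,\ldots,n\}$. The condition $f_{k-1}=0$ (built into the definition of $\mathcal{V}_k$) forces the coefficient of $x^{k-1}$ in $\prod_{i\in I}(x-a_i)$ to vanish, which is exactly the condition $\sum_{i\in I}a_i=0$, i.e.\ $\{a_i:i\in I\}$ is a $k$-zero-sum subset of $\SSS$. Conversely, each such $I$ together with any $f_k\in \F_q^*$ produces a distinct minimum-weight codeword. Therefore
\[
A_{n-k+1}=(q-1)\cdot N(k,0,\SSS).
\]

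With $A_{n-k+1}$ in hand, the formulas in Equations (\ref{eq.weight.1}) and (\ref{eq.weight.2}) follow directly by plugging this value into Lemma \ref{lem.NMDS weight distribution} with parameters $(n+1,k)$ in place of $(n,k)$; the low-weight vanishing $A_0=A^{\perp}_0=1$, $A_1=\cdots=A_{n-k}=0$, $A^{\perp}_1=\cdots=A^{\perp}_{k-1}=0$, and the duality identity $A_{n-k+1}=A^{\perp}_k$ are guaranteed by the NMDS property. The only genuine content is the counting argument above, whose crux is recognising that the missing $x^{k-1}$-coefficient in $\mathcal{V}_k$ is exactly the obstruction that turns an arbitrary root-set of size $k$ into a zero-sum set; this is the only step where I anticipate any subtlety, and it is short.
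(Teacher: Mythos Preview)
Your proposal is correct and follows essentially the same approach as the paper: reduce to computing $A_{n-k+1}$ via Lemma~\ref{lem.NMDS weight distribution}, then count minimum-weight codewords by observing that $f_k\neq 0$ forces $f$ to split over a $k$-subset of $\SSS$ whose vanishing $x^{k-1}$-coefficient is exactly the zero-sum condition, yielding $A_{n-k+1}=(q-1)N(k,0,\SSS)$. Your explicit case split on $f_k=0$ versus $f_k\neq 0$ is slightly more detailed than the paper's compressed ``if and only if'' statement, but the argument is the same.
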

\begin{figure*}[!t]
	% ensure that we have normalsize text
	\normalsize
	% Store the current equation number.
	\setcounter{MYtempeqncnt}{\value{equation}}
	% Set the equation number to one less than the one
	% desired for the first equation here.
	% The value here will have to changed if equations
	% are added or removed prior to the place these
	% equations are referenced in the main text.
	\setcounter{equation}{7}
    \begin{align}\label{eq.weight.1}
        \small
        A_{i}=\left\{\begin{array}{ll}
            1, & {\rm if}~i=0, \\ 
            0, & {\rm if}~i\in \{1,2,\ldots,n-k\}, \\  
            (q-1)N(k,0,\SSS), & {\rm if}~i=n-k+1, \\ 
            \binom{n+1}{k-s}\sum_{j=0}^{s-1}(-1)^j \binom{n-k+1+s}{j}(q^{s-j}-1)+(-1)^s\binom{k}{s}A_{n-k+1}, & {\rm if}~i=n-k+s+1,~s\in \{1,2,\ldots,k\}. 
        \end{array} \right.
    \end{align}
	% Restore the current equation number.
	\setcounter{equation}{\value{MYtempeqncnt}}
	% The IEEE uses as a separator
	\hrulefill
	% The spacer can be tweaked to stop underfull vboxes.
	\vspace*{4pt}
\end{figure*}

\vspace{-4mm}
\begin{figure*}[!t]
	% ensure that we have normalsize text
	\normalsize
	% Store the current equation number.
	\setcounter{MYtempeqncnt}{\value{equation}}
	% Set the equation number to one less than the one
	% desired for the first equation here.
	% The value here will have to changed if equations
	% are added or removed prior to the place these
	% equations are referenced in the main text.
	\setcounter{equation}{8}
    \begin{align}\label{eq.weight.2}
        \small
        A_{i}^{\perp}=\left\{\begin{array}{ll}
            1, & {\rm if}~i=0, \\ 
            0, & {\rm if}~i\in \{1,2,\ldots,k-1\}, \\  
            (q-1)N(k,0,\SSS), & {\rm if}~i=k, \\ 
            \binom{n+1}{k+s}\sum_{j=0}^{s-1}(-1)^j \binom{k+s}{j}(q^{s-j}-1)+(-1)^s\binom{k}{s}A_{k}^{\perp}, & {\rm if}~i=k+s,~s\in \{1,2,\ldots,n-k+1\}. 
        \end{array} \right.
    \end{align}
	% Restore the current equation number.
	\setcounter{equation}{\value{MYtempeqncnt}}
	% The IEEE uses as a separator
	\hrulefill
	% The spacer can be tweaked to stop underfull vboxes.
	\vspace*{4pt}
\end{figure*}

\begin{proof}
    By Theorem \ref{th.MDS condition} 2), $\C_k(\SSS,{\bf v},\infty)$ is an $[n+1,k,n-k+1]_q$ NMDS code. 
    Firstly, we determine the number of codewords in $\C_k(\SSS,{\bf v},\infty)$ with the minimum weight $n-k+1$, 
    $i.e.$, the value of $A_{n-k+1}$. 
    Let ${\bf c}$ be any codeword in $\C_k(\SSS,{\bf v},\infty)$. 
    By Definition \ref{def.codes}, there exists a polynomial $f(x)=\sum_{i=0}^{k-2}f_ix^i+f_kx^k\in \mathcal{V}_k$ such that 
    $${\bf c}=(v_1f(a_1),\ldots,v_nf(a_n),f_k).$$ 
    By a direct observation, it is not difficult to deduce that 
    ${\wt({\bf c})}=n-k+1~{\rm if~and~only~if}~\sharp\{a\in \SSS:~f(a)=0\}=k~{\rm and}~f_k\neq 0,$
    which implies that 
    \begin{align*}
        \small
        & A_{n-k+1} \\
    = & \sharp\{\mathcal{V}_k \cap \{f(x)=\lambda \prod_{a\in \SSS'}(x-a):\\
     &~~~~~~~~~~\lambda\in \F_q^*,~\SSS'\subseteq \SSS~{\rm and}~\sharp \SSS'=k \} \} \\
    = & \sharp\{\SSS': -\lambda \sum_{a\in \SSS'}a=0,~\lambda\in \F_q^*,~\SSS'\subseteq \SSS~{\rm and}~\sharp \SSS'=k\} \\ 
    = &  (q-1)\sharp\{\SSS':\sum_{a\in \SSS'}a=0,~\SSS'\subseteq \SSS~{\rm and}~\sharp \SSS'=k\} \\
    = &  (q-1)N(k,0,\SSS),  
    \end{align*}
    where the second equation holds since 
    \begin{footnotesize}
    \begin{equation*}
        \prod_{a\in \SSS'}(x-a)=x^k-\sum_{a\in \SSS'}ax^{k-1}+ \sum_{i=0}^{k-2}(-1)^{k-i} \sum_{T\subseteq \SSS', \sharp T=k-i} \prod_{a\in T}ax^i.
    \end{equation*} 
    \end{footnotesize}

    Then the desired result follows straightforward from Lemma \ref{lem.NMDS weight distribution}. 
\end{proof}

Furthermore, we have the following explicit results over $\F_q^*$ and $\F_q$ according to Lemma \ref{lem.subset sum}. 

\begin{corollary}\label{coro.NMDS Fq*}
    Let $\C_k(\F_q^*,{\bf v},\infty)$ be the $q$-ary linear code generated by $G_k$ given in Equation (\ref{eq.generator matrix}). 
    If $k\neq q-2$ and $(p,k)\neq (2,q-3)$, then the weight distribution of $\C_k(\F_q^*,{\bf v},\infty)$ is 
    given by Equation (\ref{eq.weight.3}) 
    % The previous equation was number nine.
    % Account for the double column equations here.
    \addtocounter{equation}{1}  
    and the weight distribution of $\C_k(\F_q^*,{\bf v},\infty)^{\perp}$ is 
    given by Equation (\ref{eq.weight.4}). 
    % The previous equation was number ten.
    % Account for the double column equations here.
    \addtocounter{equation}{1}  
\end{corollary}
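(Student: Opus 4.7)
The plan is to specialize Theorem \ref{th.NMDS weight distribution} to $\SSS = \F_q^*$ (so that $n = q-1$) and then substitute the explicit closed form for $N(k, 0, \F_q^*)$ supplied by Lemma \ref{lem.subset sum} 1). The first task is to confirm that under the hypotheses $k \neq q-2$ and $(p,k) \neq (2, q-3)$ the set $\F_q^*$ actually contains a $k$-zero-sum subset, so that Theorem \ref{th.NMDS weight distribution} is applicable. Since $\sum_{a \in \F_q^*} a = 0$ for $q > 2$, a $k$-subset of $\F_q^*$ sums to $0$ if and only if its $(q-1-k)$-complement does. The two pathological complementary sizes are $1$ and $2$: a singleton of $\F_q^*$ can never sum to zero, and a distinct pair $\{a,b\} \subseteq \F_q^*$ sums to zero precisely when $-a \neq a$, which fails exactly in characteristic $2$. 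These two obstructions correspond to the excluded cases $k = q-2$ and $(p,k) = (2, q-3)$; in all other cases within the admissible range $3 \leq k \leq q-3$, positivity of $N(k, 0, \F_q^*)$ can be read off from the closed form in Lemma \ref{lem.subset sum} 1), and Theorem \ref{th.MDS condition} 2) ensures that $\C_k(\F_q^*, {\bf v}, \infty)$ is NMDS.

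Having secured NMDS status, I would apply Theorem \ref{th.NMDS weight distribution} with $n = q-1$. This produces $A_{n-k+1} = (q-1) N(k, 0, \F_q^*)$ and $A_k^{\perp} = (q-1) N(k, 0, \F_q^*)$ directly, with the remaining frequencies $A_{n-k+s+1}$ and $A_{k+s}^{\perp}$ determined by the alternating-sum recursions already baked into Equations (\ref{eq.weight.1}) and (\ref{eq.weight.2}). Substituting the Lemma \ref{lem.subset sum} 1) expression for $N(k, 0, \F_q^*)$ throughout then yields Equations (\ref{eq.weight.3}) and (\ref{eq.weight.4}).

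The main obstacle is bookkeeping rather than substance: one must propagate the sign $(-1)^{k + \lfloor k/p \rfloor}$ and the binomial $\binom{q/p - 1}{\lfloor k/p \rfloor}$ consistently through the alternating sums, so that Equations (\ref{eq.weight.3}) and (\ref{eq.weight.4}) emerge in the stated compact form. No essentially new idea beyond Theorem \ref{th.NMDS weight distribution} and Lemma \ref{lem.subset sum} 1) is required.
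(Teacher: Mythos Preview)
Your proposal is correct and follows essentially the same approach as the paper: verify that $N(k,0,\F_q^*)>0$ under the stated hypotheses so that Theorem~\ref{th.NMDS weight distribution} applies, then substitute the closed form from Lemma~\ref{lem.subset sum}~1). Your complementary-subset argument for positivity is more detailed than the paper's (which simply asserts ``it is easily verified that $N(k,0,\F_q^*)>0$''), but the overall structure is identical.
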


\begin{figure*}[!t]
	% ensure that we have normalsize text
	\normalsize
	% Store the current equation number.
	\setcounter{MYtempeqncnt}{\value{equation}}
	% Set the equation number to one less than the one
	% desired for the first equation here.
	% The value here will have to changed if equations
	% are added or removed prior to the place these
	% equations are referenced in the main text.
	\setcounter{equation}{9}
    \begin{align}\label{eq.weight.3}
        \small
        A_{i}=\left\{\begin{array}{ll}
            1, & {\rm if}~i=0, \\
            0, & {\rm if}~i\in \{1,2,\ldots,q-k-1\}, \\ 
            \frac{q-1}{q}\left[\binom{q-1}{k}+(-1)^{k+\lfloor \frac{k}{p} \rfloor}(q-1)\binom{\frac{q}{p}-1}{\lfloor \frac{k}{p} \rfloor} \right], & {\rm if}~i=q-k, \\ 
            \binom{q}{k-s}\sum_{j=0}^{s-1}(-1)^j \binom{q-k+s}{j}(q^{s-j}-1)+(-1)^s \binom{k}{s}A_{q-k}, & {\rm if}~i=q-k+s,~s\in \{1,2,\ldots,k\}. 
        \end{array} \right.
    \end{align}
	% Restore the current equation number.
	\setcounter{equation}{\value{MYtempeqncnt}}
	% The IEEE uses as a separator
	\hrulefill
	% The spacer can be tweaked to stop underfull vboxes.
	\vspace*{4pt}
\end{figure*}

\begin{figure*}[!t]
	% ensure that we have normalsize text
	\normalsize
	% Store the current equation number.
	\setcounter{MYtempeqncnt}{\value{equation}}
	% Set the equation number to one less than the one
	% desired for the first equation here.
	% The value here will have to changed if equations
	% are added or removed prior to the place these
	% equations are referenced in the main text.
	\setcounter{equation}{10}
    \begin{align}\label{eq.weight.4}
        \small
        A_{i}^{\perp}=\left\{\begin{array}{ll}
            1, & {\rm if}~i=0, \\
            0, & {\rm if}~i\in \{1,2,\ldots,k-1\}, \\ 
            \frac{q-1}{q}\left[\binom{q-1}{k}+(-1)^{k+\lfloor \frac{k}{p} \rfloor}(q-1)\binom{\frac{q}{p}-1}{\lfloor \frac{k}{p} \rfloor} \right], & {\rm if}~i=k, \\ 
            \binom{q}{k+s}\sum_{j=0}^{s-1}(-1)^j \binom{k+s}{j}(q^{s-j}-1)+(-1)^s\binom{k}{s}A_{k}^{\perp}, & {\rm if}~i=k+s,~s\in \{1,2,\ldots,q-k\}. 
        \end{array} \right.
    \end{align}
	% Restore the current equation number.
	\setcounter{equation}{\value{MYtempeqncnt}}
	% The IEEE uses as a separator
	\hrulefill
	% The spacer can be tweaked to stop underfull vboxes.
	\vspace*{4pt}
\end{figure*}

\begin{proof}
    Since $3\leq k\leq q-2$, it is easily verified that $N(k,0,\F_q^*)>0$ if $k\neq q-2$ and $(p,k)\neq (2,q-3)$, which implies that $\SSS$ contains a  $k$-zero-sum subset. 
    Combining Lemma \ref{lem.subset sum} 1) and Theorem \ref{th.NMDS weight distribution}, the desired result holds. 
    This completes the proof. 
\end{proof}

%Similar to Corollary \ref{coro.NMDS Fq*}, we have another corollary. 

\begin{corollary}\label{coro.NMDS Fq}
    Let $\C_k(\F_q,{\bf v},\infty)$ be the $q$-ary linear code generated by $G_k$ given in Equation (\ref{eq.generator matrix}). 
    If $(p,k)\neq (2,q-2)$, then the weight distribution of $\C_k(\F_q,{\bf v},\infty)$ is 
    given by Equation (\ref{eq.weight.5}) 
        % The previous equation was number 11.
    % Account for the double column equations here.
    \addtocounter{equation}{1}  
    and the weight distribution of $\C_k(\F_q,{\bf v},\infty)^{\perp}$ is 
    given by Equation (\ref{eq.weight.6})
    % The previous equation was number 12.
    % Account for the double column equations here.
    \addtocounter{equation}{1}  
\end{corollary}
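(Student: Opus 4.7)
The plan is to mirror the proof of Corollary \ref{coro.NMDS Fq*}, adapted to $\SSS = \F_q$ with Lemma \ref{lem.subset sum} 2) replacing Lemma \ref{lem.subset sum} 1). Once I can confirm that $\F_q$ contains a $k$-zero-sum subset under the standing hypothesis $3 \leq k \leq q-2$ and $(p, k) \neq (2, q-2)$, Theorem \ref{th.NMDS weight distribution} applies directly to $\SSS = \F_q$, and the two claimed weight distributions of $\C_k(\F_q, {\bf v}, \infty)$ and $\C_k(\F_q, {\bf v}, \infty)^{\perp}$ follow by substituting the explicit closed form of $N(k, 0, \F_q)$ furnished by Lemma \ref{lem.subset sum} 2) term by term.

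Thus the only real work is to verify that $N(k, 0, \F_q) > 0$ under the hypothesis. I would split on whether $p$ divides $k$. If $p \nmid k$, Lemma \ref{lem.subset sum} 2) yields $N(k, 0, \F_q) = \binom{q}{k}/q$, which is positive throughout the range $3 \leq k \leq q-2$. If $p \mid k$, I write $k = p\ell$ and observe that the sign factor equals $(-1)^{k+k/p} = (-1)^{(p+1)\ell}$; whenever $p$ is odd, $(p+1)\ell$ is even, so both summands inside the bracket of Lemma \ref{lem.subset sum} 2) are non-negative and $N > 0$ automatically. The only remaining subcase is $p = 2$ with $\ell$ odd, equivalently $k \equiv 2 \pmod 4$.

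The main obstacle lies precisely in this last subcase, where I must establish the strict inequality $\binom{q}{k} > (q-1)\binom{q/2}{k/2}$ for every admissible $k$ with $k \equiv 2 \pmod 4$ and $k \neq q-2$. The excluded pair $(p, k) = (2, q-2)$ is genuinely tight there, since direct computation shows that both sides equal $q(q-1)/2$ and hence $N(q-2, 0, \F_q) = 0$, which justifies precisely why the hypothesis must exclude this pair. I would prove the inequality by first invoking the Vandermonde identity to get $\binom{q}{k} \geq \binom{q/2}{k/2}^2$, and then noting that under $p = 2$, $k \equiv 2 \pmod 4$, and $k \leq q-2$ with $k \neq q-2$ the constraints on $q = 2^m$ force $k \leq q - 6$, hence $3 \leq k/2 \leq q/2 - 3$, whence $\binom{q/2}{k/2} \geq \binom{q/2}{3}$. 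An elementary estimate shows $\binom{q/2}{3} > q - 1$ once $q \geq 16$, while for $q = 8$ no admissible $k$ remains after excluding $k = q - 2 = 6$. This completes the positivity verification, and the remainder of the argument is a mechanical substitution into Theorem \ref{th.NMDS weight distribution}.
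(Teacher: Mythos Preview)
Your proposal is correct and follows the same overall route as the paper: verify that $\F_q$ contains a $k$-zero-sum subset, then invoke Theorem~\ref{th.NMDS weight distribution} with $\SSS=\F_q$ and substitute the closed form of $N(k,0,\F_q)$ from Lemma~\ref{lem.subset sum}~2). The paper's own proof is a one-liner that cites \cite[Corollary~1]{HF2023} for the positivity $N(k,0,\F_q)>0$ and then appeals to the argument of Corollary~\ref{coro.NMDS Fq*}; you instead supply a self-contained elementary verification of that positivity via the sign analysis and the Vandermonde bound $\binom{q}{k}\ge\binom{q/2}{k/2}^2$. Both approaches are valid; yours avoids the external reference at the cost of a short combinatorial computation, and it also makes transparent exactly why the exceptional pair $(p,k)=(2,q-2)$ must be excluded. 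One small expository point: when you pass from ``$p$ odd'' directly to ``the only remaining subcase is $p=2$ with $\ell$ odd,'' you are silently using that for $p=2$ and $\ell$ even the exponent $(p+1)\ell=3\ell$ is again even, so the sign is $+1$ and the same positivity conclusion holds; it would be cleaner to say this explicitly.
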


\begin{figure*}[!t]
	% ensure that we have normalsize text
	\normalsize
	% Store the current equation number.
	\setcounter{MYtempeqncnt}{\value{equation}}
	% Set the equation number to one less than the one
	% desired for the first equation here.
	% The value here will have to changed if equations
	% are added or removed prior to the place these
	% equations are referenced in the main text.
	\setcounter{equation}{11}
    \begin{align}\label{eq.weight.5}
        \small
        A_{i}=\left\{\begin{array}{ll}
            1, & {\rm if}~i=0, \\
            0, & {\rm if}~i\in \{1,2,\ldots,q-k\}, \\ 
            \frac{q-1}{q}\binom{q}{k}, & {\rm if}~i=q-k+1~{\rm and}~p\nmid k, \\ 
            \frac{q-1}{q}\left[\binom{q}{k}+(-1)^{k+\frac{k}{p}} (q-1)\binom{\frac{q}{p}}{\frac{k}{p}} \right], & {\rm if}~i=q-k+1~{\rm and}~p\mid k,  \\
            \binom{q+1}{k-s}\sum_{j=0}^{s-1}(-1)^j \binom{q-k+1+s}{j}(q^{s-j}-1)+(-1)^s \binom{k}{s}A_{q-k+1}, & {\rm if}~i=q-k+s+1,~s\in \{1,2,\ldots,k\}. 
        \end{array} \right.
    \end{align}
	% Restore the current equation number.
	\setcounter{equation}{\value{MYtempeqncnt}}
	% The IEEE uses as a separator
	\hrulefill
	% The spacer can be tweaked to stop underfull vboxes.
	\vspace*{4pt}
\end{figure*}

\begin{figure*}[!t]
	% ensure that we have normalsize text
	\normalsize
	% Store the current equation number.
	\setcounter{MYtempeqncnt}{\value{equation}}
	% Set the equation number to one less than the one
	% desired for the first equation here.
	% The value here will have to changed if equations
	% are added or removed prior to the place these
	% equations are referenced in the main text.
	\setcounter{equation}{12}
    \begin{align}\label{eq.weight.6}
        \small
        A_{i}^{\perp}=\left\{\begin{array}{ll}
            1, & {\rm if}~i=0, \\
            0, & {\rm if}~i\in \{1,2,\ldots,k-1\}, \\ 
            \frac{q-1}{q}\binom{q}{k}, & {\rm if}~i=k~{\rm and}~p\nmid k, \\ 
            \frac{q-1}{q}\left[\binom{q}{k}+(-1)^{k+\frac{k}{p}} (q-1)\binom{\frac{q}{p}}{\frac{k}{p}} \right], & {\rm if}~i=k~{\rm and}~p\mid k, \\
            \binom{q+1}{k+s}\sum_{j=0}^{s-1}(-1)^j \binom{k+s}{j}(q^{s-j}-1)+(-1)^s\binom{k}{s}A_{k}^{\perp}, & {\rm if}~i=k+s,~s\in \{1,2,\ldots,q-k+1\}. 
        \end{array} \right.
    \end{align}
	% Restore the current equation number.
	\setcounter{equation}{\value{MYtempeqncnt}}
	% The IEEE uses as a separator
	\hrulefill
	% The spacer can be tweaked to stop underfull vboxes.
	\vspace*{4pt}
\end{figure*}

\begin{proof}
    Combining with \cite[Corollary 1]{HF2023}, by a similar argument to that of Corollary \ref{coro.NMDS Fq*}, the desired result holds.  
\end{proof}

In Remark \ref{rem.111}, we have pointed out that the class of linear codes $\C_k(\SSS,{\bf v},\infty)$ are not equivalent to 
the (+)-extended TGRS codes $\C_{1,k-1,k,n}(\SSS, {\bf v}, \eta, \infty)$ studied in \cite{ZL2024} generally. 
In the following remark, we present some detailed comparison results based on the weight distributions to support this statement.   
%We emphasize that these facts confirm again that (+)-extended TGRS codes are generally not equivalent to $\C_k(\SSS,{\bf v},\infty)$. 

\begin{remark}\label{rem.weight distribution} Some comparison remarks between results in \cite{ZL2024} and ours are given as follows. 
    \begin{enumerate}
        \item [\rm 1)] Compared to \cite[Theorem 3.3]{ZL2024}, the weight distributions of the NMDS (+)-extended TGRS codes depend on 
        the values of $N(k,-\eta^{-1},\SSS)$, while our weight distributions presented in Theorem \ref{th.NMDS weight distribution} are related to $N(k,0,\SSS)$. 
        Since $\eta\neq 0$, then $N(k,-\eta^{-1},\SSS)$ may not equal $N(k,0,\SSS)$. 
        Hence, $\C_k(\SSS,{\bf v},\infty)$ and $\C_{1,k-1,k,n}(\SSS, {\bf v}, \eta, \infty)$ may have different weight distributions. 
        Moreover,  the difference of $A_{n-k+1}=A_k^{\perp}$ in $\C_k(\SSS,{\bf v},\infty)$ and  $\C_{1,k-1,k,n}(\SSS, {\bf v}, \eta, \infty)$
        is  $$(q-1)\left(N(k,0,\SSS)-N(k,-\eta^{-1},\SSS)\right).$$
        %Also this fact implies that (+)-extended TGRS codes are generally not equivalent to $\C_k(\SSS,{\bf v},\infty)$. 

        \item [\rm 2)] Compared to \cite[Corollary 3.2]{ZL2024} ($\SSS=\F_q$), the $[q+1,k,q-k+1]_q$ NMDS codes derived from 
        the (+)-extended TGRS codes $\C_{1,k-1,k,n}({\F_q}, {\bf v}, \eta, \infty)$ 
        and the class of linear codes $\C_k(\F_q,{\bf v},\infty)$ 
        have the same weight distribution if $p\nmid k$ and have different weight distributions if $p\mid k$. 
        In fact, it can be verified that the value of $A_{q-k+1}=A_k^{\perp}$ between them differs by 
        $$(-1)^{k+\frac{k}{p}}(q-1)\binom{\frac{q}{p}}{\frac{k}{p}}~{\rm for}~ p\mid k.$$

        \item [\rm 3)] Compared to \cite[Corollary 3.3]{ZL2024} ($\SSS=\F_q^*$), the $[q,k,q-k]_q$ NMDS codes derived from 
        the (+)-extended TGRS codes $\C_{1,k-1,k,n}({\F_q^*}, {\bf v}, \eta, \infty)$ 
        and the class of linear codes $\C_k(\F_q^*,{\bf v},\infty)$ 
        have different weight distributions.  
        It is easily seen that the value of $A_{q-k}=A_k^{\perp}$ actually differs by 
        $$(-1)^{k+\lfloor \frac{k}{p} \rfloor}(q-1)\binom{\frac{q}{p}-1}{\lfloor \frac{k}{p} \rfloor}.$$
    \end{enumerate}
\end{remark}

In the following, we give three concrete examples to visualize the results obtained so far. 
The first one is a non-GRS  MDS code. 
The second and third ones are NMDS codes that are not equivalent to (+)-extended TGRS codes constructed in \cite{ZL2024}. 
%All these results have also been double-checked by Magma \cite{Magma}. 
% based on \cite[Lemma 10]{BPR2022}.   
%(a complex method based on determinants of $2$-order and $3$-order submatrices of a certain matrix associated with the standard generator matrix of a given linear code). 

\begin{example}\label{exam.1 MDS and NMDS codes from EGRS}
    Let $\omega$ be a primitive element of $\F_{16}$ satisfying $\w^4+\w+1=0$. 
    Let $\SSS=\{\omega, \omega^3, \omega^5, \omega^8, \omega^9, \omega^{11}, \omega^{13}\}\subseteq \F_{16}$. 
    Then it is easy to verify that $\SSS$ is $5$-zero-sum-free. 
    %$\SSS_1$ is $4$-zero-sum free and $\SSS_2$ contains a $4$-zero-sum subset such as $\{1, \omega, 2, \omega^5, \omega^3\}$. 
    Therefore, for any ${\bf v}\in (\F_{16}^*)^7$, it follows from Theorems \ref{th.length and dimension}, \ref{th.MDS condition} 1) and \ref{th.non-GRS} 
    that $\C_5(\SSS,{\bf v},\infty)$ is an $[8, 5, 4]_{16}$ non-GRS  MDS code. 
    Double-checked by Magma \cite{Magma}, all the statements are true. 
    %By Magma \cite{Magma}, the standard generator matrix of $\C_k(\SSS,{\bf v},\infty)$ is  
    %$$(I_5~~A)~{\rm with}~A=\left(\begin{array}{ccc}
    %    \omega^8 & \omega^8 & \omega^2 \\
    %    \omega^{13} & \omega^{10} & \omega^{13} \\
    %    \omega^2  &  1 & \omega^{10} \\
    %    \omega^2 & \omega^3 & \omega^4 \\
    %    \omega^{14} & \omega^9 & \omega^{13} 
    %\end{array}\right).$$ 
    %Double-checked by \cite[Lemma 10]{BPR2022}, $\C_5(\SSS,{\bf v},\infty)$ is indeed non-GRS since 
    % $$\det\left(\left(\begin{array}{ccc}
    %    (\omega^8)^{-1} & (\omega^8)^{-1} & (\omega^2)^{-1} \\
    %    (\omega^{13})^{-1} & (\omega^{10})^{-1} & (\omega^{13})^{-1} \\
    %    (\omega^2)^{-1}  &  1^{-1} & (\omega^{10})^{-1} 
    %\end{array}\right)\right)=\det\left(\left(\begin{array}{ccc}
    %    \omega^7 & \omega^7 & \omega^{13} \\
    %    \omega^{2} & \omega^{5} & \omega^{2} \\
    %    \omega^{13}  &  1 & \omega^{5}
    %\end{array}\right)\right)=\omega^{12}\neq 0.$$ 
\end{example}

\begin{example}\label{exam.2 MDS and NMDS codes from EGRS}
    Let $\omega$ be a primitive element of $\F_9$ satisfying $\w^2+2\w+2=0$. 
    %Let $\SSS_1=\F_9^*$ and $\SSS_2=\F_9$.  
    We have the following results.  
    \begin{itemize}
        \item [\rm 1)] For any vector ${\bf v}_1\in (\F_9^*)^8$, it follows from Corollary \ref{coro.NMDS Fq*} that 
        $\C_4(\F_9^*,{\bf v}_1, \infty)$ is a $[9, 4, 5]_9$ NMDS code 
        with the weight enumerator 
        \begin{small}
            \begin{align*}
                1+48z^5+480z^6+1152z^7+2616z^8+2264z^9.
            \end{align*}
        \end{small}
        %$1+48z^5+480z^6+1152z^7+2616z^8+2264z^9.$ 
        %{\color{red}$\dim(\Hull(C))=3$.}

        \item [\rm 2)] For any vector ${\bf v}_2\in (\F_9^*)^9$, it follows from Corollary \ref{coro.NMDS Fq} that 
        $\C_6(\F_9,{\bf v}_2, \infty)$ is a $[10, 6, 4]_9$ NMDS code 
        with the weight enumerator 
        \begin{small}
            \begin{align*}
                & 1+96z^4+1440z^5+8160z^6+38400z^7+\\
                & 115200z^8+204464z^9+163680z^{10}.
            \end{align*}
        \end{small}
        %$1+96z^4+1440z^5+8160z^6+38400z^7+115200z^8+204464z^9+163680z^{10}.$ 
        %{\color{red}$\dim(\Hull(C))=4$.}
    \end{itemize}
    In addition, both $\C_4(\F_9^*,{\bf v}_1, \infty)$ and $\C_6(\F_9,{\bf v}_2, \infty)$ are not equivalent to the (+)-extended TGRS codes 
    constructed in \cite{ZL2024} according to Remarks \ref{rem.weight distribution} 2) and 3). 
    Verified by Magma \cite{Magma}, all these results are true. 
\end{example}

%\subsection{Several new infinite families of NMDS codes with dimensions $5$ and $6$} 

Note also that Heng $et~al.$ \cite{HW2023} constructed three infinite families of $[q,k,q-k]_q$ NMDS codes for $k\in \{5,6\}$ and $q=2^m$, 
where $m\geq 5$ is odd (two classes) or $m\geq 4$ is an integer (one class). 
By taking $k=5$ and $6$ in Theorem \ref{th.must be MDS or NMDS}, respectively, we can immediately obtain infinite families of $[q,k,q-k]_q$ NMDS codes for any $q> 8$. 
Moreover, their weight distributions can be explicitly determined by Corollary \ref{coro.NMDS Fq*}. 
To emphasize the validity and importance of our results, we collect these NMDS codes in the following theorems.

\begin{theorem}\label{th.5 dimensional NMDS codes}
    Let $q=p^m>8$ and  $\C_5(\F_q^*,{\bf v},\infty)$ be the $q$-ary linear code generated by $G_5$ given in Equation (\ref{eq.generator matrix}). 
    Then $\C_5(\F_q^*,{\bf v},\infty)$ is a $[q,5,q-5]_q$ NMDS code with the following weight distribution.  
    %Moreover, the following statements hold.   
    \begin{enumerate}
        \item [\rm 1)] If $p=2$, then the weight distribution of $\C_5(\F_q^*,{\bf v},\infty)$ is given in Table \ref{tab:1}. 
        \item [\rm 2)] If $p\in \{3, 5\}$, then the weight distribution of $\C_5(\F_q^*,{\bf v},\infty)$ is given in Table \ref{tab:2}. 
        \item [\rm 3)] If $p\geq 7$, then the weight distribution of $\C_5(\F_q^*,{\bf v},\infty)$ is given in Table \ref{tab:3}. 
        
        \begin{table}
            \centering
            \caption{The weight distribution of the $[q,5,q-5]_q$ NMDS code $\C_5(\F_q^*,{\bf v},\infty)$ for $p=2$}\label{tab:1}       % Give a unique label
            \vspace{-2mm}
            % For LaTeX tables use
            \begin{tabular}{ll}
             \hline              
            Weight $i$ & Multiplicity $A_i$ \\ \hline \hline 
            $0$ &  $1$ \\ 
            $q-5$ & $(q-1)^2(q-2)(q-4)(q-8)/120$ \\ 
            $q-4$ & $(q-1)^2(q-2)(9q-32)/24$ \\ 
            $q-3$ & $(q-1)^2(q-2)(q^2-4q+32)/12$ \\ 
            $q-2$ & $(q-1)^2(2q^3+11q^2-20q+64)/12$ \\ 
            $q-1$ & $(q-1)(9q^4+9q^3+38q^2-24q+64)/24$ \\ 
            $q$ & $(q-1)(44q^4+25q^3+5q^2-10q+56)/120$ \\ \hline
        \end{tabular}

        %\begin{align*}
        %    \small
        %    A(z) = & 1 + \frac{(q-1)^2(q-2)(q-4)(q-8)}{120}z^{q-5} + \frac{(q-1)^2(q-2)(9q-32)}{24}z^{q-4} + \\ 
        %           & \frac{(q-1)^2(q^3-6q^2+40q-64)}{12}z^{q-3} + \frac{(q-1)^2(2q^3+11q^2-20q+64)}{12}z^{q-2} + \\
        %           & \frac{(q-1)(9q^4+9q^3+38q^2-24q+64)}{24}z^{q-1} + \frac{(q-1)(44q^4+25q^3+5q^2-10q+56)}{120}z^q. 
        %\end{align*}

        \vspace{1mm}

            \centering
            \caption{The weight distribution of the $[q,5,q-5]_q$ NMDS code $\C_5(\F_q^*,{\bf v},\infty)$ for $p\in \{3, 5\}$}\label{tab:2}
            \vspace{-2mm}
            % For LaTeX tables use
            \scalebox{0.7}{\begin{tabular}{ll}
             \hline              
            Weight $i$ & Multiplicity $A_i$ \\ \hline \hline 
            $0$ &  $1$ \\ 
            $q-5$ & $(q-1)^2(pq^3-14pq^2+71pq-154p+120)/(120p)$ \\ 
            $q-4$ & $(q-1)^2(9pq^2-65pq+154p-120)/(24p)$ \\ 
            $q-3$ & $(q-1)^2(pq^3-6pq^2+55pq-154p+120)/(12p)$ \\ 
            $q-2$ & $(q-1)^2(2pq^3+11pq^2-35pq+154p-120)/(12p)$ \\ 
            $q-1$ & $(q-1)(9pq^4+9pq^3+53pq^2-129pq+120q+154p-120)/(24p)$ \\ 
            $q$ & $(q-1)(44pq^4+25pq^3-10pq^2+95pq-120q-34p+120)/(120p)$ \\ \hline
        \end{tabular}}
        
        \vspace{1mm}

        %\begin{align*}
        %    \small
        %    A(z) = & 1 + \frac{(q-1)^2(pq^3-14pq^2+71pq-154p+120)}{120p}z^{q-5} + \\
        %           & \frac{(q-1)^2(9pq^2-65pq+154p-120)}{24p}z^{q-4} + \\ 
        %           & \frac{(q-1)^2(pq^3-6pq^2+55pq-154p+120)}{12p}z^{q-3} + \\
        %           & \frac{(q-1)^2(2pq^3+11pq^2-35pq+154p-120)}{12p}z^{q-2} + \\
        %           & \frac{(q-1)(9pq^4+9pq^3+53pq^2-129pq+120q+154p-120)}{24p}z^{q-1} + \\ 
        %           & \frac{(q-1)(44pq^4+25pq^3-10pq^2+95pq-120q-34p+120)}{120p}z^q. 
        %\end{align*}
            \centering
            \caption{The weight distribution of the $[q,5,q-5]_q$ NMDS code $\C_5(\F_q^*,{\bf v},\infty)$ for $p\geq 7$}\label{tab:3}
            \vspace{-2mm}
            % For LaTeX tables use
            \begin{tabular}{ll}
             \hline              
            Weight $i$ & Multiplicity $A_i$ \\ \hline \hline 
            $0$ &  $1$ \\ 
            $q-5$ & $(q-1)^2(q-7)(q^2-7q+22)/120$ \\ 
            $q-4$ & $(q-1)^2(9q^2-65q+154)/24$ \\ 
            $q-3$ & $(q-1)^2(q^3-6q^2+55q-154)/12$ \\ 
            $q-2$ & $(q-1)^2(2q^3+11q^2-35q+154)/12$ \\ 
            $q-1$ & $(q-1)(9q^4+9q^3+53q^2-129q+154)/24$ \\ 
            $q$ & $(q-1)(44q^4+25q^3-10q^2+95q-34)/120$ \\ \hline
        \end{tabular}
        \end{table}
    \end{enumerate}
\end{theorem}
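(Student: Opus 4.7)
The proof strategy is to specialize Corollary \ref{coro.NMDS Fq*} to $k=5$. First, I would verify the hypotheses of that corollary hold: since $q > 8$, the condition $k \neq q-2$ reduces to $q \neq 7$ (trivially true), and the condition $(p,k) \neq (2, q-3)$ (which at $k=5$ means $q \neq 8$ when $p=2$) also holds automatically. Thus Theorem \ref{th.MDS condition} 2) guarantees that $\C_5(\F_q^*,{\bf v},\infty)$ is a $[q,5,q-5]_q$ NMDS code (note $\F_q^*$ certainly contains a $5$-zero-sum subset under our assumptions), and Corollary \ref{coro.NMDS Fq*} applies directly.

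Next, I would split into the three cases according to the value of $\lfloor 5/p \rfloor$, which is the only place where the characteristic enters the minimum-weight count
\[
A_{q-5} = \frac{q-1}{q}\left[\binom{q-1}{5} + (-1)^{5+\lfloor 5/p \rfloor}(q-1)\binom{q/p - 1}{\lfloor 5/p \rfloor}\right].
\]
When $p=2$ we have $\lfloor 5/p \rfloor = 2$, sign $-1$, and $\binom{q/2-1}{2} = (q-2)(q-4)/8$; when $p \in \{3,5\}$ we have $\lfloor 5/p \rfloor = 1$, sign $+1$, and $\binom{q/p-1}{1} = q/p - 1$; when $p \geq 7$ we have $\lfloor 5/p \rfloor = 0$, sign $-1$, and $\binom{q/p-1}{0} = 1$. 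Substituting each into the displayed expression and simplifying yields the $A_{q-5}$ entries in Tables \ref{tab:1}, \ref{tab:2}, and \ref{tab:3}, respectively.

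Finally, for each $s \in \{1,2,3,4,5\}$, I would apply the recursive formula from Corollary \ref{coro.NMDS Fq*},
\[
A_{q-5+s} = \binom{q}{5-s}\sum_{j=0}^{s-1}(-1)^{j}\binom{q-5+s}{j}(q^{s-j}-1) + (-1)^{s}\binom{5}{s}A_{q-5},
\]
to compute the remaining entries of each table in turn. Because the ``non-$A_{q-5}$'' part of this recurrence depends only on $q$, $k=5$, and $s$ (not on $p$), the entries for $A_{q-4}, A_{q-3}, A_{q-2}, A_{q-1}, A_q$ have a common polynomial skeleton in $q$ to which a simple $(-1)^s\binom{5}{s}A_{q-5}$ correction is added; the three tables then differ only by this correction.

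The main obstacle is purely computational: it is simplifying the iterated binomial-and-alternating-sum expressions into the compact polynomial forms displayed in Tables \ref{tab:1}--\ref{tab:3}. Some care is needed in Tables \ref{tab:1} and \ref{tab:2} to keep factors of $p$ in the denominator until the end, in order to recognize the common form across $p\in\{3,5\}$ and to verify the factorization $(q-2)(q-4)(q-8)$ that appears when $p=2$. No new ideas beyond Corollary \ref{coro.NMDS Fq*} and Lemma \ref{lem.subset sum} are required.
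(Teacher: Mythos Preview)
Your proposal is correct and follows essentially the same approach as the paper: the paper's proof simply notes that $\lfloor 5/p\rfloor$ equals $2$, $1$, or $0$ according to $p=2$, $p\in\{3,5\}$, or $p\ge 7$, and then states that the result follows from Corollary~\ref{coro.NMDS Fq*} after a direct but tedious calculation. Your write-up just makes that calculation more explicit.
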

\begin{proof}
    Note that $\lfloor \frac{5}{p} \rfloor=2$ if $p=2$; $\lfloor \frac{5}{p} \rfloor=1$ if $p\in \{3, 5\}$; and $\lfloor \frac{5}{p} \rfloor=0$ if $p\geq 7$. 
    Then the desired results follow from Corollary \ref{coro.NMDS Fq*} after a direct but tedious calculation.  
\end{proof}

\begin{theorem}\label{th.6 dimensional NMDS codes}
    Let $q=p^m>8$ and $\C_6(\F_q^*,{\bf v},\infty)$ be the $q$-ary linear code generated by $G_6$ given in Equation (\ref{eq.generator matrix}). 
    Then $\C_6(\F_q^*,{\bf v},\infty)$ is a $[q,6,q-6]_q$ NMDS code with the following weight distribution. 
    \begin{enumerate}
        \item [\rm 1)] If $p=2$, then the weight distribution of $\C_6(\F_q^*,{\bf v},\infty)$ is given in Table \ref{tab:4}. 
        \item [\rm 2)] If $p=3$, then the weight distribution of $\C_6(\F_q^*,{\bf v},\infty)$ is given in Table \ref{tab:5}. 
        \item [\rm 3)] If $p=5$, then the weight distribution of $\C_6(\F_q^*,{\bf v},\infty)$ is given in Table \ref{tab:6}. 
        \item [\rm 4)] If $p\geq 7$, then the weight distribution of $\C_6(\F_q^*,{\bf v},\infty)$ is given in Table \ref{tab:7}. 
    \end{enumerate}
\end{theorem}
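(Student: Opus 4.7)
The plan is to invoke Corollary \ref{coro.NMDS Fq*} with $k=6$ and reduce the claim to a case-by-case computational verification. First I would check the hypotheses of that corollary are satisfied. Since $q > 8$ we have $q \geq 9$, so $q-2 \geq 7 > 6 = k$, giving $k \neq q-2$. The condition $(p,k)=(2,q-3)$ would force simultaneously $p=2$ and $q=9$, but $q=9$ has characteristic $3$, so this condition is vacuously avoided. Hence Corollary \ref{coro.NMDS Fq*} applies and $\C_6(\F_q^*,{\bf v},\infty)$ is a $[q,6,q-6]_q$ NMDS code.

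Next I would evaluate the characteristic-dependent quantity $\lfloor 6/p \rfloor$ appearing in Equation (\ref{eq.weight.3}): it equals $3$ when $p=2$, $2$ when $p=3$, $1$ when $p=5$, and $0$ when $p \geq 7$. Correspondingly, the sign $(-1)^{k+\lfloor k/p\rfloor}$ becomes $(-1)^{9}=-1$, $(-1)^{8}=1$, $(-1)^{7}=-1$, $(-1)^{6}=1$ in the four cases. Substituting these into
\[
A_{q-6} = \tfrac{q-1}{q}\!\left[\tbinom{q-1}{6} + (-1)^{6+\lfloor 6/p\rfloor}(q-1)\tbinom{q/p-1}{\lfloor 6/p\rfloor}\right]
\]
and expanding the binomials as polynomials in $q$ yields the closed-form entry $A_{q-6}$ listed in each of Tables \ref{tab:4}--\ref{tab:7}.

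The remaining weights $A_{q-6+s}$ for $s = 1, \ldots, 6$ would then be obtained by directly substituting this $A_{q-6}$ into the recursive formula
\[
A_{q-6+s} = \tbinom{q}{6-s}\sum_{j=0}^{s-1}(-1)^j\tbinom{q-6+s}{j}(q^{s-j}-1) + (-1)^s\tbinom{6}{s}A_{q-6},
\]
and simplifying. Each step is a finite combination of polynomial expressions in $q$, so the manipulations are purely mechanical.

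The principal obstacle is purely combinatorial bookkeeping, not conceptual: one must carefully expand six nested alternating sums of binomial coefficients against $q^{s-j}-1$, collect the terms by powers of $q$, and combine them with the correction term $(-1)^s\binom{6}{s}A_{q-6}$, repeating the procedure for all four characteristic regimes. To mitigate arithmetic error I would verify each closed-form expression at small admissible $q$ (e.g., $q=16,27,25,49$) against a direct Magma computation, and independently check the identity $\sum_i A_i = q^6$ in each table as a sanity check before declaring the calculation complete.
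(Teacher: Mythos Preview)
Your proposal is correct and follows exactly the paper's approach: the paper's proof simply notes the four values of $\lfloor 6/p\rfloor$ and appeals to Corollary~\ref{coro.NMDS Fq*} ``after a direct but tedious calculation.'' Your write-up is more explicit about verifying the hypotheses and about the bookkeeping involved, but the underlying argument is identical.
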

\begin{proof}
    Note that $\lfloor \frac{6}{p} \rfloor=3$ if $p=2$; $\lfloor \frac{6}{p} \rfloor=2$ if $p=3$; 
    $\lfloor \frac{6}{p} \rfloor=1$ if $p=5$;  and $\lfloor \frac{6}{p} \rfloor=0$ if $p\geq 7$.  
    Then the desired results again follow from Corollary \ref{coro.NMDS Fq*} after a direct but tedious calculation.  
\end{proof}

\begin{table}
    \centering
    \caption{The weight distribution of the $[q,6,q-6]_q$ NMDS code $\C_6(\F_q^*,{\bf v},\infty)$ for $p=2$}\label{tab:4}       % Give a unique label
    \vspace{-3mm}
    % For LaTeX tables use
    \scalebox{0.85}{\begin{tabular}{ll}
     \hline              
    Weight $i$ & Multiplicity $A_i$ \\ \hline \hline 
    $0$ &  $1$ \\ 
    $q-6$ & $(q-1)^2(q-2)(q-4)(q-6)(q-8)/720$ \\ 
    $q-5$ & $(q-1)^2(q-2)(q-4)(11q-48)/120$ \\ 
    $q-4$ & $(q-1)^2(q-2)(q^3-8q^2+74q-192)/48$ \\ 
    $q-3$ & $(q-1)^2(q-2)(2q^3+15q^2-44q+192)/36$ \\ 
    $q-2$ & $(q-1)^2(9q^4+16q^3+96q^2-160q+384)/48$ \\ 
    $q-1$ & $(q-1)(44q^5+71q^4+35q^3+250q^2-184q+384)/120$ \\ 
    $q$ & $(q-1)(265q^5+129q^4+100q^3+30q^2-140q+336)/720$ \\ \hline
\end{tabular}}

    \centering
    \caption{The weight distribution of the $[q,6,q-6]_q$ NMDS code $\C_6(\F_q^*,{\bf v},\infty)$ for $p=3$}\label{tab:5}
    \vspace{-3mm}
    % For LaTeX tables use
    \scalebox{0.85}{\begin{tabular}{ll}
     \hline              
    Weight $i$ & Multiplicity $A_i$ \\ \hline \hline 
    $0$ &  $1$ \\ 
    $q-6$ & $(q-1)^2(q-3)(q-6)(q^2-11q+38)/720$ \\ 
    $q-5$ & $(q-1)^2(q-3)(11q^2-96q+228)/120$ \\ 
    $q-4$ & $(q-1)^2(q-3)(q^3-7q^2+84q-228)/48$ \\ 
    $q-3$ & $(q-1)^2(2q^4+11q^3-89q^2+420q-684)/36$ \\ 
    $q-2$ & $(q-1)^2(9q^4+16q^3+111q^2-300q+684)/48$ \\ 
    $q-1$ & $(q-1)(44q^5+71q^4+20q^3+405q^2-624q+684)/120$ \\ 
    $q$ & $(q-1)(265q^5+129q^4+115q^3-125q^2+300q+36)/720$ \\ \hline
\end{tabular}}

\centering
\caption{The weight distribution of the $[q,6,q-6]_q$ NMDS code $\C_6(\F_q^*,{\bf v},\infty)$ for $p=5$}\label{tab:6}
\vspace{-3mm}
% For LaTeX tables use
\scalebox{0.85}{\begin{tabular}{ll}
 \hline              
Weight $i$ & Multiplicity $A_i$ \\ \hline \hline 
$0$ &  $1$ \\ 
$q-6$ & $(q-1)^2(q-5)(q^3-15q^2+80q-180)/720$ \\ 
$q-5$ & $(q-1)^2(11q^3-129q^2+556q-900)/120$ \\ 
$q-4$ & $(q-1)^2(q^4-10q^3+105q^2-520q+900)/48$ \\ 
$q-3$ & $(q-1)^2(2q^4+11q^3-89q^2+460q-900)/36$ \\ 
$q-2$ & $(q-1)^2(9q^4+16q^3+111q^2-340q+900)/48$ \\ 
$q-1$ & $(q-1)(44q^5+71q^4+20q^3+445q^2-880q+900)/120$ \\ 
$q$ & $(q-1)(265q^5+129q^4+115q^3-165q^2+556q-180)/720$ \\ \hline
\end{tabular}}

\centering
    \caption{The weight distribution of the $[q,6,q-6]_q$ NMDS code $\C_6(\F_q^*,{\bf v},\infty)$ for $p\geq 7$}\label{tab:7}
    \vspace{-3mm}
    % For LaTeX tables use
    \scalebox{0.85}{\begin{tabular}{ll}
     \hline              
    Weight $i$ & Multiplicity $A_i$ \\ \hline \hline 
    $0$ &  $1$ \\ 

    $q-6$ & $(q-1)^2(q^4-20q^3+155q^2-580q+1044)/720$ \\ 

    $q-5$ & $(q-1)^2(11q^3-129q^2+556q-1044)/120$ \\ 

    $q-4$ & $(q-1)^2(q^4-10q^3+105q^2-520q+1044)/48$ \\ 

    $q-3$ & $(q-1)^2(2q^4+11q^3-89q^2+460q-1044)/36$ \\ 

    $q-2$ & $(q-1)^2(9q^4+16q^3+111q^2-340q+1044)/48$ \\ 

    $q-1$ & $(q-1)(44q^5+71q^4+20q^3+445q^2-1024q+1044)/120$ \\ 

    $q$ & $(q-1)(265q^5+129q^4+115q^3-165q^2+700q-324)/720$ \\ \hline
\end{tabular}}
\end{table}

\begin{remark}\label{rem.weight.Heng} $\quad$ 
    \begin{enumerate}
        \item [\rm 1)] Note that our infinite families of $[q,5,q-5]_q$ NMDS codes and $[q,6,q-6]_q$ NMDS codes always exist for any $q=p^m>8$, 
        while the infinite family of $[q,5,q-5]_q$ NMDS codes constructed in \cite[Theorem 31]{HW2023} and two infinite families of 
        $[q,6,q-6]_q$ NMDS codes constructed in \cite[Theorems 38 and 42]{HW2023} only exist for $q=2^m$. 
        Hence, our NMDS codes presented in Theorems \ref{th.5 dimensional NMDS codes} and \ref{th.6 dimensional NMDS codes} have more flexible parameters. 

        \item [\rm 2)] Comparing Table \ref{tab:1} with \cite[Theorem 31]{HW2023} and Table \ref{tab:4} with \cite[Theorems 38 and 42]{HW2023}, 
        we can also conclude that our NMDS codes are not equivalent to those constructed by Heng $et~al.$ \cite{HW2023} even for 
        $q=2^m$ because of different weight distributions.   
    \end{enumerate}
\end{remark}

\subsection{Self-orthogonal properties of $\C_k(\SSS,{\bf v},\infty)$}  

Now we characterize the self-orthogonal properties  of $\C_k(\SSS,{\bf v},\infty)$. 

%In other words, we will present some conditions for $\C_k(\SSS,{\bf v},\infty)$ being self-orthogonal or having $(k-1)$-dimensional hull. 

\begin{theorem}\label{th.so criterion}
    Let $\C_k(\SSS,{\bf v},\infty)$ be the $q$-ary linear code generated by $G_k$ given in Equation (\ref{eq.generator matrix}). 
    Then $\C_k(\SSS,{\bf v},\infty)$ is self-orthogonal if and only if there exists a polynomial $g(x)=\sum_{s=0}^{n-2k}g_sx^s\in \F_q[x]_{n-2k+1}$ 
    such that the following two conditions hold: 
    \begin{itemize}
        \item [\rm 1)] $v_i^2=u_ig(a_i)$ for any $1\leq i\leq n$; 
        \item [\rm 2)] $g_{n-2k-1}+g_{n-2k}\sum_{i=1}^{n}a_i=-1$.
    \end{itemize}
    %\begin{align}
    %   v_i^2=u_ih(a_i)~{\rm and}~h_{n-2k-1}+h_{n-2k}\sum_{i=1}^{n}a_i=-1 
    %\end{align}
    %for any $1\leq i\leq n$. 
\end{theorem}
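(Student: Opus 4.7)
The plan is to invoke Lemma \ref{lem.so condition}, which reduces self-orthogonality to the matrix identity $G_k G_k^T = 0$, and then translate this identity into a polynomial condition via Lagrange interpolation together with Lemma \ref{lem.PRA}. Reading off the entries of $G_k G_k^T$ from Equation (\ref{eq.generator matrix}), the $(s,t)$-entry equals $\sum_{i=1}^n v_i^2 a_i^{s+t-2}$ for $1 \le s, t \le k-1$, equals $\sum_{i=1}^n v_i^2 a_i^{s+k-1}$ for $s \le k-1$ and $t = k$ (and symmetrically), and equals $\sum_{i=1}^n v_i^2 a_i^{2k} + 1$ for $s = t = k$. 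As $s, t$ range, the exponents on which vanishing is imposed fill $\{0, \ldots, 2k-4\} \cup \{k, \ldots, 2k-2\}$, which collapses to $\{0, 1, \ldots, 2k-2\}$ since $k \ge 3$. Hence self-orthogonality of $\C_k(\SSS,{\bf v},\infty)$ is equivalent to
\begin{equation*}
    \sum_{i=1}^n v_i^2 a_i^{\ell} = 0 ~(0 \le \ell \le 2k-2) \text{~~and~~} \sum_{i=1}^n v_i^2 a_i^{2k} = -1. \ (\star)
\end{equation*}

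Next I would introduce the unique Lagrange interpolating polynomial $P(x) = \sum_{s=0}^{n-1} p_s x^s$ of degree at most $n-1$ satisfying $P(a_i) = v_i^2 u_i^{-1}$ for each $i$, so that $v_i^2 = u_i P(a_i)$. For every $\ell \ge 0$,
\begin{equation*}
    \sum_{i=1}^n v_i^2 a_i^{\ell} = \sum_{s=0}^{n-1} p_s \sigma_{s+\ell}, ~~\text{where}~ \sigma_m := \sum_{i=1}^n u_i a_i^m.
\end{equation*}
Lemma \ref{lem.PRA} supplies $\sigma_m = 0$ for $0 \le m \le n-2$, $\sigma_{n-1} = 1$, and $\sigma_n = \sum_i a_i$. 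The key step is an induction on $\ell = 0, 1, \ldots, 2k-2$ showing that the vanishing part of $(\star)$ forces $p_{n-1}, p_{n-2}, \ldots, p_{n-2k+1}$ to die one at a time: at step $\ell$, the hypothesis $p_{n-1} = \cdots = p_{n-\ell} = 0$ annihilates every summand with $s + \ell \ge n$ (crucially sidestepping the tail $\sigma_m$ for $m \ge n+1$, which Lemma \ref{lem.PRA} does not provide), reducing the identity to $p_{n-1-\ell}\, \sigma_{n-1} = 0$. Consequently the vanishing half of $(\star)$ is equivalent to $\deg(P) \le n-2k$, and writing $g := P$ gives condition 1); conversely, given any such $g$, the same identity yields those vanishing conditions.

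Finally, under $\deg(g) \le n-2k$, Lemma \ref{lem.PRA} produces
\begin{equation*}
    \sum_{i=1}^n v_i^2 a_i^{2k} = \sum_{s=0}^{n-2k} g_s\, \sigma_{s+2k} = g_{n-2k-1} + g_{n-2k} \sum_{i=1}^n a_i,
\end{equation*}
since among $0 \le s \le n-2k$ only $s = n-2k-1$ and $s = n-2k$ yield nonzero contributions (equal to $1$ and $\sum_i a_i$, respectively). The remaining equality $\sum_i v_i^2 a_i^{2k} = -1$ in $(\star)$ therefore becomes precisely condition 2), completing the equivalence. The main technical obstacle is the careful execution of the inductive argument in the middle paragraph, since one must ensure that each step invokes only the known values $\sigma_0, \ldots, \sigma_n$ from Lemma \ref{lem.PRA} and not the unknown tail $\sigma_{n+1}, \sigma_{n+2}, \ldots$; the induction hypothesis is exactly what is needed to do so.
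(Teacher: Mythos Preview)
Your proof is correct and reaches the same reduction $(\star)$ as the paper, but the central step---showing that the vanishing half of $(\star)$ forces $v_i^2 = u_i g(a_i)$ with $\deg g \le n-2k$---is handled differently. The paper treats the conditions $\sum_i x_i a_i^{j} = 0$ ($0 \le j \le 2k-2$) as a homogeneous linear system in $(x_1,\ldots,x_n)$, observes that its Vandermonde coefficient matrix has rank $2k-1$, and checks via Lemma~\ref{lem.PRA} that the $n-2k+1$ vectors $(a_1^{s}u_1,\ldots,a_n^{s}u_n)$, $0\le s\le n-2k$, are linearly independent solutions and hence a basis of the solution space; the expression $v_i^{2}=u_i g(a_i)$ then drops out immediately. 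Your route instead fixes the Lagrange interpolant $P$ of $v_i^{2}/u_i$ upfront and peels off its top coefficients $p_{n-1},\ldots,p_{n-2k+1}$ one at a time by induction, each step arranged so that only the known values $\sigma_0,\ldots,\sigma_n$ from Lemma~\ref{lem.PRA} are ever invoked. The paper's argument is shorter and more conceptual (a single dimension count replaces the whole induction), whereas yours is more constructive and sidesteps the verification that the proposed solutions are linearly independent; both ultimately rest on Lemma~\ref{lem.PRA} and derive condition 2) by the same substitution.
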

\begin{proof}
    According to Lemma \ref{lem.so condition}, $\C_k(\SSS,{\bf v},\infty)$ is self-orthogonal if and only if $G_kG_k^T$ is a $k\times k$ zero matrix, 
    which is equivalent to 
    \begin{align}\label{eq.SO condition}
        \small
        \left\{\begin{array}{ll}
            \sum_{i=1}^{n} v_i^2 a_i^j = 0, ~0\leq j\leq 2k-2, &{\rm (\ref*{eq.SO condition}.1)} \\ \\ 
            \sum_{i=1}^{n} v_i^2 a_i^{2k} + 1 = 0. & {\rm (\ref*{eq.SO condition}.2)}
        \end{array} \right.
    \end{align}
    
    %``$(\Rightarrow)$'' %Denote by $x_i=v_i^2$ for each $1\leq i\leq n$.  
    Assume that $\C_k(\SSS,{\bf v},\infty)$ is self-orthogonal and consider the system of equation 
    \begin{align}\label{eq.system equation}
        \sum_{i=1}^{n}a_i^jx_i=0~{\rm for}~0\leq j\leq 2k-2. 
    \end{align}
    %$\sum_{i=1}^{n}a^jx_i=0$, where $0\leq j\leq 2k-2$. 
    With Lemma \ref{lem.PRA}, we conclude that %$(u_1,u_2,\ldots,u_n)$, $(a_1u_1,a_2u_2,\ldots,a_nu_n)$, $(a_1^2u_1,a_2^2u_2,\ldots,a_n^2u_n)$, $\ldots$, $(a_1^{n-2k}u_1,a_2^{n-2k}u_2,\ldots,a_n^{n-2k}u_n)$ 
    $
     (u_1,\ldots,u_n),\\(a_1u_1,\ldots,a_nu_n), \ldots,(a_1^{n-2k}u_1,\ldots,a_n^{n-2k}u_n)
    $
    are $n-2k+1$ linearly independent solutions of the system of equation (\ref{eq.system equation}). 
    On the other hand, it is easy to see that the rank of the coefficient matrix of the system of equation (\ref{eq.system equation}) is $2k-1$. 
    Since $(n-2k+1)+(2k-1)=n$, then all these $n-2k+1$ solutions form a basis of the solution space of the system of equation (\ref{eq.system equation}). 
    It then turns out from Equation (\ref{eq.SO condition}.1) that there are $g_0, \ldots, g_{n-2k}\in \F_q$ such that 
    \begin{align}\label{eq.vi}
        v_i^2=\sum_{s=0}^{n-2k}g_sa_i^su_i~{\rm for~each}~1\leq i\leq n. 
    \end{align}
    Denote by $g(x)=\sum_{s=0}^{n-2k}g_sx^s$. Then Equation (\ref{eq.vi}) implies that $v_i^2=u_ig(a_i)$ for each $1\leq i\leq n$. 
    Moreover, it follows from substituting Equation (\ref{eq.vi})  to Equation (\ref{eq.SO condition}.2) and Lemma \ref{lem.PRA} that 
    \begin{align*}
        &\sum_{i=1}^n \left(\sum_{s=0}^{n-2k}g_sa_i^su_i\right) a_i^{2k}+1 \\
    =&\sum_{s=0}^{n-2k} g_s \left( \sum_{i=1}^{n}a_i^{s+2k}u_i\right)+1 \\
    =&g_{n-2k-1}+g_{n-2k}\sum_{i=1}^{n}a_i+1 \\
    =&  0,
    \end{align*}
    %$\sum_{i=1}^n \left(\sum_{s=0}^{n-2k}g_sa_i^su_i\right) a_i^{2k}+1
    %=\sum_{s=0}^{n-2k} g_s \left( \sum_{i=1}^{n}a_i^{s+2k}u_i\right)+1
    %=g_{n-2k-1}+g_{n-2k}\sum_{i=1}^{n}a_i+1
    %=  0,$ 
    %$$\sum_{i=1}^n (\sum_{s=0}^{n-2k}h_sa_i^su_i) a_i^{2k}+1 = \sum_{s=0}^{n-2k} g_s \sum_{i=1}^{n}a_i^{s+2k}u_i+1=g_{n-2k-1}+g_{n-2k}\sum_{i=1}^{n}a_i+1=0,$$ 
    which deduces that $g_{n-2k-1}+g_{n-2k}\sum_{i=1}^{n}a_i=-1$. This completes the proof of sufficiency. 

    Conversely, if there exists a polynomial $g(x)=\sum_{s=0}^{n-2k}g_sx^s\in \F_q[x]_{n-2k+1}$ such that the conditions 1) and 2) hold, 
    then it is easy to verify that Equation (\ref{eq.SO condition}) holds. 
    Hence, $\C_k(\SSS,{\bf v},\infty)$ is self-orthogonal. This completes the proof of necessity. 
    
    Combining these two aspects, we complete the proof.  
\end{proof}

%Following Theorem \ref{th.so criterion}, we further study the (almost) self-dual properties of $\C_k(\SSS,{\bf v},\infty)$. 
%By Theorem \ref{th.so criterion}, we conclude that there are no self-dual codes in $\C_k(\SSS,{\bf v},\infty)$ as follows. 

\begin{corollary}\label{coro.no SD codes}
    Let $n$ be odd and $\C_\frac{n+1}{2}(\SSS,{\bf v},\infty)$ be the $[n+1,\frac{n+1}{2}]_q$ linear code generated by $G_\frac{n+1}{2}$ given in Equation (\ref{eq.generator matrix}). 
    Then $\C_\frac{n+1}{2}(\SSS,{\bf v},\infty)$ can not be self-dual for any $\SSS\subseteq \F_q$ and ${\bf v}\in (\F_q^*)^n$. 
\end{corollary}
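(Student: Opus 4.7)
The plan is to reduce the question to the system of bilinear equations given by $G_{\frac{n+1}{2}} G_{\frac{n+1}{2}}^T = 0$ and then invoke the invertibility of the Vandermonde matrix on the pairwise distinct elements $a_1, \ldots, a_n$.

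First, any self-dual code is self-orthogonal, so it suffices to show that $\C_{\frac{n+1}{2}}(\SSS,{\bf v},\infty)$ cannot be self-orthogonal. Set $k = \frac{n+1}{2}$. By Lemma \ref{lem.so condition}, self-orthogonality is equivalent to $G_k G_k^T = 0$. A direct computation of the entries of this $k \times k$ product (exactly as in the opening step of the proof of Theorem \ref{th.so criterion}) shows that this matrix identity forces, among other things, the $n$ equations
$$\sum_{i=1}^{n} v_i^2\, a_i^{\,j} = 0, \qquad 0 \leq j \leq 2k - 2 = n - 1.$$

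The crucial observation is that, viewed as a linear system in the $n$ unknowns $v_1^2, \ldots, v_n^2$, these $n$ equations have coefficient matrix
$$\begin{pmatrix} 1 & 1 & \cdots & 1 \\ a_1 & a_2 & \cdots & a_n \\ \vdots & \vdots & \ddots & \vdots \\ a_1^{n-1} & a_2^{n-1} & \cdots & a_n^{n-1} \end{pmatrix},$$
which is an $n \times n$ Vandermonde matrix on pairwise distinct nodes and is therefore invertible. The only solution is $v_1^2 = \cdots = v_n^2 = 0$, contradicting the assumption ${\bf v} \in (\F_q^*)^n$. Hence $\C_{\frac{n+1}{2}}(\SSS,{\bf v},\infty)$ is never self-orthogonal, and \emph{a fortiori} never self-dual.

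The one subtle point to watch is that Theorem \ref{th.so criterion} is implicitly stated in the regime $n - 2k \geq 0$, so that the polynomial $g(x) \in \F_q[x]_{n-2k+1}$ is well-defined; with $k = \frac{n+1}{2}$ this hypothesis fails and the theorem cannot be invoked verbatim. I would instead return to the matrix identity $G_k G_k^T = 0$ that is the starting point of its proof, which is what the argument above does. The remaining work only uses the classical non-vanishing of the Vandermonde determinant on distinct nodes, so no serious obstacle remains beyond this bookkeeping.
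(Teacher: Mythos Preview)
Your proof is correct and follows essentially the same route as the paper: both reduce to the vanishing of $\sum_{i=1}^n v_i^2 a_i^{\,j}$ for $0\le j\le n-1$ and conclude $v_i^2=0$ via the nonsingularity of the $n\times n$ Vandermonde matrix. The paper compresses this into a one-line appeal to Theorem~\ref{th.so criterion} with $n-2k=-1<0$, whereas you rightly observe that the theorem's statement does not literally cover this degenerate range and instead re-derive the relevant system directly from $G_kG_k^T=0$; this extra care is well placed.
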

\begin{proof}
    The desired result follows straightforward from taking $k=\frac{n+1}{2}$ in Theorem \ref{th.so criterion} and the fact that $n-2k=-1<0$.  
\end{proof}

%\begin{remark}
%    Although Han and Zhang constructed many classes of self-dual codes in \cite{HZ2023}, 
%    Corollary \ref{coro.no SD codes} implies that there are no self-dual codes in $\C_k(\SSS,{\bf v},\infty)$. 
%    Recall that $\C_k(\SSS,{\bf v},\infty)$ is obtained by adding a column vector of the form $(0,0,\ldots,0,1)$ to 
%    a generator matrix of the linear code constructed in \cite{HZ2023}. 
%    Therefore, these facts lead to a clear distinction between $\C_k(\SSS,{\bf v},\infty)$ and GRS codes.
%\end{remark}

\begin{corollary}\label{coro.ASD codes}
    Let $n$ be even and $\C_\frac{n}{2}(\SSS,{\bf v},\infty)$ be the $[n+1,\frac{n}{2}]_q$ linear code generated by $G_\frac{n}{2}$ given in Equation (\ref{eq.generator matrix}). 
    Then $\C_\frac{n}{2}(\SSS,{\bf v},\infty)$ is almost self-dual if and only if there is a $\lambda\in \F_q^*$ such that 
    $v_i^2=\lambda u_i$ and $\sum_{i=1}^{n}a_i=-\lambda^{-1}$. 
\end{corollary}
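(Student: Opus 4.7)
The plan is to specialize Theorem \ref{th.so criterion} to the case $k=\frac{n}{2}$. I would first observe that $\C_\frac{n}{2}(\SSS,{\bf v},\infty)$ has length $n+1$ (odd) and dimension $\frac{n}{2}=\frac{(n+1)-1}{2}$ by Theorem \ref{th.length and dimension}, so by definition, being almost self-dual amounts to being self-orthogonal. Thus the whole claim reduces to translating the self-orthogonality criterion of Theorem \ref{th.so criterion} into the simpler form stated in the corollary.

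With $k=\frac{n}{2}$, we have $n-2k=0$, so the polynomial $g(x)\in \F_q[x]_{n-2k+1}=\F_q[x]_1$ supplied by Theorem \ref{th.so criterion} must be a constant $g(x)=g_0$. Condition 1) of that theorem then reads $v_i^2=u_i g_0$ for all $1\leq i\leq n$; since $v_i\in \F_q^*$ forces $g_0\in \F_q^*$, setting $\lambda:=g_0$ immediately gives $v_i^2=\lambda u_i$ with $\lambda\in \F_q^*$, matching the first half of the target conclusion.

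The main obstacle, as I expect it, is that condition 2) of Theorem \ref{th.so criterion} formally involves $g_{n-2k-1}=g_{-1}$, a coefficient that does not exist in the degenerate case $n=2k$. The cleanest way around this is to go back one step in the proof of Theorem \ref{th.so criterion} and use the identity actually established there, namely $\sum_{s=0}^{n-2k} g_s \sum_{i=1}^{n} u_i a_i^{s+2k}+1=0$. With only the term $s=0$ surviving and Lemma \ref{lem.PRA} giving $\sum_{i=1}^n u_i a_i^n=\sum_{i=1}^n a_i$, this collapses to $\lambda\sum_{i=1}^n a_i+1=0$, i.e., $\sum_{i=1}^n a_i=-\lambda^{-1}$. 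For the converse direction I would simply plug $g(x)\equiv \lambda$ into the underlying equations and verify, using Lemma \ref{lem.PRA} again, that both halves of the self-orthogonality criterion hold, whence Theorem \ref{th.so criterion} finishes the argument.
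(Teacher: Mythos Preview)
Your proposal is correct and follows the same approach as the paper, which simply states that the result follows from taking $k=\frac{n}{2}$ in Theorem~\ref{th.so criterion}. Your extra care in resolving the degenerate coefficient $g_{n-2k-1}=g_{-1}$ by returning to the identity $\sum_{s=0}^{n-2k} g_s \sum_{i=1}^{n} u_i a_i^{s+2k}+1=0$ is a sound way to make the specialization rigorous (equivalently, one may adopt the convention $g_{-1}=0$).
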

\begin{proof}
    The desired result follows straightforward from taking $k=\frac{n}{2}$ in Theorem \ref{th.so criterion}.  
\end{proof}

According to Corollary \ref{coro.ASD codes}, we give two explicit constructions of almost self-dual codes. 

\begin{theorem}\label{th.ASD}
    Let $q=2^m\geq 8$ and $3\leq k\leq  \frac{q}{2}$. 
    Suppose that $\SSS=\{a_1,\ldots,a_{2k}\}$ is a $2k$-subset of $\F_q$. 
    If $\sum_{i=1}^{2k}a_i\neq 0$, we let $v_i=(\lambda u_i)^{\frac{q}{2}}$ 
    for any $1\leq i\leq 2k$ and $\lambda =(\sum_{i=1}^{2k}a_i)^{-1}$, 
    then $\C_{k}(\SSS, {\bf v},\infty)$ is a $[2k+1,k]_q$ non-GRS  MDS or NMDS almost self-dual code. 
\end{theorem}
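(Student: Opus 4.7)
The plan is to apply Corollary \ref{coro.ASD codes} directly, exploiting the characteristic-$2$ setting to choose the multiplier vector ${\bf v}$ explicitly. By Corollary \ref{coro.ASD codes}, $\C_{k}(\SSS,{\bf v},\infty)$ is almost self-dual if and only if there exists some $\lambda\in\F_q^*$ such that $v_i^2=\lambda u_i$ for all $i$ and $\sum_{i=1}^{2k}a_i=-\lambda^{-1}$. So the first step is to verify these two identities for the prescribed $\lambda$ and $v_i$.

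For the sum condition, note that since $q=2^m$ we have $-1=1$ in $\F_q$, so the requirement $\sum_{i=1}^{2k}a_i=-\lambda^{-1}$ collapses to $\sum_{i=1}^{2k}a_i=\lambda^{-1}$, which holds by the very definition $\lambda=\bigl(\sum_{i=1}^{2k}a_i\bigr)^{-1}$ (well-defined by the assumption $\sum_{i=1}^{2k}a_i\neq 0$). For the coordinate condition, the plan is to use the characteristic-$2$ square-root trick: the Frobenius $x\mapsto x^2$ is a bijection on $\F_q=\F_{2^m}$ with inverse $x\mapsto x^{q/2}$, since $x^q=x$ holds identically on $\F_q$. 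Therefore $v_i=(\lambda u_i)^{q/2}$ satisfies $v_i^2=(\lambda u_i)^q=\lambda u_i$, as required; and $v_i\in\F_q^*$ because $\lambda,u_i\in\F_q^*$.

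Having verified both conditions, Corollary \ref{coro.ASD codes} yields that $\C_{k}(\SSS,{\bf v},\infty)$ is a $[2k+1,k]_q$ almost self-dual code. To upgrade this conclusion to the full statement, I would then invoke Theorem \ref{th.must be MDS or NMDS} to conclude the code is either MDS or NMDS, and Theorem \ref{th.non-GRS} to rule out equivalence with a GRS code. Combining these three facts gives the claimed description as a non-GRS MDS or NMDS almost self-dual code.

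I do not expect any serious obstacle: the argument is essentially a direct verification, with the only nontrivial ingredient being the observation that $\lambda u_i$ always has a unique square root $(\lambda u_i)^{q/2}$ in $\F_q$ when $p=2$. One should also quickly check that the parameter constraints in Definition \ref{def.codes} are met, namely $3\leq k\leq n-2\leq q-2$ with $n=2k$, which reduces to $k\geq 3$ and $2k\leq q$, both built into the hypotheses $3\leq k\leq q/2$.
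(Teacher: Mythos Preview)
Your proposal is correct and follows essentially the same approach as the paper's own proof: verify that $v_i^2=\lambda u_i$ via $(\lambda u_i)^q=\lambda u_i$, note that the sum condition $\sum_i a_i=-\lambda^{-1}$ reduces to the defining equation of $\lambda$ in characteristic $2$, and then invoke Corollary~\ref{coro.ASD codes} together with Theorems~\ref{th.must be MDS or NMDS} and~\ref{th.non-GRS}. Your additional remark checking the constraint $3\leq k\leq n-2\leq q-2$ with $n=2k$ is a welcome bit of extra care.
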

\begin{proof}
    Since $q$ is even and $\sum_{i=1}^{2k}a_i\neq 0$, then $\lambda =(\sum_{i=1}^{2k}a_i)^{-1}\neq 0$ 
    and $v_i=(\lambda u_i)^{\frac{q}{2}}\neq 0$ can be taken for any $1\leq i\leq 2k$. 
    Moreover, it follows from $\lambda u_i\in \F_q$ that $v_i^2=((\lambda u_i)^{\frac{q}{2}})^2=\lambda u_i$ for each $1\leq i\leq 2k$.  
    By Theorems \ref{th.must be MDS or NMDS} and \ref{th.non-GRS} as well as Corollary \ref{coro.ASD codes}, 
    we immediately get the desired result.   
\end{proof}

\begin{theorem}\label{th.ASD2}
    Let $r$ and $m$ be two positive integers satisfying $r\mid m$ and $2\mid \frac{m}{r}$. 
    Let $q=p^m$ and $3\leq k\leq  \frac{p^r}{2}$. 
    Suppose that $\SSS=\{a_1,\ldots,a_{2k}\}$ is a $2k$-subset of $\F_{p^r}$. 
    If $\sum_{i=1}^{2k}a_i\neq 0$, we let $v_i=(\lambda u_i)^{\frac{1}{2}}$ 
    for any $1\leq i\leq 2k$ and $\lambda =(-\sum_{i=1}^{2k}a_i)^{-1}$, 
    then   $\C_{k}(\SSS, {\bf v},\infty)$ is a $[2k+1,k]_q$ non-GRS  MDS or NMDS almost self-dual code. 
\end{theorem}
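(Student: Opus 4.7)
The proof plan is to verify the hypotheses of Corollary \ref{coro.ASD codes} with $n=2k$, from which the almost self-dual conclusion is immediate; the non-GRS MDS/NMDS assertion then follows from Theorems \ref{th.must be MDS or NMDS} and \ref{th.non-GRS}. This mirrors Theorem \ref{th.ASD}, and the only new ingredient is to certify that, under the hypothesis $2\mid \frac{m}{r}$, the element $\lambda u_i \in \F_{p^r}^*$ actually possesses a square root inside $\F_q=\F_{p^m}$, so that $v_i=(\lambda u_i)^{1/2}$ is a well-defined nonzero element of $\F_q$ with $v_i^2=\lambda u_i$.

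First I would carry out the basic setup. Since each $a_j\in \F_{p^r}$, the quantity $u_i=\prod_{j\ne i}(a_i-a_j)^{-1}$ also lies in $\F_{p^r}^*$; the assumption $\sum_{i=1}^{2k}a_i\ne 0$ makes $\lambda=(-\sum_{i=1}^{2k}a_i)^{-1}$ a nonzero element of $\F_{p^r}$, and $\sum_{i=1}^{2k}a_i=-\lambda^{-1}$ is then automatic. Hence $\lambda u_i\in \F_{p^r}^*$ for each $i$. The main (and only slightly subtle) step is the square-root check. For $p=2$ this is automatic because Frobenius $x\mapsto x^2$ is a bijection on $\F_q$. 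For $p$ odd I would use the Euler criterion $x^{(p^m-1)/2}=1$; writing $t=m/r$ and factoring
\begin{equation*}
p^m-1=(p^r-1)\sum_{i=0}^{t-1}p^{ir},
\end{equation*}
the hypothesis that $t$ is even together with $p$ odd makes each $p^{ir}$ odd, so the sum consists of an even number of odd terms and is therefore even. Thus $2(p^r-1)\mid (p^m-1)$, and any $x\in \F_{p^r}^*$ satisfies $x^{(p^m-1)/2}=(x^{p^r-1})^{(p^m-1)/(2(p^r-1))}=1$, confirming that $x$ is a square in $\F_q$.

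Applying this to $x=\lambda u_i$ gives the desired $v_i\in \F_q^*$ with $v_i^2=\lambda u_i$, which is exactly the condition required by Corollary \ref{coro.ASD codes}. Consequently $\C_k(\SSS,\bv,\infty)$ is a $[2k+1,k]_q$ almost self-dual code; Theorem \ref{th.must be MDS or NMDS} then certifies it is MDS or NMDS, and Theorem \ref{th.non-GRS} (applicable because $3\le k$ and $2k\le p^r\le q$, so Definition \ref{def.codes} holds with $n=2k$) certifies it is non-GRS. The main obstacle really is the odd-characteristic square-root verification described above; once that is settled, the rest is bookkeeping inherited from the earlier corollaries and theorems.
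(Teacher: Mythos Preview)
Your proposal is correct and follows essentially the same route as the paper: verify that $\lambda u_i\in\F_{p^r}^*$ admits a square root in $\F_{p^m}$ under the hypothesis $2\mid m/r$, then invoke Corollary~\ref{coro.ASD codes} together with Theorems~\ref{th.must be MDS or NMDS} and~\ref{th.non-GRS}. The only difference is that you spell out the Euler-criterion computation (the factorization of $p^m-1$ and the parity count) where the paper simply asserts that every element of $\F_{p^r}$ is a square in $\F_{p^m}$; your added detail is sound and the verification that Definition~\ref{def.codes} applies with $n=2k$ is accurate.
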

\begin{proof}
    Note that $\sum_{i=1}^{2k}a_i\in \F_{p^r}^*$ and $u_i\in \F_{p^r}^*$ for any $1\leq i\leq 2k$.  
    Note also that each element in $\F_{p^r}$ is a square element in $\F_{p^m}$ since $r\mid m$ and $2\mid \frac{m}{r}$.  
    Then $\lambda =(-\sum_{i=1}^{2k}a_i)^{-1}\neq 0$ and $v_i=(\lambda u_i)^{\frac{1}{2}}\in \F_{p^m}^*$, 
    $i.e.$, $v_i^2= \lambda u_i$ can be taken for each $1\leq i\leq 2k$.  
    Then the desired result again follows straightforward from Theorems \ref{th.must be MDS or NMDS} and \ref{th.non-GRS} 
    as well as Corollary \ref{coro.ASD codes}.
\end{proof}

By now, we have answered {\bf Problem \ref{prob2}}.
In the following, we give two explicit examples of almost self-dual codes from {Theorems \ref{th.ASD} and \ref{th.ASD2}}.

\begin{example}\label{exam.asd1}
    Let $\omega$ be a primitive element of $\F_8$ satisfying $\w^3+\w+1=0$ and 
    $\SSS=\{1,\omega,\omega^2,\omega^3,\omega^4,\omega^6\}\subseteq \F_8.$  
    Note that $\sum_{a\in\SSS}a=\omega^5$ and $\SSS$ contains a $3$-zero-sum subset such as  $\{1, \omega, \omega^3 \}$.  
    Take $v_i=(\lambda u_i)^4$ for each $1\leq i\leq 6$, where $\lambda=(\sum_{a\in\SSS}a)^{-1}=\omega^2$, 
    and hence, we get that ${\bf v}=(\omega^3,\omega,1,1,\omega^3,\omega)\subseteq (\F_8^*)^6.$
    By Theorems \ref{th.MDS condition} 2) and \ref{th.ASD}, $\C_{k}(\SSS, {\bf v},\infty)$ is a $[7,3,4]_8$ NMDS almost self-dual code. 
    Verified by Magma \cite{Magma}, these results are true. 
\end{example}

\begin{example}\label{exam.asd2}
    Let $p=5$, $r=2$ and $m=4$. 
    Let $\omega$ be a primitive element of $\F_{5^4}$ satisfying $\w^4+4\w^2+4\w+2=0$ and 
    $\SSS=\{1,\omega^{26},\omega^{52},\omega^{78},\omega^{104},\omega^{130}, \omega^{182}, 2, \omega^{494}, \omega^{598}\}\subseteq \F_{5^2}\subseteq \F_{5^4}.$
    Note that $\sum_{a\in\SSS}a= \omega^{338}$ and $\SSS$ contains a $5$-zero-sum subset such as  $\{1, \omega^{52}, \omega^{78}, \omega^{494}, \omega^{598}\}$.  
    Take $v_i=(\lambda u_i)^\frac{1}{2}$ for each $1\leq i\leq 10$, where $\lambda=(-\sum_{a\in\SSS}a)^{-1}=\omega^{598}$, 
    and hence, we get that ${\bf v}=(\omega^{247},\omega^{260},\omega^{208},\omega^{247},\omega^{143},\omega^{39},\omega^{195},\omega^{26},\omega^{390},\omega^{65})\\ \subseteq (\F_{5^4}^*)^{10}.$
    By Theorems \ref{th.MDS condition} 2) and \ref{th.ASD2}, $\C_{k}(\SSS, {\bf v},\infty)$ is an $[11,5,6]_{5^4}$ NMDS almost self-dual code. 
    Verified by Magma \cite{Magma}, these results are true. 
\end{example}

\section{Concluding remarks}\label{sec.concluding remarks}
{
Motivated by {\bf Problems \ref{prob1}} and {\bf \ref{prob2}}, we {introduced} and {studied} a special class of linear codes $\C_k(\SSS,{\bf v},\infty)$ 
involving their parameters, non-GRS properties, weight distributions, and self-orthogonal properties in this paper. 
Specifically speaking, we {showed} that such codes must be non-GRS MDS codes or NMDS codes, 
and {gave} some related sufficient and necessary conditions. 
We also completely determine their weight distributions 
with the help of the solutions to subset sum problems. 
In particular, by comparing weight distributions, we {confirmed} that this class of linear codes is also not equivalent to 
the family of (+)-extended TGRS codes studied in \cite{ZL2024} and {derived} several new infinite families of $5$- and $6$-dimensional 
NMDS codes that are not equivalent to those constructed in \cite{HW2023}.   
Finally, based on the characterization of the self-orthogonal properties, 
we {showed} that there are no self-dual codes in this class of linear codes and 
{presented} two explicit constructions of almost self-dual codes. 
All these results gave answers to {\bf Problems \ref{prob1}} and {\bf \ref{prob2}}.

{Note also that the authors in \cite{HW2023,LZM2024} proved that minimum weight codewords of the linear codes constructed in \cite{HZ2023} 
can support $2$-designs in some cases. 
According to our Magma \cite{Magma} experiments, the minimum weight codewords of $\C_k(\SSS,{\bf v},\infty)$ 
can not support nontrivial $t$-designs in general, where $t\geq 2$. 
It would be interesting to further test if they can support nontrivial $t$-designs in some special cases.} 
On the other hand, as shown in Remark \ref{rem.1} and Table \ref{tab:newadd1}, it is verified by Magma \cite{Magma} that 
new NMDS codes are generally not derivable from $\C_k(\SSS,{\bf v},\infty)_\mu$ for $1\leq \mu \leq k-1$. 
Furthermore, in many cases, $\C_k(\SSS,{\bf v},\infty)_2$ is equivalent to $\C_k(\SSS,{\bf v},\infty)$.
Therefore, it would also be interesting to theoretically determine when they are equivalent 
or employ other methods to construct more families of linear codes that are either non-GRS  MDS codes or NMDS codes. 
}

\vfill

\end{document}